\pdfoutput=1
\documentclass[11pt]{article}
\usepackage[margin=1in]{geometry}
\usepackage{amsmath,amssymb,amsthm,graphicx,hyperref}
\DeclareTextSymbolDefault{\textbullet}{OMS}
\DeclareTextSymbol{\textbullet}{OMS}{15}
\graphicspath{{figures/}{./}}
\hypersetup{colorlinks=true,linkcolor=blue,citecolor=blue,urlcolor=blue}

\newtheorem{theorem}{Theorem}[section]
\newtheorem{proposition}[theorem]{Proposition}
\newtheorem{lemma}[theorem]{Lemma}
\newtheorem{corollary}[theorem]{Corollary}
\theoremstyle{definition}
\newtheorem{definition}[theorem]{Definition}
\newtheorem{remark}[theorem]{Remark}
\newcommand{\TV}{\operatorname{TV}}
\newcommand{\Tr}{\operatorname{Tr}}
\newcommand{\qmi}{I(A:B)}
\newcommand{\cmiq}{I(Q_A:Q_B)}
\newcommand{\cmir}{I(R_A:R_B)}
\newcommand{\lmax}{\lambda_{\max}}

\title{Separable Counterexamples to Complementary Quantum\\
Correlations, and Why Random Search Missed Them}
\author{Qian Lilong\\
\normalsize\texttt{qian.lilong@u.nus.edu}}
\date{19 August 2026}

\begin{document}
\maketitle

\begin{abstract}
The complementary quantum correlations (CQC) relation bounds the sum of two
classical mutual informations, obtained from local mutually unbiased
measurements, by the quantum mutual information of the premeasurement state.
We refute it. Separable rank-two counterexamples exist in every local
dimension pair \(m\times n\) with \(m,n\ge3\), with closed-form excess at
least \(1/(8m^2n^2)\) nats, and in every qubit--qudit pair \(2\times n\) with
\(n\ge3\) except \(n=3,5\). We then settle \(2\times5\), also in the negative,
by a different mechanism. At a product state the gap and its first variation
vanish along every marginal-preserving direction, so its sign is fixed by an
explicit second variation: the Kubo--Mori form of the state minus the two
classical \(\chi^2\) forms of its complementary dephasings. Positivity of that
form is therefore necessary for CQC, and at \(2\times5\) it fails, yielding a
full-rank PPT witness with excess \(2.1698308091\times10^{-4}\) nats, verified
at sixty digits by two independent implementations. Only \(2\times3\) remains
open. We also give a universal state-dependent corrected inequality,
incomparable with CQC already at \(2\times2\), and quantify why random search
missed these states --- the violating set is a sliver against the low-rank
boundary, and the \(2\times5\) violation is second order, hence invisible to
any search whose tolerance exceeds the square of its step.
\end{abstract}

\tableofcontents

\section{Introduction}

Let \(\rho_{AB}\) be a density operator on
\(\mathbb C^{d_A}\otimes\mathbb C^{d_B}\). Let \(Q_A,R_A\) be mutually
unbiased orthonormal bases on \(A\), and likewise \(Q_B,R_B\) on \(B\).
The complementary quantum correlations relation proposed by Schneeloch,
Broadbent and Howell~\cite{Schneeloch2014} is
\begin{equation}
 \cmiq+\cmir\leq\qmi_\rho. \tag{CQC}\label{eq:cqc}
\end{equation}
The two terms on the left are Shannon mutual informations of the respective
outcome tables; the right side is quantum mutual information. Reference
\cite[Eq.~(1)]{Schneeloch2014} quantifies over arbitrary finite bipartite
systems and arbitrary pairs of local rank-one projective observables that are
mutually unbiased on each side. It proves selected special cases and presents
the displayed relation as a proposed general bound. Accordingly, a single
state together with one legal pair of mutually unbiased bases and a strict
reverse sign refutes that universal proposal.

Two contemporaneous preprints by Wang, Wang and Chen appeared while this
work was being completed. The first~\cite{Wang2026cx}, posted before this
manuscript, refutes CQC for every \emph{equal} local dimension \(d\ge3\) by a
rank-two classical ``two-branch null test''; its precise relation to our
families --- different construction, larger violation where the scopes
overlap, no coverage of unequal dimensions --- is stated in
Remark~\ref{rem:wangcompare}. The second~\cite{Wang2026} proves that CQC is
universally \emph{true} at \(2\times2\). Neither paper addresses unequal
local dimensions, and \cite{Wang2026} states explicitly that its
binary-curvature technique has no analogue beyond qubits. The present paper
is positioned on what those two results leave open: counterexamples in every
unequal pair except \(2\times3\) and \(2\times5\)
(Sections~\ref{sec:universal}--\ref{sec:classification}), a refutation of
\(2\times5\) itself (Section~\ref{sec:anchor}), a quantitative corrected
inequality where \cite{Wang2026cx} offers a qualitative diagnosis
(Section~\ref{sec:corrected}), and the analysis of the last remaining cell.

The \(2\times5\) refutation is worth isolating, because its mechanism is
unrelated to every construction that precedes it. All earlier counterexamples,
ours and those of \cite{Wang2026cx}, ultimately rest on two branch states that
both of Bob's measurements distinguish perfectly, and
Theorem~\ref{thm:fullspark} shows that mechanism is unavailable at \(d=3,5\).
Section~\ref{sec:anchor} instead works at a \emph{product} state, where the
gap and its first variation both vanish identically along
marginal-preserving directions, and reads off the sign of the second
variation. This turns the conjecture into a question about a quadratic form
that can be written down in closed form and diagonalised, and at
\(2\times5\) that form has a negative eigenvalue.

The conjecture carried a substantial body of numerical support. The original
work reported no violation among \(10^7\) random \(3\otimes3\) samples, and a
2026 sufficient-condition study continued to treat the general statement as
open and reported no contradiction in further random tests in dimensions three
and five~\cite{Iqbal2026}. Section~\ref{sec:sampling} argues quantitatively that these searches had
negligible power to find the counterexamples, and estimates by how much they
fell short.

Throughout, all logarithms are natural. Changing base multiplies every
information quantity by the same positive constant and affects no sign.

\subsection*{What is proved, and what is not}

We are explicit about evidence class, because this paper mixes closed-form
proofs, rigorous interval computation, and exploratory numerics.

\begin{itemize}
\item \textbf{Closed form.} Sections~\ref{sec:universal}--\ref{sec:classification}
 (counterexample families; statements and mechanisms in the main text,
 complete proofs in Appendix~\ref{app:proofs}), the envelope of
 Section~\ref{sec:envelope}, the identities of Section~\ref{sec:gate}, the
 variance form of Section~\ref{sec:variance}, the degenerate stratum of
 Section~\ref{sec:stratum}, the saturator of Section~\ref{sec:n5}, the second
 variation identity and the necessary condition of
 Section~\ref{sec:anchor}, and the corrected bound of
 Section~\ref{sec:corrected}.
\item \textbf{Rigorous interval arithmetic.} The \(2\times3\) closure of
 Section~\ref{sec:closure}, and the two base-point evaluations used by the
 qubit--qudit families in Appendix~\ref{app:proofs}.
\item \textbf{Certified high-precision evaluation.} The \(2\times5\)
 counterexample of Theorem~\ref{thm:n5cx}: the witness is an explicit
 density matrix, and the three mutual informations and the spectrum are
 evaluated at sixty decimal digits by two implementations sharing no code,
 agreeing to eleven digits with an independent double-precision path. The
 statement being verified is the sign of one number, so this is a
 computer-assisted proof of a strict inequality, not a sampling claim.
\item \textbf{Numerical, stated as such.} That \(8/3\) at \(d=3\) and
 \((14+2\sqrt5)/5\) at \(d=5\) are the \emph{maxima} of \(S\) is a candidate
 in both cases, not a theorem; their \emph{attainment} is proved in both, in
 Sections~\ref{sec:stratum} and~\ref{sec:n5}. The
 adversarial-search value \(0.2253481\) of Section~\ref{sec:n5} is likewise
 numerical, as is the report in Section~\ref{sec:anchor} that the analogous
 search at \(2\times3\) found no violation. The sampling study of
 Section~\ref{sec:sampling} is an empirical measurement, and the two scale
 estimates of Sections~\ref{sec:volume}--\ref{sec:align} are explicitly
 heuristics.
\end{itemize}

Nothing in this paper proves CQC in any dimension. The \(2\times3\) result of
Section~\ref{sec:closure} closes one six-real-parameter family inside the
thirty-five-parameter state space of \(\mathbb C^2\otimes\mathbb C^3\); that
scope is stated precisely in Remark~\ref{rem:scope}. Conversely, the witness
of Section~\ref{sec:anchor} lies outside that family --- it is full rank ---
so it refutes \(2\times5\) globally while leaving the two-ray statement at
\(d=5\), and the machinery of Sections~\ref{sec:envelope}--\ref{sec:n5} built
around it, untouched as statements about their own objects.

\section{Definitions}\label{sec:defs}

For a joint table \(p=(p_{ij})\) write
\begin{equation}
 I(p)=\sum_{i,j:\,p_{ij}>0}p_{ij}\log\frac{p_{ij}}{p_{i\bullet}p_{\bullet j}},
 \qquad 0\log0:=0.
\end{equation}
Two orthonormal bases \(\{|q_j\rangle\}\) and \(\{|r_k\rangle\}\) of
\(\mathbb C^d\) are mutually unbiased when \(|\langle q_j|r_k\rangle|^2=1/d\)
for all \(j,k\). Basis vectors are columns, measurement operators are
\(|b_j\rangle\!\langle b_j|\), and the tensor ordering is \(A\otimes B\), so
\begin{equation}
 p^Q_{ij}=\Tr\!\left[(|q_i\rangle\!\langle q_i|\otimes
 |q_j\rangle\!\langle q_j|)\,\rho_{AB}\right].
\end{equation}
Throughout, \(H(\cdot)\) is the Shannon entropy of a probability vector,
\(H_2(u)=H(u,1-u)\), and \(H(Q_A)\) denotes the entropy of the outcome
distribution of measuring \(Q_A\); \(S(\cdot)\) is von Neumann entropy. For a
symmetric matrix, \(e_k\) denotes the \(k\)-th elementary symmetric function
of its eigenvalues. The \emph{CQC gap} of a state and a measurement choice is
\begin{equation}
 G(\rho)=\qmi_\rho-\cmiq-\cmir,
\end{equation}
so \(G<0\) is a strict counterexample. We write
\(F_d\) for the canonical Fourier basis,
\((F_d)_{jk}=\omega^{jk}/\sqrt d\) with \(\omega=e^{2\pi i/d}\), and
\(h(t)=H\!\left(\frac{1+t}2,\frac{1-t}2\right)\).

\section{A counterexample in every pair of dimensions at least three}
\label{sec:universal}

\begin{theorem}[Universal rank-two family]\label{thm:universal}
For every pair of integers \(m,n\ge3\), CQC is false on
\(\mathbb C^m\otimes\mathbb C^n\). Let
\begin{equation}
 |u_d\rangle=\frac{|1\rangle+|2\rangle}{\sqrt2}\quad(d\ge3),
 \qquad
 \rho_{mn}=\tfrac12|00\rangle\langle00|
 +\tfrac12|u_m,u_n\rangle\langle u_m,u_n|,
 \label{eq:universalstate}
\end{equation}
and take the computational bases as \(Q_A,Q_B\) and the canonical Fourier
bases as \(R_A,R_B\). Then \(\rho_{mn}\) is separable, has rank two, and
\begin{equation}
 \cmiq+\cmir-\qmi_{\rho_{mn}}\ \geq\ \frac1{8m^2n^2}\ >\ 0 .
 \label{eq:universalexcess}
\end{equation}
\end{theorem}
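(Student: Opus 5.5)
The plan is to compute the three mutual informations of \(\rho_{mn}\) essentially exactly, and to see that \(\qmi\) and \(\cmiq\) cancel, so that the excess equals \(\cmir\). That term is then bounded below by Pinsker's inequality. Separability and rank two are immediate: \(\rho_{mn}\) is an equal mixture of the two product vectors \(|00\rangle\) and \(|u_m u_n\rangle\). These vectors are orthogonal, because \(\langle0|u_d\rangle=0\).

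\textbf{Step 1 (quantum side).} Both branches are locally orthogonal, so \(\rho_A=\tfrac12|0\rangle\langle0|+\tfrac12|u_m\rangle\langle u_m|\) and \(\rho_B\) each have spectrum \((\tfrac12,\tfrac12)\). The joint state also has spectrum \((\tfrac12,\tfrac12)\). Hence \(S(A)=S(B)=S(AB)=\log2\), and so \(\qmi_{\rho_{mn}}=\log2\).

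\textbf{Step 2 (computational bases).} The outcome table has \(p^Q_{00}=\tfrac12\) and \(p^Q_{ab}=\tfrac18\) for \(a,b\in\{1,2\}\); all other entries are zero. Both marginals are \((\tfrac12,\tfrac14,\tfrac14)\), each with entropy \(\tfrac32\log2\). The joint entropy is \(2\log2\). This gives \(\cmiq=\log2\), which already saturates the right side. So the excess is exactly \(\cmir\), and it remains to bound this quantity below.

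\textbf{Step 3 (Fourier bases, the only real work).} Measuring \(F_d\) on \(|0\rangle\) gives the uniform distribution. On \(|u_d\rangle\) it gives
\begin{equation*}
 a_j=\frac{|\omega^j+\omega^{2j}|^2}{2d}=\frac{1+\alpha_j}{d},\qquad \alpha_j=\cos(2\pi j/d).
\end{equation*}
For \(d\ge3\) these cosines satisfy \(\sum_j\alpha_j=0\) and \(\sum_j\alpha_j^2=d/2\). Write \(\beta_k\) for the analogous cosines on \(B\). Then
\begin{equation*}
 p^R_{jk}=\frac{1}{mn}\Bigl[1+\tfrac12(\alpha_j+\beta_k+\alpha_j\beta_k)\Bigr],\qquad p^R_{j\bullet}p^R_{\bullet k}=\frac{1}{mn}\bigl(1+\tfrac{\alpha_j}2\bigr)\bigl(1+\tfrac{\beta_k}2\bigr).
\end{equation*}
Their difference is exactly \(\alpha_j\beta_k/(4mn)\). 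Since \(\cmir=D(p^R\,\|\,p^R_{A}\otimes p^R_{B})\), Pinsker gives \(\cmir\ge2\,\TV^2\). The total variation factorises:
\begin{equation*}
 \TV=\frac{1}{8mn}\Bigl(\sum_j|\alpha_j|\Bigr)\Bigl(\sum_k|\beta_k|\Bigr).
\end{equation*}
Using \(|\alpha_j|\le1\), we get \(\sum_j|\alpha_j|\ge\sum_j\alpha_j^2=m/2\), and likewise \(\sum_k|\beta_k|\ge n/2\). Hence \(\TV\ge1/32\) and \(\cmir\ge1/512\), which exceeds \(1/(8m^2n^2)\) for all \(m,n\ge3\).

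The main obstacle is only this last step, and mainly the bookkeeping in it. One must check that the cosine identities hold for every \(d\ge3\), which fails at \(d=2\) and is exactly why the qubit case is excluded. One must also verify the exact cancellation in the difference \(p^R_{jk}-p^R_{j\bullet}p^R_{\bullet k}\). If a sharper constant were wanted, one would instead use a second-order \(\chi^2\) expansion of the divergence. For the stated bound, the crude Pinsker estimate already suffices with room to spare.
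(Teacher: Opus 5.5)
Your proof is correct, and its skeleton matches the paper's. \(\qmi=\cmiq=\log2\) cancel, so the excess is exactly \(\cmir\), which Pinsker bounds below. The paper obtains \(\cmiq=\log2\) from a latent-label conditional-independence argument rather than your explicit table, but the two are interchangeable.

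The real difference is the total-variation estimate.

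\textbf{The paper's route.} It uses only the cell \((0,0)\), where \(P^R_{00}-P^R_{0\bullet}P^R_{\bullet0}=1/(4mn)\). Since the deviation matrix sums to zero, its positive mass equals \(\TV\). This gives \(\TV\ge1/(4mn)\) and hence \(\cmir\ge1/(8m^2n^2)\). It needs only the Fourier probabilities at outcome \(0\) and no trigonometric sums, but the bound decays with dimension.

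\textbf{Your route.} You compute the entire deviation and recognise it as the rank-one table \(\alpha_j\beta_k/(4mn)\), so \(\TV\) factorises. Combining \(\sum_j\cos^2(2\pi j/d)=d/2\) with \(|\alpha_j|\ge\alpha_j^2\) then gives the dimension-free bound \(\cmir\ge1/512\). This dominates \(1/(8m^2n^2)\) for all \(m,n\ge3\). It also proves that the excess stays bounded away from zero as \(m,n\to\infty\), which is the plateau asserted in Remark~\ref{rem:wangcompare} and which the paper's estimate does not show. At \(3\times3\), using the exact value \(\sum_j|\alpha_j|=2\) in your factorisation gives \(1/162\approx0.00617\) nats, against the true excess \(0.00619\).

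One small correction to your closing remark: qubits are excluded because \(|u_d\rangle\) requires the basis vector \(|2\rangle\), so the state is not defined at \(d=2\). The failure of the cosine identity there plays no role.
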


Theorems~\ref{thm:universal}, \ref{thm:qubittail}
and~\ref{thm:strengthened} are proved in Appendix~\ref{app:proofs}. Here we
give the statements and the mechanisms.

The mechanism is transparent. The two product rays are locally orthogonal on
both sides, so the nonzero spectra of \(\rho_{AB},\rho_A,\rho_B\) are all
\((1/2,1/2)\) and \(\qmi=\log2\). The computational measurement reveals the
latent binary label exactly on each side, so \(\cmiq=\log2\) already saturates
the bound; the Fourier measurement then contributes a strictly positive
remnant of the same label, and the excess is that remnant.

\begin{corollary}[Full-rank separable violations]\label{cor:fullrank}
For every fixed \(m,n\ge3\) there are full-rank separable counterexamples on
\(\mathbb C^m\otimes\mathbb C^n\), obtained by mixing \(\rho_{mn}\) with
sufficiently little maximally mixed noise.
\end{corollary}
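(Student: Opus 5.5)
The plan is to prove the corollary by a continuity argument around the strict counterexample of Theorem~\ref{thm:universal}. For \(\varepsilon\in[0,1]\) define
\begin{equation*}
 \rho_\varepsilon=(1-\varepsilon)\rho_{mn}+\varepsilon\,\frac{\mathbb 1}{mn},
\end{equation*}
and keep the same measurement choice: computational bases for \(Q_A,Q_B\) and Fourier bases for \(R_A,R_B\). The proof then has three steps.

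\textbf{Separability and full rank.} Separability of \(\rho_\varepsilon\) follows because \(\rho_{mn}\) is a mixture of two product states, \(\mathbb 1/mn\) is a product state, and the separable set is convex. For \(\varepsilon>0\), \(\rho_\varepsilon\) dominates \(\varepsilon\,\mathbb 1/mn\), so all its eigenvalues are at least \(\varepsilon/(mn)>0\). It is therefore full rank.

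\textbf{Continuity of the gap.} Consider the map \(\varepsilon\mapsto G(\rho_\varepsilon)\) on \([0,1]\). The quantum mutual information can be written as \(\qmi=S(\rho_A)+S(\rho_B)-S(\rho_{AB})\). Each term is a continuous function of the state: von Neumann entropy is continuous on finite-dimensional density matrices, for instance by Fannes' inequality or by continuity of eigenvalues together with continuity of \(x\log x\) at \(0\). Partial traces are linear. The classical tables \(p^Q\) and \(p^R\) are affine in \(\rho\), and \(I(p)=H(p_{\cdot\bullet})+H(p_{\bullet\cdot})-H(p)\) is continuous on the probability simplex. Hence \(G(\rho_\varepsilon)\) is continuous in \(\varepsilon\), including at \(\varepsilon=0\), where \(\rho_{mn}\) has rank two and some table entries vanish. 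Those zero entries are the only point needing care. The continuity of \(x\log x\) at \(0\) is exactly what handles them, and it is the one step I would check explicitly.

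\textbf{Conclusion.} Theorem~\ref{thm:universal} gives \(G(\rho_0)\le -1/(8m^2n^2)<0\). By continuity there is an \(\varepsilon_0>0\) with \(G(\rho_\varepsilon)<0\) for all \(0<\varepsilon<\varepsilon_0\). Any such \(\rho_\varepsilon\) is a full-rank separable counterexample, which proves the corollary.

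If a quantitative threshold is wanted, I would bound the change in \(G\) using the Fannes--Audenaert inequality for each von Neumann term and its classical analogue for each Shannon term. Each satisfies \(|S(\rho)-S(\sigma)|\le T\log(D-1)+h_2(T)\), where \(T\) is the trace distance and \(h_2\) is the binary entropy; here \(T\le\varepsilon\). Choosing \(\varepsilon\) so that the sum of these bounds is below \(1/(8m^2n^2)\) gives an explicit admissible \(\varepsilon_0\). This refinement is optional and not needed for the existence statement.
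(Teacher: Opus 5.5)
Your proof is correct and follows the paper's own argument: mix \(\rho_{mn}\) with \(\varepsilon I_{mn}/(mn)\), observe that the mixture is separable and full rank, and use continuity of the three mutual informations in \(\varepsilon\) together with the strict excess at \(\varepsilon=0\) from Theorem~\ref{thm:universal}. Your added details make explicit what the paper's two-line proof leaves implicit: the eigenvalue floor \(\varepsilon/(mn)\), the role of continuity of \(x\log x\) at zero, and the optional Fannes--Audenaert threshold.
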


\begin{remark}[Relation to the independent refutation]\label{rem:wangcompare}
Wang, Wang and Chen~\cite{Wang2026cx} independently refuted CQC for every
equal local dimension \(d\ge3\), posting first. Their state
\(\tfrac12|u,v\rangle\langle u,v|+\tfrac12|v,u\rangle\langle v,u|\), with
\(|u\rangle\) the uniform Fourier vector and
\(|v\rangle=(|0\rangle-|1\rangle)/\sqrt2\), swaps one pair of local vectors
across the parties, whereas \eqref{eq:universalstate} repeats the same branch
on both sides; the two constructions are different points of the same
rank-two separable classical--classical class, found by different routes.
Where the scopes overlap their violation is larger --- \(0.0484\) versus
\(0.0089\) bits at \(3\times3\), both computed exactly --- and grows with
\(d\), while the excess of \eqref{eq:universalexcess} plateaus. What
\cite{Wang2026cx} does not contain is unequal dimensions:
Theorem~\ref{thm:universal} covers every pair \(m,n\ge3\), and
Theorems~\ref{thm:qubittail} and~\ref{thm:strengthened} the qubit--qudit
tail, neither of which follows from the symmetric construction.
\end{remark}

\section{Qubit--qudit families}\label{sec:qubittail}

Theorem~\ref{thm:universal} needs three computational basis vectors on each
side and therefore says nothing when one party is a qubit. A different
two-ray mixture covers that case.

\begin{theorem}[All-MUB qubit--qudit tail]\label{thm:qubittail}
Let \(n\ge164\), let \((Q_B,R_B)\) be any ordered pair of rank-one projective
mutually unbiased bases of \(\mathbb C^n\), and choose \(q_0\in Q_B\),
\(r_0\in R_B\). On Alice use the Pauli \(Z,X\) bases and set
\begin{equation}
 \omega_n=\tfrac12|0,q_0\rangle\langle0,q_0|
 +\tfrac12|+,r_0\rangle\langle+,r_0|,
 \qquad|+\rangle=\frac{|0\rangle+|1\rangle}{\sqrt2}.
 \label{eq:omega}
\end{equation}
Then \(\omega_n\) is separable of rank two and strictly violates CQC. The
same holds on \(\mathbb C^n\otimes\mathbb C^2\) by exchanging the parties.
\end{theorem}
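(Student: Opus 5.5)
The plan is to compute all three mutual informations in closed form as functions of \(n\) alone. Then show that the classical sum strictly exceeds an upper bound on \(\qmi\) that does not depend on \(n\). The key observation is that every quantity depends on the measurement data only through overlaps. These are \(|\langle0|+\rangle|^2=1/2\) on Alice's side, \(|\langle q_j|r_0\rangle|^2=1/n\) for all \(j\), and \(|\langle r_k|q_0\rangle|^2=1/n\) for all \(k\), which holds by mutual unbiasedness. So the argument works uniformly over every legal pair \((Q_B,R_B)\) and every choice of \(q_0,r_0\). Separability and rank two are immediate, since \(\omega_n\) is a mixture of two linearly independent product rays. The party-exchanged statement follows by the same computation with the roles of the tensor factors swapped.

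\textbf{Quantum side.} For a mixture \(\tfrac12|x\rangle\langle x|+\tfrac12|y\rangle\langle y|\) with \(|\langle x|y\rangle|=c\), the nonzero eigenvalues are \((1\pm c)/2\). Applying this with \(c_A=1/\sqrt2\), \(c_B=1/\sqrt n\) and \(c_{AB}=c_Ac_B=1/\sqrt{2n}\) gives
\[
\qmi_{\omega_n}=h(1/\sqrt2)+h(1/\sqrt n)-h(1/\sqrt{2n}).
\]
Since \(h\) is decreasing on \([0,1]\) and \(1/\sqrt{2n}<1/\sqrt n\), this yields the clean, \(n\)-independent bound \(\qmi\le h(1/\sqrt2)\approx0.4163\) nats.

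\textbf{Classical side.} In the \(Z\otimes Q_B\) table, the first branch is deterministic at \((0,q_0)\). The second branch is uniform on \(\{0,1\}\times Q_B\), so the table is
\[
p(0,q_0)=\tfrac12+\tfrac1{4n},\qquad p(a,j)=\tfrac1{4n}\ \text{otherwise}.
\]
The \(X\otimes R_B\) table has exactly the same form with the branches interchanged, so \(\cmir=\cmiq\). I will write \(I(p)=H(A)-H(A\mid B)\) explicitly. Here \(H(A)=H\bigl(\tfrac34,\tfrac14\bigr)\) exactly. The conditional entropy is
\[
H(A\mid B)=p_B(q_0)\,H_2\!\Bigl(\tfrac{1/(4n)}{1/2+1/(2n)}\Bigr)+\tfrac{n-1}{2n}\log2 .
\]
As \(n\to\infty\) each classical term tends to \(H(\tfrac34,\tfrac14)-\tfrac12\log2\approx0.2158\), so the sum tends to about \(0.4315>0.4163\). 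The asymptotic margin is therefore about \(0.015\) nats.

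\textbf{Finite \(n\), the main obstacle.} The remaining task is to make this quantitative for all \(n\ge164\). The correction to \(H(A\mid B)\) is a binary entropy at an argument of order \(1/n\), which behaves like \((\log n)/n\). This decays slowly, and that is exactly what forces a threshold as large as \(164\).

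I would first try to show that the single-table mutual information \(f(n)=\cmiq\) is increasing in \(n\) for \(n\ge164\). The natural route is to differentiate the closed form in a continuous variable \(t=1/n\) and check the sign of the derivative on \((0,1/164]\). If that sign check is messy, the fallback is a crude explicit bound \(H_2(\epsilon)\le\epsilon\log(e/\epsilon)\) with \(\epsilon\le 1/(2n)\). This gives
\[
2f(n)\ge 2H\bigl(\tfrac34,\tfrac14\bigr)-\log2-C\,\frac{\log n}{n},
\]
which one compares against \(h(1/\sqrt2)\) for \(n\ge164\).

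Either way, the final step is a single base-point evaluation at \(n=164\) confirming \(2f(164)>h(1/\sqrt2)\). Consistent with the paper's evidence classification, I would do this in rigorous interval arithmetic, and monotonicity then extends it to all larger \(n\).
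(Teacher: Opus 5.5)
Your architecture matches the paper's: closed forms for $\qmi$ and the two identical table informations, monotonicity in $x=1/n$, and one base-point evaluation at $n=164$. Your table, conditional entropy and expression for $\qmi$ are all correct.

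\textbf{The gap.} It is the step where you replace $\qmi$ by the $n$-independent bound $h(1/\sqrt2)$. The discarded term is $h(1/\sqrt{2n})-h(1/\sqrt n)=\frac1{4n}+\frac1{16n^2}+\cdots$, about $1.53\times10^{-3}$ nats at $n=164$. The true violation there is only about $5.65\times10^{-5}$ nats, roughly twenty-seven times smaller. Concretely, at $n=164$ one has $\cmiq+\cmir\approx0.415025$ but $h(1/\sqrt2)\approx0.416496$. So the base-point inequality $2f(164)>h(1/\sqrt2)$ you propose to certify is false by about $1.47\times10^{-3}$, and interval arithmetic will return the wrong sign.

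Expanding, $2f(n)=\frac32\log\frac43-\frac{1+\log(n/2)}{2n}+O(n^{-2})$. Your route therefore needs $\frac{1+\log(n/2)}{2n}$ to fall below $\frac32\log\frac43-h(1/\sqrt2)\approx0.015028$, and by exact evaluation that first happens at $n=184$. It is the $-\frac1{4n}$ on the quantum side that lowers the threshold to $164$. So your remark that the classical $(\log n)/n$ decay ``forces'' $164$ is not right: by itself it forces $184$.

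\textbf{Repair.} Keep $M(x)=h(1/\sqrt2)+h(\sqrt x)-h(\sqrt{x/2})$ exactly and show that the whole gap $G=M-2T$ is increasing in $x$ on $(0,1/164]$. Monotonicity of $f$ alone does not give this, because $M$ also increases with $n$; you need to compare rates. The paper does it as follows. First, $T'(x)=\frac14\log\frac{x(2+x)}{(1+x)^2}$, so $-2T'(x)\ge\log2$ on that interval. Second, $h'(t)=-\operatorname{arctanh}t$ together with $\operatorname{arctanh}t/t<1/(1-t^2)$ gives $M'(x)>-\frac1{2(1-x)}\ge-\frac{82}{163}$. Hence $G'>0$, and the single evaluation $G_{164}<0$ finishes the proof. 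Alternatively, keep your argument for $n\ge184$ and check the twenty cases $164\le n\le183$ individually with the exact $\qmi$.
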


The uniformity in the MUB pair is what forces \(n\ge164\). To refute CQC in a
\emph{fixed} dimension it suffices to exhibit one legal MUB pair, and that
lowers the boundary sharply.

\section{A near-complete dimension classification}\label{sec:classification}

\begin{theorem}[Strengthened qubit--qudit counterexamples]
\label{thm:strengthened}
For every integer \(n\ge3\) except \(n=3,5\) there is an ordered rank-one
projective MUB pair on \(\mathbb C^n\) and a rank-two separable state on
\(\mathbb C^2\otimes\mathbb C^n\) that strictly violates CQC; the same holds
on \(\mathbb C^n\otimes\mathbb C^2\). The proof combines four explicit
constructions: a computational/Fourier two-spoke family for every \(n\ge12\);
a subgroup/coset family in every composite \(n\ge4\); and exact
complex-Hadamard witnesses at \(n=7\) (Petrescu~\cite{Petrescu1997}) and
\(n=11\) (Nicoar\u a's \(N_{11}\); see the
catalogue~\cite{TadejZyczkowski2006}).
\end{theorem}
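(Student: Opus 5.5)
The plan is to reduce everything to one template state and a single scalar inequality, and then supply, dimension by dimension, a Bob-side pair of vectors satisfying that inequality. On Alice I use the Pauli \(Z\) and \(X\) bases. On Bob I use two unit vectors \(|b_0\rangle,|b_1\rangle\in\mathbb C^n\) and set
\begin{equation*}
 \sigma=\tfrac12|0,b_0\rangle\langle0,b_0|+\tfrac12|+,b_1\rangle\langle+,b_1|,
\end{equation*}
which is separable of rank two, as in \eqref{eq:omega}.

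\textbf{Step 1: the exactly distinguishable case.} First compute the gap when \(\langle b_0|b_1\rangle=0\) and both of Bob's bases distinguish the branches perfectly. This means the outcome distributions of \(b_0\) and \(b_1\) in \(Q_B\) have disjoint supports, and likewise in \(R_B\).
\begin{itemize}
\item Then \(\rho_{AB}\) and \(\rho_B\) both have spectrum \((\tfrac12,\tfrac12)\), while \(\rho_A=\tfrac12(|0\rangle\langle0|+|+\rangle\langle+|)\). Hence \(\qmi=h(1/\sqrt2)\approx0.4165\).
\item Bob's outcome reveals the branch label in either basis. Each classical mutual information therefore equals Alice's information about the label: \(H_2(3/4)-\tfrac12\log2\approx0.2158\), the same for \(Z\) and for \(X\).
\end{itemize}
The excess is \(2H_2(3/4)-\log2-h(1/\sqrt2)\approx0.015>0\). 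This is a closed-form constant, and the template argument applies verbatim in either party order.

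In the general case the two branch distributions in each Bob basis only have small overlap. I would then write each classical mutual information as a function of those two distributions and bound its loss against the disjoint case by their overlap, for example through \(\TV\) and the continuity of \(I\). I would write \(\qmi\) exactly through the \(2\times2\) Gram matrix, which depends only on \(|\langle b_0|b_1\rangle|\). Violation then follows whenever the overlaps and \(|\langle b_0|b_1\rangle|\) are below explicit thresholds.

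\textbf{Step 2: composite \(n\).} Take \(Q_B\) computational and \(R_B=F_n\). Write \(n=ab\) with \(a,b\ge2\), let \(H\le\mathbb Z_n\) be the subgroup of order \(a\), and let \(H^\perp\) be its annihilator in the dual group.
\begin{itemize}
\item Take \(b_0=1_H/\sqrt a\).
\item Take \(b_1=\psi\cdot1_{g+H}/\sqrt a\), with \(g\notin H\) and \(\psi\) a character not in \(H^\perp\).
\end{itemize}
The computational supports \(H\) and \(g+H\) are disjoint. The Fourier transforms are supported on the distinct cosets \(H^\perp\) and \(\psi H^\perp\), which are also disjoint. So Step 1 applies exactly and gives a closed-form violation for every composite \(n\ge4\).

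\textbf{Step 3: \(n=7\) and \(n=11\).} Here I would use the cited complex Hadamard matrices: Petrescu's at \(n=7\) and \(N_{11}\) at \(n=11\). Their columns define \(R_B\) against the computational \(Q_B\). I would search their vanishing-sum structure for a pair \(b_0,b_1\), such as suitably phased indicator vectors of complementary supports, whose images in the Hadamard basis have disjoint or nearly disjoint supports. The entries are exact algebraic numbers, so the resulting gap can be evaluated exactly or by interval arithmetic.

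\textbf{Step 4: primes \(n\ge13\), the main obstacle.} For large prime \(n\) Step 2 is unavailable, and Theorem~\ref{thm:fullspark} suggests exact distinguishability in both Fourier-related bases is impossible. Everything therefore rests on the perturbative bound of Step 1. My first attempt is a two-spoke choice:
\begin{itemize}
\item \(b_0\) a normalized indicator of an interval of length \(\approx\sqrt n\);
\item \(b_1\) a modulated indicator of a disjoint interval, with modulation chosen so that the Fourier images concentrate on far-apart frequency windows.
\end{itemize}
I would then bound the Fejér-kernel leakage by an explicit \(O(n^{-1/2})\) or \(\log n/\sqrt n\) term and check that it falls below the \(0.015\) margin for all \(n\ge12\). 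This tail estimate, together with one or two rigorous interval evaluations at the smallest base dimensions, is where I expect the real work. This is also why \(n=3,5\) are excluded: the leakage cannot be made small enough there.
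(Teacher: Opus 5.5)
Your Steps 1 and 2 are correct and match the paper. The constant \(2H_2(3/4)-\log2-h(1/\sqrt2)=\frac32\log\frac43-h(1/\sqrt2)\approx0.01503\) is exactly \(-G_\circ\) of \eqref{eq:app-mechanism-gap}. Your coset pair with \(g=1\) and \(\psi\) the first character is, up to a phase, the paper's \(u\) and \(XZu\).

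Step 3, however, is only a search plan. At \(n=7\) and \(n=11\) the entire content is an explicit pair with \emph{exactly} disjoint supports in both bases. Such a pair exists only because \(P_7\) and \(N_{11A}\), unlike \(F_7\) and \(F_{11}\), are not full spark. The paper exhibits these pairs. For example, at \(n=7\) it takes \(u=(\zeta q_0-q_1)/\sqrt2\) with \(\zeta=e^{i\pi/3}\), and a \(v\) supported on \(\{2,3,5,6\}\); the second-basis supports are then \(\{0,2,3,4,5\}\) and \(\{1,6\}\). ``Nearly disjoint'' would not suffice there, for the same reason Step 4 fails.

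Step 4 is a genuine gap: intervals leak far too much. An interval indicator has a sharp edge, so its Fourier law has a \(1/k^2\) Fej\'er tail. The cross-leakage between the two Fourier laws is therefore \(\Theta(1/L)\), which is \(\Theta(n^{-1/2})\) for \(L\approx\sqrt n\). Generic TV-continuity bounds for \(I\) then add \(\epsilon\log(1/\epsilon)\)-type terms on top. By a rough estimate this beats the \(0.015\) margin only once \(n\) is in the hundreds. No check at ``one or two'' base dimensions covers the primes in between, and at small primes the construction need not violate at all.

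Concretely, for your orthogonal template, Theorem~\ref{thm:reduction} shows that a violation requires \(\sum_k p_kq_k/(p_k+q_k)<0.0434\) for the Fourier laws \(p,q\) of \(b_0,b_1\). At \(n=13\), take the normalized indicators of \(\{0,\dots,5\}\) and of \(\{6,\dots,12\}\), the latter modulated by the sixth character. This pair gives about \(0.065\), so \(G>0\) there.

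The missing idea is a pair whose leakage is polynomially smaller: the paper's two-spoke vectors \(u_n=(q_0-e^{i\pi/n}q_1)/\sqrt2\) and \(v_n=(r_0-e^{-i\pi/n}r_1)/\sqrt2\). Each is \(2\)-sparse in one basis. The phases are tuned so that each puts weight only \(s=x(1-\cos\pi x)\approx\frac{\pi^2}{2}x^3\), with \(x=1/n\), on each point of the other's support. The price is a small overlap \(|\langle u_n|v_n\rangle|=2\sqrt x\sin(\pi x/2)\).

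The tables are then computed in closed form, \(G(x)=M(x)-2T(s(x))\), rather than bounded by continuity. The whole tail follows from \(G'(x)>0\) on \((0,1/12]\) together with the single value \(G(1/12)<-8.8\times10^{-4}\). Even this is tight: at \(n=12\) the two tables lose about \(0.0156\) nats to leakage, more than the entire \(0.0150\) margin. The violation survives only because the overlap lowers \(\qmi\) by about \(|\langle u_n|v_n\rangle|^2/4\approx1.4\times10^{-3}\).
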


\begin{corollary}\label{cor:strengthenedfullrank}
For every fixed \(n\ge3\) with \(n\notin\{3,5\}\), both \(2\times n\) and
\(n\times2\) admit full-rank separable counterexamples.
\end{corollary}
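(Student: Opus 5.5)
The plan is to treat Corollary~\ref{cor:strengthenedfullrank} as a continuity statement layered on Theorem~\ref{thm:strengthened}, in the same way that Corollary~\ref{cor:fullrank} follows from Theorem~\ref{thm:universal}. Fix \(n\ge3\) with \(n\notin\{3,5\}\). Take the ordered MUB pair \((Q_B,R_B)\) and the rank-two separable state \(\sigma\) on \(\mathbb C^2\otimes\mathbb C^n\) provided by Theorem~\ref{thm:strengthened}, with the Pauli \(Z,X\) bases on Alice. By that theorem, \(G(\sigma)<0\). Now set
\[
\sigma_\epsilon=(1-\epsilon)\sigma+\epsilon\,\frac{\mathbb 1}{2n},\qquad 0<\epsilon\le1 .
\]
For \(\epsilon>0\) this state is full rank. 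It is also separable, since it is a convex combination of the separable \(\sigma\) with the maximally mixed state, which is a product state. The measurement bases are kept fixed throughout, so they remain a legal mutually unbiased choice.

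The key step is continuity of \(\epsilon\mapsto G(\sigma_\epsilon)\) at \(\epsilon=0\). The quantum mutual information is \(S(\rho_A)+S(\rho_B)-S(\rho_{AB})\). Von Neumann entropy is continuous on the finite-dimensional state space; this follows from continuity of eigenvalues and of \(x\log x\) on \([0,1]\), or quantitatively from Fannes--Audenaert. Partial traces are linear, so \(\qmi_{\sigma_\epsilon}\) is continuous in \(\epsilon\).

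Each classical table \(p^Q(\sigma_\epsilon)\) and \(p^R(\sigma_\epsilon)\) is affine in \(\epsilon\). Since \(I(p)=H(p_{i\bullet})+H(p_{\bullet j})-H(p)\) and Shannon entropy is continuous on the simplex (again via \(0\log0=0\)), both classical mutual informations are continuous as well. Consequently there is \(\epsilon_0>0\) such that \(G(\sigma_\epsilon)<0\) for all \(0<\epsilon<\epsilon_0\), and any such \(\sigma_\epsilon\) is the required full-rank separable counterexample.

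For \(n\times2\), apply the swap \(\mathbb C^2\otimes\mathbb C^n\to\mathbb C^n\otimes\mathbb C^2\) and exchange the roles of the bases. All three mutual informations are symmetric under exchanging the parties, so \(G\) is invariant. The swapped state is still full rank and separable, so the \(n\times2\) case follows from the \(2\times n\) case; alternatively, one can start from the swapped rank-two state that Theorem~\ref{thm:strengthened} already supplies.

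There is no real obstacle here. The only point needing care is that continuity must hold at the boundary of the state space, where \(\sigma\) has zero eigenvalues and zero table entries. This is handled by continuity of \(x\log x\) at \(0\) under the convention \(0\log0=0\).

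If an explicit \(\epsilon_0\) is wanted, it can be extracted from Fannes--Audenaert bounds: \(\|\sigma_\epsilon-\sigma\|_1\le2\epsilon\), and the measured tables move by at most \(2\epsilon\) in total variation. This bounds \(|G(\sigma_\epsilon)-G(\sigma)|\) by a quantity of order \(\epsilon\log(2n/\epsilon)\), and \(\epsilon_0\) is then any value at which this bound falls below \(|G(\sigma)|\).
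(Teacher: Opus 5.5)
Your proposal is correct and follows the paper's own argument: mix the rank-two witness of Theorem~\ref{thm:strengthened} with a small amount of maximally mixed noise, which keeps the state separable and makes it full rank, then use continuity of the three mutual informations to preserve the strict violation. Your extra details (continuity at the boundary via \(0\log0=0\), the optional Fannes--Audenaert estimate for \(\epsilon_0\), and the party swap for \(n\times2\)) make explicit what the paper's one-line proof leaves implicit.
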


Combining Theorems~\ref{thm:universal} and~\ref{thm:strengthened}: the only
dimension pairs with unequal local dimensions not settled by these two
constructions are \(2\times3\), \(2\times5\) and their transposes. The
isolated \(2\times2\) case is separately reported as universally valid in a
contemporaneous preprint~\cite{Wang2026}; that external result is not used
here. Section~\ref{sec:anchor} removes \(2\times5\) by a different mechanism,
so taking the two external papers together with the theorems above and
Theorem~\ref{thm:n5cx}, the global picture has exactly \emph{one} open cell:
CQC is true at \(2\times2\)~\cite{Wang2026}, false in every pair \(m,n\ge3\)
(Theorem~\ref{thm:universal}; independently~\cite{Wang2026cx} for \(m=n\)),
false in every \(2\times n\) with \(n\notin\{3\}\)
(Theorem~\ref{thm:strengthened} for \(n\notin\{3,5\}\),
Theorem~\ref{thm:n5cx} for \(n=5\)), and open only at \(2\times3\) and its
transpose. Neither external paper addresses the qubit--qudit case.

\begin{remark}\label{rem:separable}
Every counterexample above is separable, so entanglement is not necessary for
CQC failure. The universal family \(\rho_{mn}\) is moreover exactly
classical--classical --- its two branch product rays are orthogonal on both
sides, so it is diagonal in a local product basis --- and the same is true of
the rank-three qutrit witness \eqref{eq:witness}; for those states discord is not
necessary either. The qubit--qudit witnesses \eqref{eq:omega} are separable
but \emph{not} classical--classical: their branch rays overlap on both sides.
This distinction will matter twice below: the classical--classical witnesses
are what make the corrected bound of Section~\ref{sec:corrected} the right
repair, and their alignment with a product basis is one of the reasons random
sampling could not find them (Section~\ref{sec:sampling}).
\end{remark}

\section{The residual boundary: why \texorpdfstring{$d=3,5$}{d=3,5} are different}
\label{sec:residual}

Throughout this section Bob's local dimension is written \(d\); the residual
pairs are \(2\times d\) with \(d\in\{3,5\}\). Every construction above
ultimately relies on one mechanism: two branch states on the qudit that are
\emph{perfectly distinguished by both} of Bob's measurements. That mechanism
is impossible in the two residual dimensions.

\begin{theorem}[Full-spark barrier]\label{thm:fullspark}
Let \(d\in\{3,5\}\). Every complex Hadamard matrix of order \(d\) is monomial
equivalent to \(F_d\)~\cite{Haagerup1996}, so the computational/Fourier pair
is a normal form for Bob's MUB pair. Moreover \(F_d\) is full
spark~\cite{Tao2005}, equivalently every nonzero
\(\psi\in\mathbb C^d\) obeys the support uncertainty relation
\begin{equation}
 |\operatorname{supp}_Q\psi|+|\operatorname{supp}_R\psi|\ \ge\ d+1 .
 \label{eq:spark}
\end{equation}
Consequently no two nonzero branch states, pure or mixed, can be perfectly
distinguished by both the \(Q\) and the \(R\) measurement.
\end{theorem}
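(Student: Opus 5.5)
The argument has three parts: the normal form, the support inequality, and the impossibility of a doubly perfect discrimination. The first two are cited; the third is a short deduction from them.

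\textbf{Normal form.} Let \((Q,R)\) be a MUB pair on \(\mathbb C^d\). Writing \(R\) in \(Q\)-coordinates gives a unitary \(U\) with \(|U_{jk}|^2=1/d\). Hence \(\sqrt d\,U\) is a complex Hadamard matrix. By Haagerup's classification~\cite{Haagerup1996}, for \(d\in\{3,5\}\) there are diagonal unitaries \(D_1,D_2\) and permutations \(P_1,P_2\) with \(\sqrt d\,U=D_1P_1(\sqrt d\,F_d)P_2D_2\).
\begin{itemize}
\item Relabelling and rephasing the vectors of \(Q\) absorbs \(D_1P_1\).
\item The same operations on \(R\) absorb \(P_2D_2\).
\end{itemize}
None of these operations changes either measurement or any outcome statistic, so we may take \(Q\) computational and \(R=F_d\).

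\textbf{Support inequality.} For prime \(d\), Chebotarev's theorem, in Tao's form~\cite{Tao2005}, states that every minor of \(F_d\) is nonzero. I will derive \eqref{eq:spark} from this by contradiction.
\begin{itemize}
\item Suppose \(\psi\neq0\) has \(Q\)-support \(S\) and \(R\)-support \(T\), with \(|S|+|T|\le d\).
\item Choose \(T'\) inside the complement of \(T\) with \(|T'|=|S|\). This is possible because \(d-|T|\ge|S|\).
\item The \(R\)-coefficients of \(\psi\) on \(T'\) vanish. That is, \(F_d^{*}\psi\) restricted to \(T'\) is zero.
\item Since \(\psi\) is supported on \(S\), this says the \(|S|\times|S|\) submatrix of \(F_d^*\) with rows \(T'\) and columns \(S\) annihilates \(\psi|_S\neq0\).
\item That submatrix is a minor of \(F_d^*\), the entrywise conjugate of a minor of \(F_d\), so it is nonzero. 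This contradicts \(\psi|_S\neq0\).
\end{itemize}

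\textbf{Consequence for branch states.} Let \(\sigma_0,\sigma_1\neq0\) be perfectly distinguished by \(Q\). Then their \(Q\)-distributions have disjoint supports, so there is a partition \([d]=S_0\sqcup S_1\) with \(\sigma_b\) supported in \(\mathrm{span}\{|j\rangle:j\in S_b\}\).
\begin{itemize}
\item For positive operators, zero diagonal entries force the corresponding rows and columns to vanish. So every eigenvector of \(\sigma_b\) lies in that span, and this handles the mixed case.
\item Likewise, perfect \(R\)-distinguishability gives a partition \([d]=T_0\sqcup T_1\) with the \(R\)-supports of \(\sigma_b\) inside \(T_b\).
\item Take an eigenvector \(\psi_b\) of \(\sigma_b\) with nonzero eigenvalue. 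Then \(|\operatorname{supp}_Q\psi_b|\le|S_b|\) and \(|\operatorname{supp}_R\psi_b|\le|T_b|\).
\item Applying \eqref{eq:spark} to \(\psi_0\) and \(\psi_1\) and adding gives
\[
2(d+1)\le|S_0|+|S_1|+|T_0|+|T_1|=2d,
\]
a contradiction.
\end{itemize}

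\textbf{Main obstacle.} The mathematics of the first two steps is imported, and the risk lies in stating them precisely. Haagerup's uniqueness holds for \(d\le5\) only up to the equivalence used above, so I must check that this equivalence corresponds exactly to the measurement relabellings. Tao's full-spark statement requires \(d\) prime; this is satisfied here, since 3 and 5 are prime.
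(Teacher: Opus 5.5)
Your proposal is correct and follows the paper's own argument. The paper likewise takes the normal form and the support inequality from Haagerup and Tao, picks nonzero vectors in the ranges of the two branch states, and adds the support inequality for each to reach \(2d+2\le 2d\). Your extra steps --- deriving the support inequality from Chebotarev's nonvanishing minors, and proving the range inclusion from the zero diagonal entries of a positive operator --- fill in details the paper leaves to the citation and to its proof sketch.
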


\begin{proof}[Proof sketch]
If branch states \(\tau_0,\tau_1\) were perfectly distinguished by both, then
choosing nonzero \(u\in\operatorname{ran}\tau_0\),
\(v\in\operatorname{ran}\tau_1\) gives disjoint \(Q\)-supports and disjoint
\(R\)-supports, hence \(s_Q(u)+s_Q(v)\le d\) and \(s_R(u)+s_R(v)\le d\).
Adding \eqref{eq:spark} for \(u\) and for \(v\) gives
\(s_Q(u)+s_R(u)+s_Q(v)+s_R(v)\ge2d+2\), contradicting the upper bound
\(2d\).
\end{proof}

Theorem~\ref{thm:fullspark} removes the known mechanism but proves nothing
about CQC. The rest of this section develops the machinery that does.

\subsection{The equal-prior orthogonal two-ray family}\label{sec:family}

Fix Alice's ordered MUB pair \((Z,X)\), let Bob have dimension \(d\), and let
\(u\perp v\) be orthonormal in \(\mathbb C^d\). Put
\begin{equation}
 \rho=\tfrac12|0,u\rangle\!\langle0,u|+\tfrac12|+,v\rangle\!\langle+,v| .
 \label{eq:family}
\end{equation}
The branch rays are locally orthogonal on Bob and have Alice overlap
\(2^{-1/2}\), so \(\qmi=h(1/\sqrt2)\). Write \(p^Q_j=|\langle q_j,u\rangle|^2\),
\(q^Q_j=|\langle q_j,v\rangle|^2\), likewise in \(R\), and let
\begin{equation}
 \Delta(p,q)=\sum_j\frac{(p_j-q_j)^2}{p_j+q_j},
 \qquad
 S=\Delta(p^Q,q^Q)+\Delta(p^R,q^R)\in[0,4]
 \label{eq:Sdef}
\end{equation}
be the triangular-discrimination sum; alongside it we use the total
variation and Bhattacharyya functionals
\(\TV(p,q)=\frac12\sum_j|p_j-q_j|\) and
\(\mathrm{BC}(p,q)=\sum_j\sqrt{p_jq_j}\), abbreviated
\(\TV_Q=\TV(p^Q,q^Q)\) and likewise \(\TV_R,\mathrm{BC}_Q,\mathrm{BC}_R\).
(In this section the bare letter \(S\) always denotes the
triangular-discrimination sum; von Neumann entropy appears only with a
subsystem argument.)

This family is the natural target, but its relation to the counterexamples
above deserves care. Every construction of
Sections~\ref{sec:universal}--\ref{sec:classification} is a rank-two two-ray
mixture; \eqref{eq:family} is the equal-prior slice of that class in which
Alice's two branch rays are one \(Z\) eigenvector and one \(X\) eigenvector,
and Bob's two branch rays are \emph{orthogonal}. Neither known counterexample
sits in this slice: the universal family of Theorem~\ref{thm:universal} has
orthogonal Alice rays rather than overlap \(2^{-1/2}\), and the qubit--qudit
witness \eqref{eq:omega} has Bob rays with overlap \(d^{-1/2}\neq0\). The
slice is chosen because orthogonality of the Bob rays makes \(\qmi\) exactly
\(h(1/\sqrt2)\), which is what allows the whole left-hand side to be reduced
to the single scalar \(S\); closing it therefore removes a clean and
tractable region, not the region where the known counterexamples live.

The \((Z,Q)\) table is
\(\Pr[Z=0,Q=j]=p^Q_j/2+q^Q_j/4\), \(\Pr[Z=1,Q=j]=q^Q_j/4\), with row marginal
\((3/4,1/4)\). Setting \(m_j=(p^Q_j+q^Q_j)/2\) and
\(x_j=(p^Q_j-q^Q_j)/(p^Q_j+q^Q_j)\) gives the exact reduction
\begin{equation}
 I(Z:Q)=\sum_j m_j\,\delta(x_j),
 \qquad
 \delta(x)=\frac{3+x}4\log\frac{3+x}3+\frac{1-x}4\log(1-x),
 \label{eq:table}
\end{equation}
together with the three moment identities
\begin{equation}
 \sum_j m_j=1,\qquad\sum_j m_jx_j=0,\qquad
 \sum_j m_jx_j^2=\tfrac12\Delta(p^Q,q^Q).
 \label{eq:moments}
\end{equation}
The \((X,R)\) table is the same with the branches exchanged, so
\eqref{eq:table}--\eqref{eq:moments} hold verbatim there.

\subsection{A dimension-free entropy envelope}\label{sec:envelope}

Let \(\ell\) be the chord of \(\delta\) on \([-1,1]\). Since
\(\delta(-1)=\frac12\log\frac43\) and \(\delta(1)=\log\frac43\),
\begin{equation}
 \ell(x)=\frac{\log(4/3)}4(3+x).
\end{equation}

\begin{lemma}[Envelope]\label{lem:envelope}
For every \(x\in[-1,1]\),
\begin{equation}
 \delta(x)\ \le\ \ell(x)-\frac{\log2}4\left(1-x^2\right).
 \label{eq:envelope}
\end{equation}
\end{lemma}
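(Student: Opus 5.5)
The plan is a one-variable convexity argument. I would set
\[
 g(x)=\ell(x)-\frac{\log2}4(1-x^2)-\delta(x),\qquad x\in[-1,1],
\]
and show \(g\ge0\). First I check the boundary data. Since \(\ell\) is the chord of \(\delta\), \(g(\pm1)=0\); with the convention \(0\log0=0\), \(\delta\) is continuous at \(x=1\) and \(\delta(1)=\log\frac43\), so \(g\) is continuous on the closed interval. Next I compute the first derivative,
\[
\delta'(x)=\tfrac14\log\frac{3+x}{3}-\tfrac14\log(1-x).
\]
This gives \(\delta'(-1)=\frac14\log\frac13\). Hence
\[
g'(-1)=\tfrac14\log\tfrac43-\tfrac12\log2-\tfrac14\log\tfrac13=\tfrac14\bigl(\log4-2\log2\bigr)=0 .
\]
So \(g\) has a double zero at the left endpoint. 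That is the key structural fact, and I will confirm this arithmetic first, since the whole argument depends on it.

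Second, I compute the curvature. From the formula for \(\delta'\),
\[
\delta''(x)=\frac1{4(3+x)}+\frac1{4(1-x)}=\frac1{(3+x)(1-x)},
\qquad
g''(x)=\frac{\log2}2-\frac1{(3+x)(1-x)} .
\]
The product \((3+x)(1-x)=4-(1+x)^2\) is strictly decreasing on \([-1,1)\), from \(4\) down to \(0\). Therefore \(g''\) changes sign exactly once, at the point \(c\in(-1,1)\) where \((3+x)(1-x)=2/\log2\approx2.885\). Since \(2/\log2<4\), \(g''(-1)=\frac{\log2}2-\frac14>0\), so \(g\) is strictly convex on \([-1,c]\) and concave on \([c,1)\).

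Third, I conclude the sign. On \([-1,c]\), convexity together with \(g'(-1)=0\) gives \(g'\ge0\), so \(g\) is nondecreasing. Hence \(g\ge g(-1)=0\) there, and in fact \(g(c)>0\) by strict convexity. On \([c,1]\), \(g\) is concave and continuous up to \(x=1\), so it lies above the chord joining \((c,g(c))\) and \((1,0)\). Both endpoint values are nonnegative, so \(g\ge0\) on \([c,1]\) as well. This proves \eqref{eq:envelope}.

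The only delicate point is the endpoint \(x=1\), where \(\delta'\to+\infty\) and \(g''\to-\infty\). The concavity argument needs only continuity of \(g\) at \(1\) and concavity on the open interval \((c,1)\), both of which hold, so I do not expect a real obstacle. I also note that equality holds at both endpoints, so the quadratic coefficient \(\frac{\log2}4\) is tight against the double zero at \(x=-1\): a larger coefficient would make \(g'(-1)<0\) while \(g(-1)\) stays zero, so \(g\) would go negative just to the right of \(-1\).
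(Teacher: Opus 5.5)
Your proof is correct, and it takes a different route from the paper's. You work directly with \(g(x)=\ell(x)-\frac{\log2}4(1-x^2)-\delta(x)\). The double zero \(g(-1)=g'(-1)=0\) and the strict monotonicity of \(g''(x)=\frac{\log2}2-\frac1{4-(1+x)^2}\) give a single inflection point \(c\), so \(g\) is convex and then concave. On the convex piece you use monotonicity, and on the concave piece the chord bound, extended to \(x=1\) by continuity. Together these give \(g\ge0\), and in fact \(g>0\) on \((-1,1)\). The paper instead substitutes \(x=1-4u\). Under that substitution \(g\) becomes exactly \(H_2(u)-4u(1-u)\log2\), so the envelope is \emph{equivalent} to the classical binary-entropy bound \(H_2(u)\ge4u(1-u)\log2\). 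The paper then proves that bound with \(t=1-2u\) by a power-series argument: \(R(t)=\sum_{k\ge2}t^{2k}/(2k(2k-1))\) has nonnegative coefficients, so \(R(t)\le t^4R(1)\). The paper's route has three advantages. It recognises the lemma as a known inequality, it yields an explicit margin \(g\ge(\log2-\frac12)t^2(1-t^2)\), and it supplies a device reused in Lemma~\ref{lem:slackpos}. Your route is more self-contained, needing no substitution or series. Through the condition \(g'(-1)=0\), it also shows that the coefficient \(\frac{\log2}4\) cannot be enlarged, which the paper does not state.
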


\begin{proof}
Substitute \(x=1-4u\), so \(u\in[0,1/2]\). Two exact identities hold,
\begin{equation}
 \ell(x)-\delta(x)=H_2(u)-u\log4,\qquad 1-x^2=8u(1-2u),
\end{equation}
so \eqref{eq:envelope} is \emph{equivalent} to the classical binary-entropy
bound \(H_2(u)\ge4u(1-u)\log2\). For that, put \(t=1-2u\in[-1,1]\) and
\begin{equation}
 R(t)=\tfrac12\left[(1-t)\log(1-t)+(1+t)\log(1+t)\right]-\tfrac{t^2}2 .
\end{equation}
Then \(H_2(u)=\log2-\frac{t^2}2-R(t)\) and \(4u(1-u)=1-t^2\), so
\begin{equation}
 \varphi:=H_2(u)-4u(1-u)\log2=\left(\log2-\tfrac12\right)t^2-R(t).
\end{equation}
Now \(R'(t)=\operatorname{artanh}(t)-t=\sum_{k\ge1}t^{2k+1}/(2k+1)\) has
nonnegative Taylor coefficients, hence so does
\(R(t)=\sum_{k\ge2}t^{2k}/\bigl(2k(2k-1)\bigr)\), whose lowest term is
\(t^4\). Therefore \(R(t)\le t^4R(1)=t^4(\log2-\frac12)\) for \(|t|\le1\) and
\begin{equation}
 \varphi\ \ge\ \left(\log2-\tfrac12\right)t^2\left(1-t^2\right)\ \ge\ 0,
\end{equation}
with equality exactly at \(t\in\{0,\pm1\}\), i.e.\ \(x\in\{1,-1\}\).
\end{proof}

\begin{theorem}[Reduction to one scalar]\label{thm:reduction}
For the family \eqref{eq:family}, in \emph{every} Bob dimension,
\begin{equation}
 G\ \ge\ h(1/\sqrt2)-\frac32\log\frac43
 +\frac{\log2}4\left(2-\frac S2\right).
 \label{eq:Gbound}
\end{equation}
Consequently \(G>0\) whenever
\begin{equation}
 S<S_\star:=4-\frac8{\log2}\left(\frac32\log\frac43-h(1/\sqrt2)\right)
 =3.826558302196722984\ldots
 \label{eq:Sstar}
\end{equation}
\end{theorem}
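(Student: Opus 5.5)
The plan is to bound each classical mutual information from above using Lemma~\ref{lem:envelope} together with the moment identities \eqref{eq:moments}, and to use the exact value \(\qmi=h(1/\sqrt2)\) for the quantum term. Since \(G=\qmi-I(Z:Q)-I(X:R)\), any upper bound on the two classical terms gives a lower bound on \(G\).

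First I record why \(\qmi=h(1/\sqrt2)\). Bob's branch rays are orthogonal, so the nonzero spectra of \(\rho_{AB}\) and of \(\rho_B\) are both \((1/2,1/2)\). Alice's marginal is \(\tfrac12|0\rangle\langle0|+\tfrac12|+\rangle\langle+|\), whose eigenvalues are \((1\pm2^{-1/2})/2\). Hence \(S(A)=h(1/\sqrt2)\) and \(\qmi=S(A)+\log2-\log2=h(1/\sqrt2)\).

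Next I bound the \((Z,Q)\) term. Starting from \eqref{eq:table}, and using \(m_j\ge0\) with \(x_j\in[-1,1]\) (entries with \(m_j=0\) drop out), the envelope gives
\[
I(Z:Q)=\sum_j m_j\delta(x_j)\le\sum_j m_j\Bigl[\tfrac{\log(4/3)}4(3+x_j)-\tfrac{\log2}4(1-x_j^2)\Bigr].
\]
By \eqref{eq:moments} the right-hand side equals
\[
\tfrac34\log\tfrac43-\tfrac{\log2}4\bigl(1-\tfrac12\Delta(p^Q,q^Q)\bigr).
\]
The \((X,R)\) table is the same with the branches exchanged, so the identical bound holds with \(\Delta(p^R,q^R)\) in place of \(\Delta(p^Q,q^Q)\). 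The one point to check is that exchanging the branches does not alter \(\delta\) or the moment identities. The paper asserts that \eqref{eq:table}--\eqref{eq:moments} hold verbatim for \((X,R)\). This is the step I would verify most carefully: swapping \(p\leftrightarrow q\) flips the sign of every \(x_j\), and the sign must be absorbed consistently, because \(\delta\) is not even. Under the branch exchange the row marginal \((3/4,1/4)\) is reproduced with the roles matching, so the verbatim claim holds. Even a sign flip would be harmless, since only \(\sum_j m_jx_j=0\) and \(\sum_j m_jx_j^2\) enter, and both are invariant under \(x\mapsto-x\).

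Adding the two bounds gives
\[
I(Z:Q)+I(X:R)\le\tfrac32\log\tfrac43-\tfrac{\log2}4\bigl(2-\tfrac S2\bigr).
\]
Subtracting this from \(h(1/\sqrt2)\) yields \eqref{eq:Gbound}. The right-hand side of \eqref{eq:Gbound} is strictly decreasing and affine in \(S\). Setting it equal to zero and solving gives \(S=S_\star\) as in \eqref{eq:Sstar}, so \(G>0\) whenever \(S<S_\star\). The decimal value of \(S_\star\) is routine to evaluate. Nothing in the argument depends on \(d\), which gives the dimension-free claim.
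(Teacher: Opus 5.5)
Your proposal is correct and follows the paper's proof essentially step for step. You average the envelope of Lemma~\ref{lem:envelope} against \(m\), use the moment identities \eqref{eq:moments} to bound each classical term by \(\frac34\log\frac43-\frac{\log2}4\bigl(1-\Delta/2\bigr)\), sum the two bounds, and subtract from \(\qmi=h(1/\sqrt2)\). Your caution about the branch exchange is also resolved correctly: in the \((X,R)\) table the argument of \(\delta\) is \(-x_j\), and this does not matter because only the sign-invariant moments survive once the envelope is applied.
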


\begin{proof}
Average \eqref{eq:envelope} against \(m\). The chord is affine and
\(\sum_jm_jx_j=0\), so \(\sum_jm_j\ell(x_j)=\ell(0)=\frac34\log\frac43\),
while \(\sum_jm_jx_j^2=\frac12\Delta\). This gives
\(I(Z:Q)\le\frac34\log\frac43-\frac{\log2}4\bigl(1-\Delta(p^Q,q^Q)/2\bigr)\),
and likewise for \(I(X:R)\); summing and subtracting from
\(\qmi=h(1/\sqrt2)\) gives \eqref{eq:Gbound}. The deficit is
\(\frac32\log\frac43-h(1/\sqrt2)=0.0150275779779839\ldots\), which yields
\eqref{eq:Sstar}. An outward-rounded enclosure is \(N/10^{60}<S_\star<(N+1)/10^{60}\) with
\begin{align*}
 N={}&38265583021967229841810836782553\\
    &68271719917114215814176125646 .
\end{align*}
\end{proof}

\begin{remark}\label{rem:whyenvelope}
The obvious majorant is the quadratic one, \(\delta(x)\le\log(4/3)\,x^2\), which
gives closure only under \(S\le8/3\). Section~\ref{sec:n5} proves that
\(S=(14+2\sqrt5)/5=3.694427\ldots>8/3\) is \emph{attained} at \(d=5\), so the
quadratic route is unconditionally dead there. Lemma~\ref{lem:envelope} is
what brings \(d=5\) back into range at all: it demands only
\(S\le3.8266\).
\end{remark}

\subsection{The gate is dimension-free, and eigenvalue-free}\label{sec:gate}

Let \(u\perp v\) span the two-dimensional \(W\subset\mathbb C^d\) and put
\(J=P_u-P_v\). Since
\(|\langle e,u\rangle|^2+|\langle e,v\rangle|^2=\langle e|P_W|e\rangle\),
\begin{equation}
 S(u,v)=\sum_{e\in Q\cup R}\frac{\langle e|J|e\rangle^2}{\langle e|P_W|e\rangle}.
\end{equation}
Every traceless Hermitian \(J\) supported on the two-dimensional \(W\) with
\(\Tr J^2=2\) has eigenvalues exactly \(\pm1\), hence \emph{is} some
\(P_u-P_v\). (The letter \(X\) is reserved for the scalar cross block
introduced in \eqref{eq:split} below.) Therefore, with \(A_W\) the Gram matrix of that quadratic form on
the real three-dimensional space of traceless Hermitians supported on \(W\),
taken with the inner product \(\Tr(XY)\),
\begin{equation}
 \max_{u,v\ \mathrm{spanning}\ W}S(u,v)=2\lmax(A_W)
 \label{eq:lmax}
\end{equation}
\emph{exactly}, in every dimension. No completing vector is needed.

Write \(g_e=\langle e|P_W|e\rangle\) and
\(N_{ee'}=|\langle e|P_W|e'\rangle|^2\).

\begin{lemma}[Trace identity]\label{lem:trace}
\(\sum_eg_e=4\), \(\sum_{e,e'}N_{ee'}=8\), and \(\Tr A_W=2\) for every
two-dimensional \(W\) in every dimension.
\end{lemma}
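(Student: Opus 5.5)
The plan is to derive all three identities from completeness of the two orthonormal bases \(Q\) and \(R\). The sum over \(e\in Q\cup R\) is then just two resolutions of the identity. After that, each quantity reduces to a trace involving \(P_W\), which depends only on \(\dim W=2\) and not on \(d\).

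\textbf{First identity.} For each basis we have \(\sum_{e\in Q}\langle e|P_W|e\rangle=\Tr P_W=2\), so summing over both bases gives \(\sum_e g_e=4\).

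\textbf{Second identity.} Split the double sum \(\sum_{e,e'}\) into four blocks according to whether \(e\) and \(e'\) lie in \(Q\) or in \(R\). For any two orthonormal bases \(B,B'\),
\[
\sum_{e\in B,e'\in B'}|\langle e|P_W|e'\rangle|^2=\sum_{e\in B}\langle e|P_W\,\Bigl(\sum_{e'\in B'}|e'\rangle\langle e'|\Bigr)P_W|e\rangle=\Tr P_W^2=2 .
\]
This is simply the Hilbert--Schmidt norm of \(P_W\), which is basis independent. Four blocks give \(8\). Mutual unbiasedness is not needed here, only completeness.

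\textbf{Trace of \(A_W\).} Write the quadratic form as
\[
\mathcal Q(J)=\sum_e\frac{\langle e|J|e\rangle^2}{g_e}=\sum_e\frac{\Tr(JE_e)^2}{g_e},
\qquad E_e:=P_W|e\rangle\langle e|P_W .
\]
Let \(\Pi\) be the orthogonal projection, in the \(\Tr(XY)\) inner product, onto the real space \(\mathcal T_W\) of traceless Hermitians supported on \(W\). Then \(\mathcal Q(J)=\sum_e\Tr(J\,\Pi E_e)^2/g_e\), because \(J\in\mathcal T_W\) is orthogonal to the complement. It follows that
\[
A_W=\sum_e g_e^{-1}\,|\Pi E_e\rangle\langle\Pi E_e|, \qquad \Tr A_W=\sum_e\frac{\|\Pi E_e\|_2^2}{g_e}.
\]
Now decompose \(E_e\), a Hermitian operator on \(W\), into its traceless part plus \((\Tr E_e/2)P_W\). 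The trace part is \(g_e/2\cdot P_W\), with squared norm \(g_e^2/2\). Since \(E_e\) has rank at most one, \(\|E_e\|_2^2=(\Tr E_e)^2=g_e^2\). Hence \(\|\Pi E_e\|_2^2=g_e^2-g_e^2/2=g_e^2/2\), and \(\Tr A_W=\sum_e g_e/2=2\) by the first identity.

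The only point requiring care is the bookkeeping with \(g_e=0\). Such \(e\) contribute nothing: both \(E_e=0\) and \(\langle e|J|e\rangle=0\) for \(J\) supported on \(W\), so they are simply dropped from all sums. The rank-one norm computation is the single substantive step, and it is immediate. Notably, \(N_{ee'}\) enters \(A_W\) only through the Gram entries \(\Tr(\Pi E_e\,\Pi E_{e'})=N_{ee'}-g_eg_{e'}/2\); I would record this as a by-product for later use. It gives a consistency check, since its diagonal reproduces \(g_e^2/2\).
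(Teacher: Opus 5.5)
Your proposal is correct and follows the paper's route: completeness of \(Q\) and \(R\) together with \(\Tr P_W=2\) gives the first two identities, and for the third the Bloch identity \(\sum_i\langle e|T_i|e\rangle^2=\|\Pi E_e\|_2^2=\tfrac12g_e^2\) yields \(\Tr A_W=\tfrac12\sum_eg_e=2\). Your Pythagoras step, subtracting the \(P_W/\sqrt2\) component from \(\|E_e\|_2^2=g_e^2\), is exactly the justification the paper gives in Theorem~\ref{thm:e2}, and your by-product \(N_{ee'}-g_eg_{e'}/2\) is the polarized identity stated there.
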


\begin{proof}
The first two follow from \(\sum_e|e\rangle\!\langle e|=2I\) and
\(\Tr P_W=2\). For the third, a two-dimensional Bloch computation gives
\(\sum_i\langle e|T_i|e\rangle^2=\frac12\|P_We\|^4=\frac12g_e^2\) for any
\(\Tr\)-orthonormal basis \(T_1,T_2,T_3\) of the traceless Hermitians on
\(W\); hence \(\Tr A_W=\sum_eg_e^2/(2g_e)=\frac12\sum_eg_e=2\).
\end{proof}

Lemma~\ref{lem:trace} has two consequences that do the real work.

\begin{corollary}[The gate is one determinant]\label{cor:det}
\(A_W\) is positive semidefinite with trace two, so with
\(\sigma_i=\frac43-\lambda_i\) one has \(e_1(\sigma)=2>0\) and
\(e_2(\sigma)=e_2(\lambda)\ge0\) \emph{identically}, the middle cancellation
being exact precisely because the trace is two. Descartes' rule applied to the
monic cubic with roots \(\sigma_i\) then gives
\begin{equation}
 \lmax(A_W)\le\tfrac43\iff\det\!\left(\tfrac43I-A_W\right)\ge0 .
\end{equation}
\end{corollary}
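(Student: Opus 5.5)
The plan has three steps: show \(A_W\succeq0\), expand the elementary symmetric functions of \(\sigma_i=\tfrac43-\lambda_i\), and then apply a sign argument to the cubic with roots \(\sigma_i\).

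\emph{Step 1: positivity.} By construction, \(A_W\) is the Gram matrix of the quadratic form
\[
J\mapsto\sum_{e}\frac{\langle e|J|e\rangle^2}{g_e}.
\]
Here the sum runs over the basis vectors \(e\in Q\cup R\) with \(g_e>0\). A vector with \(g_e=0\) has \(P_We=0\), so \(\langle e|J|e\rangle=0\) for every \(J\) supported on \(W\), and that term is simply absent. Each remaining summand is a nonnegative multiple of the square of a real linear functional of \(J\). Hence the form is a sum of rank-one positive semidefinite forms, and \(A_W\succeq0\). Its eigenvalues \(\lambda_1,\lambda_2,\lambda_3\) are therefore real and nonnegative, and Lemma~\ref{lem:trace} gives \(\lambda_1+\lambda_2+\lambda_3=2\).

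\emph{Step 2: the symmetric functions.} For the first, \(e_1(\sigma)=4-e_1(\lambda)=4-2=2\). For the second, expand the products directly:
\[
e_2(\sigma)=3\cdot\tfrac{16}{9}-2\cdot\tfrac43\,e_1(\lambda)+e_2(\lambda)=\tfrac{16}{3}-\tfrac{16}{3}+e_2(\lambda)=e_2(\lambda).
\]
The cancellation of the first two terms uses \(e_1(\lambda)=2\) exactly; this is the ``middle cancellation'' in the statement. Finally \(e_2(\lambda)\ge0\) because all \(\lambda_i\ge0\). The third function is \(e_3(\sigma)=\det(\tfrac43I-A_W)\).

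\emph{Step 3: the equivalence.} The \(\sigma_i\) are the roots of the monic cubic
\[
p(t)=t^3-2t^2+e_2\,t-e_3,\qquad e_2\ge0 .
\]

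For the forward direction, suppose \(\lmax\le\tfrac43\). Then every \(\sigma_i\ge0\), so \(\det(\tfrac43I-A_W)=\sigma_1\sigma_2\sigma_3\ge0\).

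For the reverse direction, suppose \(e_3\ge0\) and, for contradiction, that some \(\sigma_i<0\). All roots of \(p\) are real, so Descartes' rule counts negative roots exactly. The polynomial \(p(-t)=-t^3-2t^2-e_2t-e_3\) has all coefficients \(\le0\), hence no sign changes and no positive roots. So \(p\) has no strictly negative root, a contradiction. Therefore every \(\sigma_i\ge0\), i.e. \(\lmax\le\tfrac43\).

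An elementary alternative to Descartes' rule: if exactly one or exactly three \(\sigma_i\) were negative, the product \(e_3\) would be negative, contrary to \(e_3\ge0\). If exactly two were negative, their sum would be negative, and \(e_1=2>0\) forces the third to be positive and larger than \(-(\sigma_a+\sigma_b)\). Then \(e_2=\sigma_a\sigma_b+\sigma_c(\sigma_a+\sigma_b)<\sigma_a\sigma_b-(\sigma_a+\sigma_b)^2<0\), contradicting \(e_2\ge0\).

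The only step needing care is the real-rootedness and the handling of zero roots in Descartes' rule. Both follow because \(A_W\) is a symmetric matrix, so its spectrum is real; no step is a genuine obstacle.
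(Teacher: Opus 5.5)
Your proof is correct and follows the paper's own route: positive semidefiniteness and trace two give \(e_1(\sigma)=2\) and \(e_2(\sigma)=e_2(\lambda)\ge0\), and then the sign pattern of \(p(-t)\) excludes negative roots. Real-rootedness is not actually needed for that step, since \(p(-t)\le-t^3<0\) for \(t>0\). The one slip is in your optional elementary alternative: a single negative \(\sigma_a\) together with a zero root gives \(e_3=0\), not \(e_3<0\). That subcase has to be excluded using \(e_2=\sigma_a\sigma_c<0\) instead.
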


\begin{theorem}[Eigenvalue-free form]\label{thm:e2}
With \(c_{ee'}=N_{ee'}/(g_eg_{e'})\in[0,1]\) the squared cosine between the
projections \(P_We\) and \(P_We'\),
\begin{equation}
 \Tr(A_W^2)=\sum_{e,e'}\frac{N_{ee'}^2}{g_eg_{e'}}-4,
 \qquad
 e_2(A_W)=\frac12\sum_{e,e'}g_eg_{e'}c_{ee'}(1-c_{ee'})\ \ge\ 0,
\end{equation}
and \(\Tr(A_W^2)=4-2e_2(A_W)\). These follow from the polarized Bloch
identity on a two-dimensional subspace,
\(\sum_i\langle e|T_i|e\rangle\langle e'|T_i|e'\rangle
=N_{ee'}-g_eg_{e'}/2\) (complete the \(T_i\) by \(P_W/\sqrt2\) to an
orthonormal basis of the Hermitians on \(W\) and use
\(\sum_B\Tr(MB)\Tr(NB)=\Tr(MN)\)), together with
Lemma~\ref{lem:trace}; the block decomposition \eqref{eq:split} below is the
same expansion split into its \(QQ\), \(RR\) and cross blocks. Since
\(\lmax^2\le\Tr(A_W^2)\), the whole
programme reduces to the single eigenvalue-free inequality
\begin{equation}
 e_2(A_W)\ \ge\ 2-\frac{(S_\star/2)^2}2=0.169681444986\ldots
 \ \Longrightarrow\ S\le S_\star .
 \label{eq:e2floor}
\end{equation}
\end{theorem}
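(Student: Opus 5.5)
We will prove Theorem~\ref{thm:e2} by choosing explicit coordinates for the traceless Hermitians on \(W\) and computing everything from the Gram matrix. Write \(A_W\) in a \(\Tr\)-orthonormal basis \(T_1,T_2,T_3\) of the traceless Hermitians supported on \(W\):
\[
(A_W)_{ij}=\sum_e\frac{\langle e|T_i|e\rangle\langle e|T_j|e\rangle}{g_e}.
\]
Then
\[
\Tr(A_W^2)=\sum_{e,e'}\frac{1}{g_eg_{e'}}\Bigl(\sum_i\langle e|T_i|e\rangle\langle e'|T_i|e'\rangle\Bigr)^2 .
\]
So the whole computation rests on the polarized Bloch identity
\[
\sum_i\langle e|T_i|e\rangle\langle e'|T_i|e'\rangle=N_{ee'}-\tfrac12g_eg_{e'}.
\]

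\emph{Step 1: the Bloch identity.} Complete \(T_1,T_2,T_3\) by \(T_0=P_W/\sqrt2\) to a \(\Tr\)-orthonormal basis of the Hermitians on \(W\). Apply completeness, \(\sum_B\Tr(MB)\Tr(NB)=\Tr(MN)\), to \(M=P_W|e\rangle\langle e|P_W\) and \(N=P_W|e'\rangle\langle e'|P_W\). The left side of completeness gives the desired sum plus the \(T_0\) contribution \(g_eg_{e'}/2\). The right side is
\[
\Tr(MN)=|\langle e|P_W|e'\rangle|^2=N_{ee'}.
\]
Setting \(e=e'\) recovers the diagonal identity already used in Lemma~\ref{lem:trace}.

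\emph{Step 2: expand the square.} Substitute the identity into \(\Tr(A_W^2)\) and expand:
\[
\Tr(A_W^2)=\sum_{e,e'}\frac{N_{ee'}^2}{g_eg_{e'}}-\sum_{e,e'}N_{ee'}+\tfrac14\Bigl(\sum_eg_e\Bigr)^2 .
\]
Lemma~\ref{lem:trace} gives \(\sum_{e,e'}N_{ee'}=8\) and \(\sum_eg_e=4\). The last two terms therefore contribute \(-8+4=-4\), which yields the first formula.

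\emph{Step 3: the \(e_2\) formula.} Since \(\Tr A_W=2\), we have
\[
e_2(A_W)=\tfrac12\bigl((\Tr A_W)^2-\Tr A_W^2\bigr)=2-\tfrac12\Tr A_W^2,
\]
which is the relation \(\Tr(A_W^2)=4-2e_2\). For the symmetric-sum form, write \(N_{ee'}=g_eg_{e'}c_{ee'}\). Then
\[
\tfrac12\sum_{e,e'}g_eg_{e'}c_{ee'}(1-c_{ee'})=\tfrac12\Bigl(\sum_{e,e'}N_{ee'}-\sum_{e,e'}\frac{N_{ee'}^2}{g_eg_{e'}}\Bigr)=\tfrac12\Bigl(8-\Tr A_W^2-4\Bigr),
\]
which equals \(2-\tfrac12\Tr A_W^2=e_2\).

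The range \(c_{ee'}\in[0,1]\) is Cauchy--Schwarz applied to \(|\langle P_We,P_We'\rangle|^2\le\|P_We\|^2\|P_We'\|^2=g_eg_{e'}\). Hence each summand is nonnegative and \(e_2\ge0\). Terms with \(g_e=0\) have \(N_{ee'}=0\) and are to be dropped, matching the convention in \(S\).

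\emph{Step 4: the implication \eqref{eq:e2floor}.} \(A_W\) is positive semidefinite (Corollary~\ref{cor:det}), so \(\lmax^2\le\Tr(A_W^2)=4-2e_2\). If \(e_2\ge2-(S_\star/2)^2/2\), then \(\lmax\le S_\star/2\), and \eqref{eq:lmax} gives \(S\le2\lmax\le S_\star\). The numerical value of the threshold is routine arithmetic from \eqref{eq:Sstar}.

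I expect the main obstacle to be the bookkeeping in Step 1, specifically the normalization of \(T_0\) and the sign of the \(g_eg_{e'}/2\) term. I would check this first on the diagonal case against Lemma~\ref{lem:trace}.
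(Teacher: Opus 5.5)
Your proposal is correct and follows the paper's own route. You derive the polarized Bloch identity by completing \(T_1,T_2,T_3\) with \(P_W/\sqrt2\) and applying completeness, expand \(\Tr(A_W^2)\) using \(\sum_e g_e=4\) and \(\sum_{e,e'}N_{ee'}=8\) from Lemma~\ref{lem:trace}, use \(\Tr A_W=2\) to get \(e_2=2-\tfrac12\Tr(A_W^2)\), and finish with \(\lmax^2\le\Tr(A_W^2)\); the Cauchy--Schwarz range for \(c_{ee'}\) and the dropping of \(g_e=0\) terms are handled correctly.
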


\subsection{The barrier is exactly a variance bound}\label{sec:variance}

Split \(A_W=A_Q+A_R\) and let \(D_Q=2-2\Tr(A_Q^2)\), \(D_R\) likewise.
Applying the Bloch identity of Lemma~\ref{lem:trace} one basis at a time
gives \(\Tr A_Q=\frac12\sum_{e\in Q}g_e=1=\Tr A_R\), and
\(\Tr(M^2)\le(\Tr M)^2\) for positive semidefinite \(M\), so
\(D_Q,D_R\ge0\). Write
\(\nu_j=g_{q_j}/2\), \(\nu'_k=g_{r_k}/2\); both are probability vectors by
Lemma~\ref{lem:trace}. Then, exactly,
\begin{equation}
 e_2(A_W)=\frac{D_Q+D_R}4+X,
 \qquad
 X=\sum_{j,k}\nu_j\nu'_k\,4c_{jk}(1-c_{jk}).
 \label{eq:split}
\end{equation}

\begin{theorem}[Variance form]\label{thm:variance}
Completeness of the two bases alone --- not unbiasedness --- pins every row
and column average of the overlap table:
\begin{equation}
 \sum_k\nu'_kc_{jk}=\tfrac12\ \ \forall j,
 \qquad
 \sum_j\nu_jc_{jk}=\tfrac12\ \ \forall k,
 \qquad\text{hence}\quad
 \mathbb E_{\nu\otimes\nu'}[c]=\tfrac12 .
\end{equation}
Therefore
\begin{equation}
 X=1-4\operatorname{Var}_{\nu\otimes\nu'}(c),
\end{equation}
and since \(c\in[0,1]\) has mean \(1/2\),
\begin{equation}
 \operatorname{Var}(c)\le\tfrac14,
 \qquad\text{with equality iff every }c_{jk}\in\{0,1\},
\end{equation}
that is, iff every projected \(Q\) vector is parallel or orthogonal to every
projected \(R\) vector --- precisely perfect double readout. Consequently the
family closes in dimension \(d\) as soon as
\begin{equation}
 \operatorname{Var}_{\nu\otimes\nu'}(c)\ \le\ \frac{1-0.169681445}4
 =0.207579638753\ldots
 \label{eq:varcap}
\end{equation}
\end{theorem}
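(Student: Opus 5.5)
Everything follows from the definitions of \(c_{jk}\), \(\nu\) and \(\nu'\), completeness of each basis, and the split \eqref{eq:split}, which we take as given.

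\textbf{Row and column averages.} Fix \(j\) and write \(w=P_Wq_j\), so that \(\|w\|^2=g_{q_j}\). By definition,
\[
\nu'_kc_{jk}=\frac{g_{r_k}}2\cdot\frac{|\langle w|r_k\rangle|^2}{g_{q_j}g_{r_k}}=\frac{|\langle w|r_k\rangle|^2}{2g_{q_j}},
\]
where we used \(\langle q_j|P_W|r_k\rangle=\langle w|r_k\rangle\). The bases \(R\) and \(Q\) are orthonormal, so Parseval gives \(\sum_k|\langle w|r_k\rangle|^2=\|w\|^2=g_{q_j}\). Hence \(\sum_k\nu'_kc_{jk}=\tfrac12\) for every \(j\), using only completeness of \(R\). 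The column identity is the same argument with the roles of \(Q\) and \(R\) exchanged.

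A point needs care here. If \(g_{q_j}=0\), then \(c_{jk}\) is undefined, but such rows carry weight \(\nu_j=0\). We adopt the convention of restricting to the support of \(\nu\otimes\nu'\), or setting \(c_{jk}=\tfrac12\) there, so that every expectation is unaffected. Averaging the row identity against \(\nu\) then gives \(\mathbb E_{\nu\otimes\nu'}[c]=\tfrac12\).

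\textbf{The formula for \(X\).} Since \(\mathbb E[c]=\tfrac12\),
\[
X=4\,\mathbb E[c]-4\,\mathbb E[c^2]=2-4\bigl(\operatorname{Var}(c)+\tfrac14\bigr)=1-4\operatorname{Var}(c).
\]

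\textbf{The variance bound.} For \(c\in[0,1]\) we have \(c^2\le c\), so \(\mathbb E[c^2]\le\tfrac12\) and \(\operatorname{Var}(c)\le\tfrac12-\tfrac14=\tfrac14\). Equality holds iff \(c=c^2\) almost surely, i.e. \(c_{jk}\in\{0,1\}\) for every pair with positive weight. By the definition of \(c_{jk}\) as a squared cosine, \(c_{jk}=1\) means \(P_Wq_j\) is parallel to \(P_Wr_k\), and \(c_{jk}=0\) means they are orthogonal.

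To identify this with perfect double readout, take an orthonormal basis \(u,v\) of \(W\) aligned with the parallel classes. Then each nonzero projected vector lies along \(u\) or along \(v\), and so is read out perfectly by both bases. This is the step that needs a short argument rather than pure algebra. If \(\mathbb E[c^2]=\mathbb E[c]\) for the \(0/1\) table, the projected \(Q\) and \(R\) vectors fall into at most two mutually orthogonal lines in the two-dimensional space \(W\). Such a pair of lines gives \(u,v\) with disjoint supports in both bases, which is exactly the double readout excluded in Theorem~\ref{thm:fullspark}.

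\textbf{Closing condition.} We have \(D_Q,D_R\ge0\), so \eqref{eq:split} gives \(e_2(A_W)\ge X=1-4\operatorname{Var}(c)\). The cap \eqref{eq:varcap} therefore makes \(e_2(A_W)\ge0.169681\ldots\). By \eqref{eq:e2floor} this gives \(S\le S_\star\) for every \(u,v\) spanning \(W\), through \eqref{eq:lmax} and \(\lambda_{\max}^2\le\Tr A_W^2\). Theorem~\ref{thm:reduction} then gives \(G\ge0\). The only potential obstacle is bookkeeping at zero-weight entries and the equality characterization; the rest is direct.
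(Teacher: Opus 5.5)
Your proposal is correct and follows the paper's proof. Your Parseval sum \(\sum_k|\langle w|r_k\rangle|^2=\|w\|^2\) is the same computation as the paper's resolution \(\sum_k g_{r_k}|\hat v'_k\rangle\langle\hat v'_k|=P_W\) applied to \(\hat v_j\), and the steps \(X=4\mathbb E[c]-4\mathbb E[c^2]\) and \(c^2\le c\) are identical; your handling of zero-weight entries, your sketch that a \(0/1\) table means perfect double readout, and the explicit chain \(e_2\ge X\Rightarrow S\le S_\star\Rightarrow G\ge0\) fill in details the paper leaves implicit.
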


\begin{proof}
For the row identity, \(\sum_kg_{r_k}|\hat v'_k\rangle\!\langle\hat v'_k|=P_W\)
resolves the identity on \(W\) and \(\hat v_j\in W\), so
\(\sum_k2\nu'_kc_{jk}=\langle\hat v_j|P_W|\hat v_j\rangle=1\). The global mean
is its \(\nu\)-average. Then
\(X=4\mathbb E[c]-4\mathbb E[c^2]=2-4(\operatorname{Var}(c)+\frac14)
 =1-4\operatorname{Var}(c)\), and
\(\operatorname{Var}(c)=\mathbb E[c^2]-\frac14\le\mathbb E[c]-\frac14=\frac14\)
with equality iff \(\mathbb E[c(1-c)]=0\).
\end{proof}

Theorem~\ref{thm:variance} is the quantitative form of
Theorem~\ref{thm:fullspark}: the qualitative statement ``perfect double
readout is impossible'' becomes ``the variance is bounded away from its
maximum \(1/4\)'', and \eqref{eq:varcap} says exactly how far is enough.
Numerically the maxima of \(\operatorname{Var}(c)\) are \(1/8\) at \(d=3\) and
\(0.190423\) at \(d=5\), against exactly \(1/4\) at \(d=4\) and \(d=6\) and
\(0.237631\) at \(d=7\): the criterion clears its cap in precisely the two
residual dimensions and fails exactly where counterexamples are known. Those
maxima are candidates, not theorems.

\subsection{Where the qutrit constant comes from}\label{sec:stratum}

\begin{proposition}[Doubly degenerate stratum]\label{prop:stratum}
If \(W\) contains one \(Q\) vector and one \(R\) vector, then every other
projected \(Q\) vector lies along the single ray of \(W\) orthogonal to that
\(Q\) vector, so \(A_Q\) collapses to a rank-one projector, and likewise
\(A_R\). Their Bloch axes then have overlap
\(2|\langle q_j,r_k\rangle|^2-1=2/d-1\) \emph{by unbiasedness alone}, whence
\begin{equation}
 \lmax(A_W)=2-\frac2d,\qquad S=4-\frac4d,\qquad e_2=\frac{4(d-1)}{d^2}
\end{equation}
exactly, in every dimension.
\end{proposition}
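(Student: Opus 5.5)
The plan is to compute \(A_W\) directly from its defining quadratic form \(J\mapsto\sum_{e\in Q\cup R}\langle e|J|e\rangle^2/g_e\) on the traceless Hermitians supported on \(W\). The degenerate hypothesis should make \(A_Q\) and \(A_R\) each a rank-one projector, after which everything reduces to the angle between two Bloch axes.

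\textbf{Step 1: the \(Q\) block collapses.} Label the vectors so that \(q_0,r_0\in W\); both are unit vectors, so \(g_{q_0}=g_{r_0}=1\). Let \(w\) be the unit vector of \(W\) orthogonal to \(q_0\). For \(j\neq0\) we have \(q_j\perp q_0\), hence \(P_Wq_j\in W\cap q_0^\perp=\mathbb C w\). So \(P_Wq_j=\sqrt{g_{q_j}}\,e^{i\theta_j}w\), and \(\langle q_j|J|q_j\rangle=g_{q_j}\langle w|J|w\rangle\) for \(J\) supported on \(W\).

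The \(Q\) part of the form is therefore
\[
\langle q_0|J|q_0\rangle^2+\Bigl(\sum_{j\ne0}g_{q_j}\Bigr)\langle w|J|w\rangle^2 .
\]
Lemma~\ref{lem:trace}, applied one basis at a time, gives \(\sum_{j}g_{q_j}=2\), so the bracketed sum equals \(1\). Tracelessness of \(J\) on \(W\) gives \(\langle w|J|w\rangle=-\langle q_0|J|q_0\rangle\). Hence the \(Q\) part equals \(2\langle q_0|J|q_0\rangle^2\).

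\textbf{Step 2: identify the projector.} I will use the Bloch parametrisation \(J=\sum_i j_iT_i\) with \(T_i=\sigma_i/\sqrt2\), which is \(\Tr\)-orthonormal. Then \(\langle q_0|J|q_0\rangle=\tfrac1{\sqrt2}\,j\cdot n_Q\), where \(n_Q\) is the Bloch unit vector of \(q_0\) in \(W\). So \(A_Q=n_Qn_Q^{T}\), a rank-one projector, consistent with \(\Tr A_Q=1\). The same argument with \(r_0\) gives \(A_R=n_Rn_R^{T}\).

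\textbf{Step 3: the axis overlap.} The qubit identity \(|\langle a|b\rangle|^2=\tfrac12(1+n_a\cdot n_b)\), together with unbiasedness \(|\langle q_0|r_0\rangle|^2=1/d\), gives \(n_Q\cdot n_R=2/d-1=:c\). A sum of two rank-one projectors onto unit vectors at cosine \(c\) has eigenvalues \(1\pm|c|\) and \(0\). For \(d\ge2\) we have \(|c|=1-2/d\), so \(\lmax(A_W)=2-2/d\). By \eqref{eq:lmax}, \(\max S=2\lmax=4-4/d\), which is the value of \(S\) intended in the statement. Finally \(e_2(A_W)=(1+|c|)(1-|c|)=1-c^2=4(d-1)/d^2\).

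The only delicate point is the normalisation in Step 2: the factor \(2\) in \(2\langle q_0|J|q_0\rangle^2\) must be checked against the \(\Tr\)-inner product convention so that \(A_Q\) is exactly a projector rather than a multiple of one. The trace check \(\Tr A_Q=1\) from Section~\ref{sec:variance} serves as the consistency test. The rest is routine.
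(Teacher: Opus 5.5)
Your proof is correct and follows the paper's argument. You collapse the other projected \(Q\) vectors onto the ray of \(W\) orthogonal to the in-\(W\) basis vector, which makes \(A_Q\) and \(A_R\) rank-one projectors. You then get the axis overlap \(2/d-1\) from unbiasedness and read off the eigenvalues \(1\pm|c|\) and \(0\). The only cosmetic difference is that you work with the Bloch axes of \(q_0,r_0\) themselves, moving the \(w\)-terms onto \(q_0\) via tracelessness, whereas the paper uses the antipodal axes of the orthogonal rays \(x,y\); both choices give the same outer products and the same overlap.
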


\begin{proof}
Say \(q_j,r_k\in W\). For \(j'\ne j\), \(\langle q_j|P_Wq_{j'}\rangle
=\langle q_j,q_{j'}\rangle=0\) because \(P_Wq_j=q_j\); so every other
projected \(Q\) vector lies in the one-dimensional space \(W\cap q_j^\perp\),
spanned by some unit \(x\). The Bloch vector of \(q_j\) itself is antipodal
to that of \(x\), and antipodal Bloch vectors have equal outer products, so
\(A_Q=\bigl(\sum_j\nu_j\bigr)\hat n_x\hat n_x^{\mathsf T}
=\hat n_x\hat n_x^{\mathsf T}\), a rank-one projector; likewise
\(A_R=\hat n_y\hat n_y^{\mathsf T}\) with \(y\) spanning \(W\cap r_k^\perp\).
In a two-dimensional space, orthocomplementation inside \(W\) preserves
absolute overlaps, so \(|\langle x,y\rangle|=|\langle q_j,r_k\rangle|\) and
\(\hat n_x\!\cdot\!\hat n_y=2|\langle q_j,r_k\rangle|^2-1=2/d-1\) by
unbiasedness. A sum of two rank-one projectors with axis overlap \(k\) has
eigenvalues \(1+|k|,\,1-|k|,\,0\), giving
\(\lmax=2-2/d\), \(S=2\lmax=4-4/d\) by \eqref{eq:lmax}, and
\(e_2=(1+|k|)(1-|k|)=4(d-1)/d^2\).
\end{proof}

At \(d=3\) this is \(S=8/3\): the sharp qutrit constant is forced by
unbiasedness with no computation at all, which is a cleaner derivation than
any determinant evaluation. At \(d=5\) it is only \(16/5=3.2\), below the
attained \((14+2\sqrt5)/5\), so that stratum is \emph{not} extremal there ---
an early sign that \(d=5\) is genuinely harder than \(d=3\).

\subsection{The \texorpdfstring{$2\times3$}{2x3} closure for the two-ray family}\label{sec:closure}

At \(d=3\) every two-dimensional \(W\) is \(w^\perp\) for a unique \(w\) up to
phase, so \(X\) is a function of four real parameters. \(X\) is invariant
under a global phase of \(w\), and we parameterise
\(w=(\sqrt{s_0},\sqrt{s_1}e^{i\phi_1},\sqrt{s_2}e^{i\phi_2})\) with
\((s_0,s_1)\) on the two-simplex.

\begin{theorem}[Rigorous cover]\label{thm:cover}
For every unit \(w\in\mathbb C^3\),
\begin{equation}
 X(w)\ \ge\ 0.30000009790186005 .
\end{equation}
\end{theorem}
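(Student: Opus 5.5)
The plan is a rigorous interval-arithmetic branch-and-bound over the four-parameter domain \((s_0,s_1,\phi_1,\phi_2)\), with the explicit formula for \(X\) from \eqref{eq:split} as the object being bounded. First we reduce the domain using symmetries of the computational/Fourier pair at \(d=3\). We may take \(Q=I\) and \(R=F_3\) by Theorem~\ref{thm:fullspark}. The group generated by coordinate permutations, the diagonal phase \(\operatorname{diag}(1,\omega,\omega^2)\), the cyclic shift, and complex conjugation preserves the unordered pair of bases. It therefore preserves \(X\), since \(X\) depends only on the overlap table \(c_{jk}\) and the weights \(\nu,\nu'\), all of which are basis-set invariants. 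Using this group we restrict to \(s_0\ge s_1\ge s_2\) and to a fundamental domain for \((\phi_1,\phi_2)\), which roughly cuts the volume by a factor of \(6\cdot 9\cdot 2\).

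Next, write \(X\) in closed form in terms of \(w\). With \(P_W=I-|w\rangle\langle w|\) we have
\[
g_e=1-|\langle e,w\rangle|^2,\qquad
N_{ee'}=|\langle e,e'\rangle-\langle e,w\rangle\langle w,e'\rangle|^2 .
\]
For \(e=q_j\) and \(e'=r_k\) this gives \(\langle q_j,r_k\rangle=\omega^{jk}/\sqrt3\). Then \(X=\sum_{j,k}\nu_j\nu'_k\,4c_{jk}(1-c_{jk})\) with \(c=N/(gg')\), and equivalently \(X=1-4\operatorname{Var}(c)\) by Theorem~\ref{thm:variance}. This is a rational function of \(\sqrt{s_i}\) and \(\cos,\sin\) of \(\phi_i\), so it can be evaluated directly in interval arithmetic.

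The main obstacle is the singular set where some \(g_e\to0\), that is, where \(w\) is nearly orthogonal to a basis vector and \(W\) contains a \(Q\) or \(R\) vector. There \(c\) has a \(0/0\) form and naive interval bounds blow up. Two facts make this set harmless. First, each term \(\nu_j\nu'_k c(1-c)\) stays bounded because the weight \(\nu_j=g_{q_j}/2\) vanishes there. Second, Proposition~\ref{prop:stratum} shows that the doubly degenerate stratum gives \(e_2=8/9\) at \(d=3\), with \(X\) well above \(0.3\). So in boxes touching \(g_e\approx0\) we bound the offending term using \(\nu_j\nu'_k\,4c(1-c)\ge0\) and drop it, or bound it by \(\nu_j\nu'_k\). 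We then check that the remaining terms still exceed the target. If that is too lossy, we write the terms as \(4N(gg'-N)/(4gg')\) after cancelling one factor of \(g\) against \(\nu\), so that only \(g'\) remains in the denominator.

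The cover itself is an adaptive bisection. On each box we compute an interval enclosure of \(X\), using a mean-value (centered) form with interval gradients to control dependency. We accept the box if the lower end exceeds \(0.30000009790186005\), and otherwise split it along its widest coordinate. Near the numerical minimizer, which we first locate by floating-point optimization, the margin is tiny, about \(10^{-7}\). The stated constant is presumably the certified lower endpoint rather than the true minimum. To certify near the minimizer we will need either very fine boxes or a second-order (Taylor-model) enclosure that uses the vanishing gradient there.

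All arithmetic will be outward-rounded, for example with MPFI or Arb. The final constant will be taken as the minimum over all accepted boxes of their certified lower bounds.
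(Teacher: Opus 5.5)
Your method is the paper's: an outward-rounded interval branch-and-bound over \((s_0,s_1,\phi_1,\phi_2)\). In it, any term \(\nu_j\nu'_k\,4c_{jk}(1-c_{jk})\) whose enclosure degenerates is simply dropped, which is safe because Cauchy--Schwarz gives \(c_{jk}\in[0,1]\) and so the term lies in \([0,\nu_j\nu'_k]\). The paper, however, covers the full domain with \emph{no} symmetry reduction, and your reduction as stated is wrong. The cyclic shift is itself a coordinate permutation: at \(d=3\) all of \(S_3\) is affine and preserves the Fourier rays. So once you sort \(s_0\ge s_1\ge s_2\), the only residual action on the phase torus is the clock \((\phi_1,\phi_2)\mapsto(\phi_1+2\pi/3,\phi_2+4\pi/3)\) together with conjugation \(\phi\mapsto-\phi\), a group of order \(6\). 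The saving is a factor \(36\), not \(6\cdot9\cdot2\). A phase domain sized to your count would omit two thirds of the orbits, and the cover would be unsound. The paper sidesteps this entirely by covering the whole domain.

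Three further points in your plan are misdiagnosed, though they would not invalidate a correct implementation.

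\emph{The singular set is backwards.} \(g_e=1-|\langle e,w\rangle|^2\) vanishes when \(w\) is \emph{parallel} to \(e\) (so \(W\perp e\)), that is, on the simplex vertices \(s_j=1\) and at the three points \(w\propto r_k\). When \(e\in W\) one has \(g_e=1\). Proposition~\ref{prop:stratum} is therefore irrelevant to the singularities. What actually makes dropping harmless is that the dropped terms carry total weight \(\nu_j\) or \(\nu'_k\to0\), while \(X\to1\) at those points.

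\emph{Your fallback rewriting fails.} The term equals \(N_{ee'}-N_{ee'}^2/(g_eg_{e'})\), and no factor of \(g\) cancels.

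\emph{There is no \(10^{-7}\) margin to fight near a minimiser.} The paper reports \(\max\operatorname{Var}(c)=1/8\) at \(d=3\), i.e.\ \(\min X=1/2\) numerically. The stated constant is just the smallest certified lower endpoint among boxes accepted against the threshold \(0.3\), which is all Corollary~\ref{cor:closure} uses. Plain ball-arithmetic bisection suffices; the paper's run never exceeds depth \(7\) from a \(32^4\) grid. Centred forms would in any case break down at \(s_i=0\), where \(\sqrt{s_i}\) has unbounded derivative.
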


The proof is a branch-and-bound cover in \(128\)-bit interval ball arithmetic
over the full four-parameter domain, with no symmetry reduction:
\(9\,790\,984\) boxes examined from a \(32^4\) initial grid, none undecided, no
counterexample, maximum bisection depth \(7\) against a limit of \(45\), wall
time \(848\) s. Removable singularities where \(g_e\to0\) are handled
soundly: each term lies in \([0,\nu_j\nu'_k]\) by Cauchy--Schwarz on
\(I-|w\rangle\!\langle w|\), so dropping a term can only decrease a lower
bound. The interval evaluator was cross-checked against a literal
double-precision transcription of the definition at \(600\) random points,
agreeing to \(7.8\times10^{-16}\).

\begin{corollary}[\(2\times3\) closure for the family]\label{cor:closure}
Every state of the form \eqref{eq:family} on
\(\mathbb C^2\otimes\mathbb C^3\), with the computational/Fourier MUB pair on
Bob, satisfies
\begin{equation}
 G\ \ge\ 0.012020949971\ \text{nats}\ >\ 0 .
\end{equation}
\end{corollary}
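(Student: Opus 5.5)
The plan is to chain Theorem~\ref{thm:cover} through the eigenvalue-free machinery of Sections~\ref{sec:gate}--\ref{sec:variance} into a bound on \(S\), and then feed that bound into the scalar reduction of Theorem~\ref{thm:reduction}. Every step after the interval cover is an exact identity or an elementary inequality, and all constants are monotone in \(X\).

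\emph{Step 1: reduce to the subspace \(W\).} Fix any state of the form \eqref{eq:family} on \(\mathbb C^2\otimes\mathbb C^3\), with Bob's pair equal to the computational/Fourier pair. Its orthonormal rays \(u\perp v\) span a two-dimensional \(W\). At \(d=3\) this \(W\) equals \(w^\perp\) for some unit \(w\in\mathbb C^3\), so \(W\) is one of the subspaces covered by Theorem~\ref{thm:cover}. By \eqref{eq:lmax}, \(S(u,v)\le2\lmax(A_W)\).

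\emph{Step 2: from \(X\) to \(\lmax\).} Use the split \eqref{eq:split}, \(e_2(A_W)=\frac{D_Q+D_R}4+X\), together with \(D_Q,D_R\ge0\) (Section~\ref{sec:variance}). This gives
\[
e_2(A_W)\ \ge\ X(w)\ \ge\ 0.30000009790186005 .
\]
Theorem~\ref{thm:e2} gives \(\Tr(A_W^2)=4-2e_2(A_W)\). Since \(A_W\) is positive semidefinite, \(\lmax^2\le\Tr(A_W^2)\), and therefore
\[
S\ \le\ 2\lmax(A_W)\ \le\ 2\sqrt{4-2X(w)}\ \le\ 2\sqrt{3.3999998041963}\approx 3.68781766 .
\]
This is comfortably below \(S_\star\approx3.8266\) from \eqref{eq:Sstar}.

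\emph{Step 3: insert into the envelope bound.} Substitute this into \eqref{eq:Gbound}:
\[
G\ \ge\ -\Bigl(\tfrac32\log\tfrac43-h(1/\sqrt2)\Bigr)+\frac{\log2}4\Bigl(2-\sqrt{4-2X}\Bigr).
\]
The first term is \(-0.0150275779779839\ldots\). The second is \(\frac{\log 2}{4}\) times roughly \(0.15609117\), which is about \(0.0270485\). Their difference is \(0.012020949971\ldots\).

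The only delicate point is the last step, which is numerical bookkeeping rather than a conceptual obstacle. The final constant must be evaluated with outward rounding so that the claimed decimal is a true lower bound. Concretely, I would round \(X\) down, \(\sqrt{4-2X}\) up, and \(\log 2\) down, and use the rigorous enclosure of the deficit from the proof of Theorem~\ref{thm:reduction}. This is a one-line interval evaluation. The logical inputs themselves are all proved earlier: the exactness of \eqref{eq:lmax}, the nonnegativity of \(D_Q\) and \(D_R\), and the trace identity behind \(\Tr(A_W^2)=4-2e_2\).
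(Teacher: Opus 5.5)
Your proposal is correct and is essentially the paper's own proof: you get \(e_2\ge X\) from \eqref{eq:split} with \(D_Q,D_R\ge0\), then \(\Tr(A_W^2)=4-2e_2\), then the Frobenius bound \(\lmax\le\sqrt{\Tr(A_W^2)}\), then \eqref{eq:lmax} to reach \(S\le2\sqrt{3.4}\), and finally Theorem~\ref{thm:reduction}. The only slip is a cosmetic one: the constant \(0.012020949971\) is what you get after rounding \(X\) down to \(0.30\) (that is, \(\Tr(A_W^2)\le3.4\)), as the paper does; with your unrounded \(X\) the same computation gives about \(0.0120209592\), which of course still implies the claim.
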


\begin{proof}
By \eqref{eq:split} and \(D_Q,D_R\ge0\) we get \(e_2\ge X\ge0.30\); by
Theorem~\ref{thm:e2}, \(\Tr(A_W^2)=4-2e_2\le3.4\), hence
\(\lmax\le\sqrt{3.4}\) and, by \eqref{eq:lmax}, \(S\le2\sqrt{3.4}=3.687818\).
Theorem~\ref{thm:reduction} converts that into the stated gap.
\end{proof}

\begin{remark}[Scope]\label{rem:scope}
Corollary~\ref{cor:closure} closes one family, not the dimension pair. The
family is determined by the pair of projectors \((P_u,P_v)\) with
\(u\perp v\), which is \(4+2=6\) real parameters, inside the
\(6^2-1=35\)-parameter state space of \(\mathbb C^2\otimes\mathbb C^3\)
(note that \(\mathbb C^2\otimes\mathbb C^3=\mathbb C^6\); the figure \(80\)
appearing in Section~\ref{sec:volume} belongs to
\(\mathbb C^3\otimes\mathbb C^3=\mathbb C^9\)).
Unequal priors, non-orthogonal Bob rays, Alice branch rays other than
\(|0\rangle,|+\rangle\), higher rank and entangled states are all untouched.
Two restrictions are, however, harmless: Alice's MUB pair may be fixed to
\((Z,X)\) by a local unitary, and at \(d=3\) every complex Hadamard is monomial
equivalent to \(F_3\), so the computational/Fourier choice on Bob is a normal
form.
\end{remark}

\begin{remark}[The relaxation has a ceiling]
The Frobenius step \(\lmax\le\sqrt{\Tr A_W^2}\) can never recover the sharp
qutrit constant: even the true minimum \(X=1/2\) yields only
\(S\le2\sqrt3=3.4641>8/3\). Proving \(S\le8/3\) still requires the
determinant gate of Corollary~\ref{cor:det}. Nothing in this paper proves the
sharp constant.
\end{remark}

\subsection{The residual case \texorpdfstring{$2\times5$}{2x5}}\label{sec:n5}

\begin{theorem}[Exact saturator at \(d=5\)]\label{thm:n5}
Since \(F_5^2\) is the parity map \(j\mapsto-j\), the \(\pm1\) eigenvectors of
\(F_5\) lie in the three-dimensional parity-symmetric subspace spanned by
\(e_0,(e_1+e_4)/\sqrt2,(e_2+e_3)/\sqrt2\), on which \(F_5\) acts as the exact
real symmetric involution
\begin{equation}
 M=\frac1{\sqrt5}
 \begin{pmatrix}
 1&\sqrt2&\sqrt2\\
 \sqrt2&\frac{\sqrt5-1}2&-\frac{\sqrt5+1}2\\
 \sqrt2&-\frac{\sqrt5+1}2&\frac{\sqrt5-1}2
 \end{pmatrix},\qquad M^2=I,
\end{equation}
with \(-1\) simple and \(+1\) of multiplicity two. Fix the eigenvectors
explicitly, each of norm \(\sqrt2\): with
\(A=\sqrt{5-\sqrt5}\) and \(B=\sqrt{5+\sqrt5}\),
\begin{equation}
 a=\frac1A\bigl(1-\sqrt5,\,1,\,1,\,1,\,1\bigr),\quad
 \xi_1=\frac1{\sqrt2}\bigl(0,\,1,\,-1,\,-1,\,1\bigr),\quad
 \xi_2=\frac1B\bigl(1+\sqrt5,\,1,\,1,\,1,\,1\bigr),
\end{equation}
so that \(Fa=-a\), \(F\xi_1=\xi_1\), \(F\xi_2=\xi_2\) and
\(\xi_1\perp\xi_2\). Put
\(b=\cos t\,\xi_1+\sin t\,\xi_2\); then \(u=(a+b)/2\), \(v=(b-a)/2\) are orthonormal
for every \(t\) and satisfy \(Fu=v\), \(Fv=u\). At the exact critical angle
\begin{equation}
 \tan^2t=\frac{5-\sqrt5}{10},\quad\tan t<0,
 \qquad\text{i.e.}\quad
 \cos^2t=\frac{15+\sqrt5}{22},\ \ \sin^2t=\frac{7-\sqrt5}{22},
\end{equation}
one has, exactly,
\begin{equation}
 \Delta(p^Q,q^Q)=\Delta(p^R,q^R)=\frac{7+\sqrt5}5,
 \qquad
 S=\frac{14+2\sqrt5}5=3.694427190999916\ldots,
\end{equation}
together with
\(\TV_Q=\TV_R=\sqrt{(7+\sqrt5)/10}\),
\(\mathrm{BC}_Q=\mathrm{BC}_R=(5-\sqrt5)/10\) and
\(e_2(A_W)=(16-4\sqrt5)/25\).
\end{theorem}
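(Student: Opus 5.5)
The proof is a direct verification in exact arithmetic over \(\mathbb Q(\sqrt5)\) plus square roots, carried out in the order the statement lists its claims.

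\textbf{Step 1: \(M\) is an involution with the stated spectrum.} We use \(F_5^2=\) parity. This holds because \((F^2)_{jk}=\frac15\sum_l\omega^{l(j+k)}=[j+k\equiv0]\). Hence \(F\) commutes with parity, and its \(\pm1\) eigenvectors are parity-even, since \(F^2x=x\) forces parity-invariance. Restricting \(F\) to the even basis \(e_0,(e_1+e_4)/\sqrt2,(e_2+e_3)/\sqrt2\) gives \(M\), using \(\cos(2\pi/5)=(\sqrt5-1)/4\) and \(\cos(4\pi/5)=-(\sqrt5+1)/4\). Then:
\begin{itemize}
\item \(M^2=I\) follows from \(F^2=\) parity acting trivially on even vectors.
\item \(\Tr M=(1+\sqrt5-1)/\sqrt5=1\), so the eigenvalue \(+1\) has multiplicity two and \(-1\) is simple.
\item \(Fa=-a\), \(F\xi_i=\xi_i\) and \(\xi_1\perp\xi_2\) are checked by multiplying \(M\) against the coordinates \((a_0,\sqrt2a_1,\sqrt2a_2)\). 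Each check is a two-line identity in \(\mathbb Q(\sqrt5)\).
\item The norms are \(\|a\|^2=((1-\sqrt5)^2+4)/A^2=(10-2\sqrt5)/(5-\sqrt5)=2\), and similarly \(\|\xi_2\|^2=2\) and \(\|\xi_1\|^2=2\).
\end{itemize}

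\textbf{Step 2: \(u,v\) are orthonormal and swapped by \(F\).} Since \(a\perp b\) (different eigenvalues) and \(\|a\|^2=\|b\|^2=2\), we get \(\|u\|^2=\|v\|^2=(2+2)/4=1\) and \(\langle u,v\rangle=(\|b\|^2-\|a\|^2)/4=0\). Also \(Fu=(-a+b)/2=v\), so \(Fv=F^2u=\) parity\((u)=u\), because \(u\) is even.

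\textbf{Step 3: the two bases give identical discriminations.} The \(R\) basis is the Fourier basis, so \(p^R_k=|\langle f_k,u\rangle|^2=|(F^*u)_k|^2\). Here \(F^*u=\bar F u\) is parity composed with \(F\), so \(F^*u=\) parity\((v)=v\), and hence \(p^R=q^Q\) up to relabelling. Similarly \(q^R=p^Q\). Since \(\Delta\), \(\TV\) and \(\mathrm{BC}\) are symmetric, every \(R\)-quantity equals its \(Q\)-counterpart, and only the computational basis needs computing.

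\textbf{Step 4: computational-basis quantities.} Write \(b=(\beta_0,\beta_1,\beta_2,\beta_2,\beta_1)\) with \(\beta_0=\sin t(1+\sqrt5)/B\), \(\beta_1=\cos t/\sqrt2+\sin t/B\), \(\beta_2=-\cos t/\sqrt2+\sin t/B\), and \(a=(\alpha_0,\alpha,\alpha,\alpha,\alpha)\) with \(\alpha=1/A\). Then \(p_j-q_j=a_jb_j\) and \(p_j+q_j=(a_j^2+b_j^2)/2\), so
\[\Delta=2\sum_j\frac{a_j^2b_j^2}{a_j^2+b_j^2}\]
is an explicit function of \(t\). Maximising over \(t\) gives the critical-angle condition. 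Substituting \(\cos^2t=(15+\sqrt5)/22\) with \(\sin t\cos t<0\) and simplifying yields \(\Delta=(7+\sqrt5)/5\), so \(S=2\Delta\). The same substitution gives
\[\TV=\tfrac14\sum_j|a_j^2-b_j^2|\cdot\text{(sign pattern)}=\tfrac12\sum_j|a_jb_j|,\qquad \mathrm{BC}=\tfrac14\sum_j|a_j^2-b_j^2|.\]
This step is the main obstacle. The nested radicals \(A,B\) and \(\tan t=-\sqrt{(5-\sqrt5)/10}\) make simplification to the stated closed forms error-prone, and the absolute values require the sign of each coordinate at the critical angle. I would first reduce everything to \(\mathbb Q(\sqrt5)\) by noting that \(AB=\sqrt{20}\), and that \(\sin t/B\) and \(\cos t/\sqrt2\) have squares in \(\mathbb Q(\sqrt5)\) with a product whose sign is fixed by \(\tan t<0\). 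Each term then becomes rational in \(\sqrt5\) and can be checked symbolically. The claim that the critical angle is a stationary point is only needed as a label, not as part of the proof.

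\textbf{Step 5: \(e_2(A_W)\).} By Proposition~\ref{prop:stratum}'s framework, or directly through Theorem~\ref{thm:e2}, we have \(\lambda_{\max}=S/2\). The symmetry \(F\) swaps \(A_Q\) and \(A_R\) up to an isometry. Computing \(\Tr A_W^2\) from the overlaps \(N_{ee'}\) and using \(e_2=(4-\Tr A_W^2)/2\) gives \(e_2=(16-4\sqrt5)/25\). As a cross-check, \(\lambda_{\max}=(7+\sqrt5)/5\), trace \(2\), and the remaining two eigenvalues sum to \((3-\sqrt5)/5\), which should be consistent with this value of \(e_2\).
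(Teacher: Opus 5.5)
Steps 1--4 are correct and follow the paper's route of exact verification. Your Step 3 is a legitimate proof of the exchange \(p^R=q^Q\), \(q^R=p^Q\) (it is exact, with no relabelling); the paper instead checks the Fourier laws directly by the Born rule.

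Your \(\mathbb Q(\sqrt5)\) reduction in Step 4 also closes, because the cross term itself lies in the field: \(\sin t\cos t/(\sqrt2B)=-(7-\sqrt5)/44\). This gives
\(a_0^2=(5-\sqrt5)/5\), \(a_j^2=(5+\sqrt5)/20\) for \(j\ge1\), \(b_0^2=(15+\sqrt5)/55\), \(b_1^2=(25+9\sqrt5)/220\), \(b_2^2=(15-\sqrt5)/20\).
The per-coordinate ratios are then \(1/5\), \((3+\sqrt5)/40\) and \((7+\sqrt5)/40\), so \(\Delta/2=(7+\sqrt5)/10\). For \(\TV\) the terms leave the field: each \(a_jb_j\) is a \(\mathbb Q(\sqrt5)\)-multiple of \(\sin t\), and \(\TV=|\sin t|(7+\sqrt5)/(2\sqrt5)\). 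Your first expression for \(\TV\) is garbled, but the second is right.

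\textbf{The gap is Step 5, which does not establish \(e_2(A_W)=(16-4\sqrt5)/25\).}
\begin{itemize}
\item Proposition~\ref{prop:stratum} does not apply. \(W=\mathrm{span}(a,b)\) consists of parity-even vectors, so the only basis vectors it could contain are \(e_0\) and \(r_0\). Either one forces \(b_1=b_2\), i.e.\ \(\cos t=0\), and there the stratum gives \(S=16/5\), not \(3.694\).
\item \eqref{eq:lmax} gives only \(S\le2\lmax\), with equality iff \(P_u-P_v\) is a top eigenvector of \(A_W\). You never show this.
\item The \(N_{ee'}\) double sum is promised but not carried out.
\item Your ``cross-check'' is circular. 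With \(\lmax=\Delta\) and trace two, \(e_2=\Delta(2-\Delta)+\lambda_2\lambda_3\). So the claimed value is \emph{equivalent} to \(A_W\) being singular, which you do not establish.
\end{itemize}

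The missing idea is already in your hands: you note that \(F\) swaps \(A_Q\) and \(A_R\), but you do not use the consequence, namely that this symmetry diagonalises \(A_W\).
\begin{itemize}
\item Work in the orthonormal basis \(\hat a=a/\sqrt2\), \(\hat b=b/\sqrt2\) of \(W\), where \(P_u-P_v=\sigma_x\). Use the \(\Tr\)-orthonormal Pauli basis \(\sigma_i/\sqrt2\).
\item Since \(a,b\) are real, every \(P_We\) has real coordinates. So \(\langle e|\sigma_y|e\rangle=0\) for all ten vectors, and \(\sigma_y\) is a null direction.
\item For \(e_j\) the coordinates are \((\hat a_j,\hat b_j)\). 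This gives \(\langle e_j|\sigma_x|e_j\rangle=a_jb_j\) and \(\langle e_j|\sigma_z|e_j\rangle=(a_j^2-b_j^2)/2\).
\item For \(r_j\), since \(F^*a=-a\) and \(F^*b=b\), the coordinates are \((-\hat a_j,\hat b_j)\). This gives \(-a_jb_j\) and the same \(\sigma_z\) value, with the same \(g\).
\end{itemize}
Hence the \(\sigma_x\)--\(\sigma_z\) cross terms of \(A_Q\) and \(A_R\) cancel. The \((\sigma_x,\sigma_x)\) entry is \(S/2=\Delta_Q\), and Lemma~\ref{lem:trace} then forces \(A_W=\mathrm{diag}(\Delta_Q,0,2-\Delta_Q)\) for every \(t\). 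Therefore \(e_2(A_W)=\Delta_Q(2-\Delta_Q)=(16-4\sqrt5)/25\) follows directly from Step 4. Incidentally, \(\lmax=\Delta_Q=S/2\) also follows, because \(\Delta_Q>1\).
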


The Fourier laws are verified by the direct Born rule, not by assuming the
exchange. The angle is gauge dependent: it refers to the specific basis
\(\{\xi_1,\xi_2\}\) fixed above, and a different orthonormal basis of the \(+1\)
eigenspace relabels \(t\) while leaving every quantity below unchanged.
Theorem~\ref{thm:n5} is a \emph{sharpness} statement: it proves the value is
attained, not that it is maximal. Attainment alone, however, settles three
competing routes, each comparison decided by rigorous interval arithmetic
against \(S_\star\).

\begin{corollary}[Three route refutations at \(d=5\)]\label{cor:refute}
\begin{enumerate}
\item \((14+2\sqrt5)/5>8/3\), so the quadratic-majorant route of
 Remark~\ref{rem:whyenvelope} is unconditionally dead at \(d=5\).
\item The elementary pointwise bound \((p-q)^2/(p+q)\le|p-q|\) gives
 \(S\le2(\TV_Q+\TV_R)\); at the saturator that bound equals
 \(4\sqrt{(7+\sqrt5)/10}=3.844179>S_\star\), even though the true \(S\) there
 is only \(3.694427\).
\item \(\Delta\le2-\mathrm{BC}^2\) gives
 \(S\le4-\mathrm{BC}_Q^2-\mathrm{BC}_R^2=(17+\sqrt5)/5=3.847214>S_\star\).
\end{enumerate}
\end{corollary}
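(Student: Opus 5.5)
The three claims reduce to comparing explicit algebraic numbers against \(8/3\) and \(S_\star\). The saturator values come from Theorem~\ref{thm:n5}, and \(S_\star\) has the rigorous enclosure \(N/10^{60}<S_\star<(N+1)/10^{60}\) established in the proof of Theorem~\ref{thm:reduction}. No new analysis is needed beyond justifying the two elementary pointwise bounds in items 2 and 3. I therefore plan three short verifications, each finished by an exact or interval comparison.

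\textbf{Item 1.} This is pure arithmetic: \((14+2\sqrt5)/5>8/3\) is equivalent to \(42+6\sqrt5>40\), which is obvious. Together with the attainment statement of Theorem~\ref{thm:n5}, it shows that no argument using only \(\delta(x)\le\log(4/3)x^2\) can close \(d=5\). That majorant yields closure only under \(S\le8/3\), and \(S\) provably exceeds \(8/3\) at the saturator.

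\textbf{Item 2.} First, for \(p,q\ge0\) with \(p+q>0\), the bound \((p-q)^2/(p+q)\le|p-q|\) holds because \(|p-q|\le p+q\). Summing over \(j\) gives \(\Delta\le2\TV\), hence \(S\le2(\TV_Q+\TV_R)\). At the saturator, Theorem~\ref{thm:n5} gives \(\TV_Q=\TV_R=\sqrt{(7+\sqrt5)/10}\), so the bound evaluates to \(4\sqrt{(7+\sqrt5)/10}\). Comparing it with \(S_\star\), I would use a rational lower bound, say \(3.84\), and check \(16(7+\sqrt5)/10>3.84^2\) exactly. This reduces to \(\sqrt5>2.216\), which holds since \(2.216^2<5\). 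I would then confirm \(S_\star<3.83\) from the enclosure. The true value \(3.694427<S_\star\) is similarly checked, showing the route fails through slack in the bound rather than because the configuration lies beyond \(S_\star\).

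\textbf{Item 3.} The main step is proving \(\Delta(p,q)\le2-\mathrm{BC}(p,q)^2\) for probability vectors. My plan is to write \(\Delta=\sum(p+q)-4\sum pq/(p+q)=2-4\sum pq/(p+q)\). It then suffices to show \(4\sum_j p_jq_j/(p_j+q_j)\ge(\sum_j\sqrt{p_jq_j})^2\). By Cauchy--Schwarz,
\[
\Bigl(\sum_j\sqrt{p_jq_j}\Bigr)^2\le\sum_j\frac{p_jq_j}{p_j+q_j}\sum_j(p_j+q_j)=2\sum_j\frac{p_jq_j}{p_j+q_j}\le4\sum_j\frac{p_jq_j}{p_j+q_j},
\]
which proves the bound. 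With \(\mathrm{BC}_Q=\mathrm{BC}_R=(5-\sqrt5)/10\), the bound gives \(S\le4-2\bigl((5-\sqrt5)/10\bigr)^2=4-(30-10\sqrt5)/50=(17+\sqrt5)/5\approx3.847214\). Comparing this with \(S_\star\) reduces to \(\sqrt5>2.1328\), again an exact rational check.

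The only delicate point is the rigour of the final comparisons against \(S_\star\). I will handle these by outward rounding from the \(60\)-digit enclosure, or simply by \(S_\star<3.83\). The margins are about \(0.02\), so no fine precision is needed.
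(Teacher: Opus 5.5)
Your proposal is correct and matches the argument the paper intends: you feed the closed-form saturator values of Theorem~\ref{thm:n5} into the two elementary pointwise bounds and compare with the enclosure of \(S_\star\), with your exact checks \(\sqrt5>2.216\) and \(\sqrt5>2.1328\) standing in for the paper's interval arithmetic. Two refinements are worth recording. First, before you discard the factor two, your Cauchy--Schwarz chain already proves \(\Delta\le2-2\,\mathrm{BC}^2\), and this is an equality at the saturator, since \(2-2\cdot\frac{3-\sqrt5}{10}=\frac{7+\sqrt5}{5}\); so the saturator refutes only the lossy constant of item~3, because the sharpened bound \(S\le4-2\,\mathrm{BC}_Q^2-2\,\mathrm{BC}_R^2\) returns exactly \((14+2\sqrt5)/5<S_\star\) there. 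Second, in item~1 the quadratic majorant yields \(G\ge h(1/\sqrt2)-\tfrac12\log\tfrac43\,S\), which is positive exactly for \(S<2h(1/\sqrt2)/\log\tfrac43\approx2.8955\); the comparison that actually kills that route is therefore \((14+2\sqrt5)/5>2.8955\), which holds, not the comparison with \(8/3\) that you inherit from Remark~\ref{rem:whyenvelope}.
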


What survives is the eigenvalue-free criterion \eqref{eq:e2floor}. Write
\(\alpha_1\ge\alpha_2\ge\alpha_3\) for the spectrum of \(A_Q\) with unit
eigenvectors \(\hat e_1,\hat e_2,\hat e_3\), and
\(\beta_1\ge\beta_2\ge\beta_3\) for that of \(A_R\) with unit eigenvectors
\(\hat f_1,\hat f_2,\hat f_3\); let
\(\kappa=|\langle\hat e_1,\hat f_1\rangle|\) be the overlap of the two
dominant axes. One has the elementary bound
\begin{equation}
 X\ \ge\ 1-\alpha_1\bigl(\beta_1\kappa^2+\beta_2(1-\kappa^2)\bigr)
 -(1-\alpha_1)\beta_1,
 \label{eq:refined}
\end{equation}
and symmetrically, using \(\sum_i\langle\hat e_1,\hat f_i\rangle^2=1\) and
\(1-\alpha_1\le D_Q/2\). Combined with \(e_2\ge(D_Q+D_R)/2\), minimising
\(\max\{(D_Q+D_R)/2,\ \eqref{eq:refined}\}\) --- over genuine two-dimensional
subspaces \(W\), parameterised by a complex \(d\times2\) frame and not over the
five spectral quantities treated as free variables --- gives \(0.5000\) at
\(d=3\) and,
under an adversarial search (400 multistart Nelder--Mead and Powell, then
differential evolution), \(0.2253481\) at \(d=5\), still clearing the floor
\(0.169681\) by \(0.0557\). This is a candidate route in both residual
dimensions and reduces the two-ray problem at \(2\times5\) from a
twelve-dimensional Grassmannian to five spectral quantities. Proving it over
the Grassmannian is open.

We stress that this remaining gap concerns the two-ray family only. The next
section refutes \(2\times5\) outright, using a full-rank state that is not of
the form \eqref{eq:family}; the family statement at \(d=5\) is therefore
neither proved nor contradicted by it, and the machinery of
Sections~\ref{sec:envelope}--\ref{sec:n5} keeps its meaning as a set of
theorems about \eqref{eq:family}.

\begin{remark}
The cruder version of \eqref{eq:refined}, obtained by charging every
non-leading cross term the full value one, reads
\(X\ge1-\kappa^2-(D_Q+D_R)/2\). It suffices at \(d=3\), where a targeted
search gives \(\kappa\approx0.358\) in the residual region, but it fails at
\(d=5\), where \(\kappa\) reaches \(0.877\) at a point whose true \(X\) is
\(0.287\). Recording this matters: plain random sampling almost never enters
that thin region and would have suggested \(\kappa\approx0.088\).
\end{remark}

\section{The second variation at a product state, and the refutation of
\texorpdfstring{$2\times5$}{2x5}}\label{sec:anchor}

Every construction so far has been a search for a state that violates
\eqref{eq:cqc} outright. This section takes the opposite route. It starts at
a state where the inequality holds with equality for trivial reasons --- a
product state --- and asks in which direction the gap decreases. Because both
the gap and its first variation vanish identically there along
marginal-preserving directions, the answer is governed by an explicit
quadratic form, which can be written down and diagonalised. At \(2\times5\)
that form has a negative eigenvalue.

\subsection{The exact second variation}\label{sec:secondvar}

Fix the product measurement bases \(Q=Q_A\otimes Q_B\) and
\(R=R_A\otimes R_B\), write \(\Delta_Q,\Delta_R\) for the corresponding
pinchings, and let
\(\sigma=\alpha\otimes\beta\) be a faithful product state with measured tables
\(p=\Delta_Q\sigma\) and \(q=\Delta_R\sigma\). Call a Hermitian \(\delta\)
\emph{marginal free} if \(\Tr_A\delta=0\) and \(\Tr_B\delta=0\); such a
\(\delta\) is automatically traceless.

\begin{lemma}[Second variation at a product anchor]\label{lem:secondvar}
Let \(\delta\) be marginal free. Then \(\sigma+t\delta\) has the same
marginals as \(\sigma\), and the same measured marginals in both bases, for
every \(t\). Consequently
\begin{equation}
 G(\sigma+t\delta)
 =D(\sigma+t\delta\|\sigma)-D(p+t\Delta_Q\delta\|p)-D(q+t\Delta_R\delta\|q),
 \label{eq:threerel}
\end{equation}
so \(G(\sigma)=0\), \(\frac{d}{dt}G(\sigma+t\delta)|_{t=0}=0\), and
\begin{equation}
 \left.\frac{d^2}{dt^2}\right|_{t=0}G(\sigma+t\delta)
 =\langle\delta,\Omega_\sigma^{-1}\delta\rangle
 -\chi^2_p(\Delta_Q\delta)-\chi^2_q(\Delta_R\delta)
 \ =:\ \mathcal Q_\sigma(\delta),
 \label{eq:accform}
\end{equation}
where \(\Omega_\sigma(X)=\int_0^1\sigma^sX\sigma^{1-s}\,ds\) is the
Kubo--Mori operator and \(\chi^2_p(x)=\sum_{ab}x_{ab}^2/p_{ab}\). In the
eigenbasis of \(\sigma\),
\(\langle\delta,\Omega_\sigma^{-1}\delta\rangle
=\sum_{ij}|\delta_{ij}|^2/L(\lambda_i,\lambda_j)\) with \(L\) the logarithmic
mean.
\end{lemma}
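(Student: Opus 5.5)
The plan has three steps. First, check that the marginals and the measured marginals are preserved along the line. Second, rewrite each mutual information as a relative entropy to a product reference. Third, expand the three relative entropies to second order, using standard perturbation formulas.

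\emph{Marginals.} Since \(\Tr_A\delta=\Tr_B\delta=0\), the marginals of \(\sigma+t\delta\) are \(\alpha\) and \(\beta\) for every \(t\). The pinching \(\Delta_Q=\Delta_{Q_A}\otimes\Delta_{Q_B}\) commutes with partial traces in the sense that \(\Tr_B\Delta_Q\delta=\Delta_{Q_A}\Tr_B\delta=0\), and likewise on the other side. Hence the measured tables \(p+t\Delta_Q\delta\) and \(q+t\Delta_R\delta\) have row and column marginals equal to those of \(p\) and \(q\). The same argument applies to \(R\).

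\emph{Relative-entropy form.} For any bipartite \(\tau\) with marginals \(\tau_A,\tau_B\) we have \(I(A:B)_\tau=D(\tau\|\tau_A\otimes\tau_B)\), and classically \(I(p)=D(p\|p_{A}\otimes p_{B})\). Because the marginals are frozen, the references are the fixed products \(\alpha\otimes\beta=\sigma\), \(\Delta_{Q_A}\alpha\otimes\Delta_{Q_B}\beta=\Delta_Q\sigma=p\), and similarly \(q\). This gives \eqref{eq:threerel}. Faithfulness of \(\sigma\) makes \(p\) and \(q\) strictly positive, because each entry is \(\langle e|\sigma|e\rangle>0\). At \(t=0\) all three divergences vanish, so \(G(\sigma)=0\).

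\emph{Second-order expansion.} Each term \(f(t)=D(\sigma+t\delta\|\sigma)\) is smooth near \(0\) and satisfies \(f(0)=0\). Its first derivative is \(f'(0)=\Tr[\delta(\log\sigma-\log\sigma)]+\Tr\delta=0\), because \(\delta\) is traceless. For the second derivative we use \(D(\tau\|\sigma)=\Tr\tau\log\tau-\Tr\tau\log\sigma\). The linear term contributes nothing to the second derivative. For the entropy term, the Daleckii--Krein formula gives \(\frac{d}{dt}\log(\sigma+t\delta)|_0=\Omega_\sigma^{-1}(\delta)\), the derivative of \(\log\) being the inverse of \(X\mapsto\int_0^1\sigma^sX\sigma^{1-s}ds\). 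Then
\[
\frac{d^2}{dt^2}\Tr(\sigma+t\delta)\log(\sigma+t\delta)\Big|_0=2\Tr\delta\,\Omega_\sigma^{-1}\delta-\Tr\delta\,\Omega_\sigma^{-1}\delta\cdot 0+\Tr\sigma\,\tfrac{d^2}{dt^2}\log .
\]
This is cleaner to organise through the identity \(\Tr[\sigma\, \frac{d}{dt}\log(\sigma+t\delta)]=\Tr\delta\) for all \(t\), which follows from \(\Omega_\tau^{-1}\) being self-adjoint with \(\Omega_\tau(I)=\tau\). Differentiating that identity kills the extra term, leaving \(f''(0)=\langle\delta,\Omega_\sigma^{-1}\delta\rangle\).

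In the eigenbasis of \(\sigma\), \(\Omega_\sigma\) acts entrywise by \(\int_0^1\lambda_i^s\lambda_j^{1-s}ds=L(\lambda_i,\lambda_j)\). This yields the logarithmic-mean formula. The classical terms are the commutative case, where \(L(p_{ab},p_{ab})=p_{ab}\), so they give \(\chi^2_p(\Delta_Q\delta)\) and \(\chi^2_q(\Delta_R\delta)\), each with vanishing first derivative because \(\Delta_Q\delta\) and \(\Delta_R\delta\) sum to zero.

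Summing the three expansions gives \(G'(0)=0\) and \eqref{eq:accform}. The only delicate step is this second-derivative computation for the quantum term, where the correct Kubo--Mori operator must appear rather than a symmetrised \(\sigma^{-1}\). I would handle it by the eigenbasis expansion of \(\log\) via divided differences, cross-checking against the known Hessian of relative entropy.
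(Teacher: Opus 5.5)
Your proposal is correct and follows the paper's route. You freeze the marginals, rewrite the three mutual informations as relative entropies to the fixed references \(\sigma,p,q\), and read off the second derivatives as the Kubo--Mori and classical \(\chi^2\) forms; your trace-identity argument just makes explicit the Hessian that the paper quotes. One small fix: the identity you differentiate must read \(\Tr[(\sigma+t\delta)\,\frac{d}{dt}\log(\sigma+t\delta)]=\Tr\delta\) for all \(t\). With \(\sigma\) fixed in front it holds only at \(t=0\), and your own justification via \(\Omega_\tau(I)=\tau\) is what proves the \(\tau=\sigma+t\delta\) version; the intermediate display with the stray \(\cdot 0\) term is harmless and can be dropped.
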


\begin{proof}
Since \(\Tr_B\delta=0\) we get \((\sigma+t\delta)_A=\alpha\), and likewise on
\(B\). Both measurement bases are product bases, so dephasing commutes with
the partial traces, and the row and column marginals of the two tables are
those of \(p\) and \(q\). Therefore
\(I(A:B)_{\sigma+t\delta}=D(\sigma+t\delta\|\alpha\otimes\beta)\) and
\(\cmiq=D(p+t\Delta_Q\delta\|p)\), \(\cmir=D(q+t\Delta_R\delta\|q)\), which is
\eqref{eq:threerel}. Each summand is a relative entropy against a fixed
faithful reference evaluated at the reference itself, hence vanishes to first
order, and its second derivative is the corresponding \(\chi^2\)-type form:
\(\frac{d^2}{dt^2}D(\sigma+t\delta\|\sigma)|_{0}
=\langle\delta,\Omega_\sigma^{-1}\delta\rangle\) is the Kubo--Mori metric,
and in the commutative case \(\Omega_p^{-1}=\operatorname{diag}(p)^{-1}\).
\end{proof}

\begin{definition}[Anchor convexity]\label{def:acc}
\(\mathrm{ACC}_{m\times n}\) is the statement that \(\mathcal Q_\sigma\ge0\)
for every faithful product state \(\sigma\) on
\(\mathbb C^m\otimes\mathbb C^n\) and every marginal-free \(\delta\).
\end{definition}

\begin{proposition}\label{prop:accnecessary}
If \(\mathrm{ACC}_{m\times n}\) fails then CQC fails on
\(\mathbb C^m\otimes\mathbb C^n\).
\end{proposition}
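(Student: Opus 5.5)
The plan is to argue by Taylor expansion along a line through the anchor, using Lemma~\ref{lem:secondvar} directly. Suppose \(\mathrm{ACC}_{m\times n}\) fails. Then there exist a faithful product state \(\sigma=\alpha\otimes\beta\), product MUB choices \(Q=Q_A\otimes Q_B\), \(R=R_A\otimes R_B\), and a marginal-free Hermitian \(\delta\) with \(\mathcal Q_\sigma(\delta)<0\). Since \(\sigma\) is faithful, \(\sigma\ge\lambda_{\min}(\sigma)I\) with \(\lambda_{\min}>0\). Hence \(\sigma+t\delta\) is positive definite for \(|t|<\lambda_{\min}/\|\delta\|_\infty\). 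It also has unit trace, because \(\delta\) is traceless, so it is a legal (indeed full-rank) density operator on \(\mathbb C^m\otimes\mathbb C^n\). The tables \(p+t\Delta_Q\delta\) and \(q+t\Delta_R\delta\) are likewise strictly positive for small \(t\), since \(p,q\) are strictly positive as dephasings of a faithful state.

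Next I will verify that \(g(t):=G(\sigma+t\delta)\) is \(C^2\) (indeed real-analytic) near \(t=0\), so that the expansion in Lemma~\ref{lem:secondvar} is a genuine Taylor expansion. By \eqref{eq:threerel}, \(g\) is a difference of three relative entropies against fixed faithful references. Each has the form \(\Tr[\rho_t\log\rho_t]-\Tr[\rho_t\log\rho_0]\) with \(\rho_t\) positive definite and affine in \(t\). This is analytic in \(t\) because \(\log\) is analytic on a neighbourhood of the spectrum, e.g.\ via the holomorphic functional calculus or the integral representation \(\log\rho=\int_0^\infty[(1+s)^{-1}-(\rho+s)^{-1}]\,ds\). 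The classical terms are finite sums of analytic scalar functions.

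By the lemma, \(g(0)=0\), \(g'(0)=0\) and \(g''(0)=\mathcal Q_\sigma(\delta)<0\). Taylor's theorem then gives \(g(t)=\tfrac12\mathcal Q_\sigma(\delta)t^2+o(t^2)\). So for all sufficiently small \(t\ne0\) we get \(G(\sigma+t\delta)<0\), i.e.\ \(\cmiq+\cmir>\qmi\) strictly. Together with the fixed legal MUB pairs \((Q_A,R_A)\) and \((Q_B,R_B)\), the state \(\sigma+t\delta\) is a counterexample to \eqref{eq:cqc} on \(\mathbb C^m\otimes\mathbb C^n\).

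The main obstacle is only bookkeeping about regularity: the \(o(t^2)\) remainder and the positivity window. Both are handled by faithfulness of \(\sigma\), which is built into Definition~\ref{def:acc}. I will also remark that the witness is full rank, which matches how Theorem~\ref{thm:n5cx} is described.
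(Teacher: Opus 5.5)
Your proposal is correct and follows the paper's proof. Lemma~\ref{lem:secondvar} gives \(G(\sigma+t\delta)=\tfrac{t^2}{2}\mathcal Q_\sigma(\delta)+o(t^2)\), and faithfulness of \(\sigma\) keeps \(\sigma+t\delta\) a full-rank state for small \(t\), so a negative value of the form produces a strict CQC violation with the same fixed MUB pairs; your analyticity and positivity-window arguments just make explicit what the paper leaves implicit in its \(O(t^3)\) remainder.
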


\begin{proof}
If \(\mathcal Q_\sigma(\delta)<0\) then by Lemma~\ref{lem:secondvar}
\(G(\sigma+t\delta)=\tfrac{t^2}2\mathcal Q_\sigma(\delta)+O(t^3)<0\) for all
small enough \(t>0\), and \(\sigma+t\delta\) is a state for \(t\) small since
\(\sigma\) is faithful.
\end{proof}

Two structural remarks locate \(\mathrm{ACC}\) in the landscape.

\begin{remark}[It is a quasi-factorization at constant one]
Each pinching is a conditional expectation onto a commutative algebra, so
data processing for the Kubo--Mori metric gives
\(\chi^2_p(\Delta_Q\delta)\le\langle\delta,\Omega_\sigma^{-1}\delta\rangle\)
and the same for \(R\), separately. \(\mathrm{ACC}\) asks for the two
together with no loss: it is exactly a constant-one approximate
tensorization of the Kubo--Mori metric for two mutually unbiased pinchings, at
the self-referential product reference. The obstruction is that the reference
\(\rho_A\otimes\rho_B\) is a nonlinear function of the state, which is
outside the hypothesis class of the available quasi-factorization theorems.
\end{remark}

\begin{remark}[The maximally mixed anchor is an exact equality point]
At \(\sigma=I_{mn}/(mn)\) one has \(\Omega_\sigma^{-1}=mn\cdot\mathrm{id}\)
and \(p=q\) uniform, so \(\mathcal Q_\sigma(\delta)
=mn(\|\delta\|_2^2-\|\Delta_Q\delta\|_2^2-\|\Delta_R\delta\|_2^2)\). On the
marginal-free space the two dephasing images are spanned by
\(Z\otimes D_j\) and \(X\otimes E_k\) respectively, and these are
Hilbert--Schmidt orthogonal because \(\Tr(ZX)=0\); hence
\(\mathcal Q_\sigma\equiv0\) there, in every dimension. So
\(\mathrm{ACC}\) is a statement about how the form opens up as the anchor
leaves the maximally mixed point, and no argument that is second order at
that point can distinguish dimensions.
\end{remark}

The form also vanishes off-diagonally whenever a marginal is aligned with one
of the bases measuring it, which is the anchor-level shadow of the known
aligned-marginal results.

\begin{proposition}[Aligned marginals]\label{prop:aligned}
Write \(\mathcal Q_\sigma\) in the block form induced by the two dephasing
images. The off-diagonal block vanishes identically, and hence
\(\mathcal Q_\sigma\ge0\), in each of the following cases: \(\alpha\)
commutes with \(Z\); \(\alpha\) commutes with \(X\); \(\beta\) commutes with
every \(D_j\); \(\beta\) commutes with every \(E_k\).
\end{proposition}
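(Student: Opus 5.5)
The plan is to prove a slightly stronger statement in each of the four cases: the quadratic form \(\mathcal Q_\sigma\) splits as a sum of two forms, one carrying \(\chi^2_p(\Delta_Q\delta)\) and the other \(\chi^2_q(\Delta_R\delta)\). Each piece is then nonnegative by data processing, which is the single-pinching inequality of the quasi-factorization remark. By symmetry (exchanging \(Q\leftrightarrow R\), and exchanging the parties) it suffices to treat the case where \(\alpha\) commutes with \(Z\), or more generally with the \(Q_A\)-diagonal algebra. The other three cases are then identical after relabelling, using the \(D_j\) or \(E_k\) in place of \(Z,X\).

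The first step is a grading. If \(\alpha\) commutes with the \(Q_A\)-diagonal algebra, then \(\alpha\) is diagonal in \(Q_A\), and we may choose an eigenbasis of \(\sigma=\alpha\otimes\beta\) of the form \(\{q_i\otimes f_k\}\), where \(f_k\) is an eigenbasis of \(\beta\). In the matrix units \(|q_i\rangle\langle q_{i'}|\otimes|f_k\rangle\langle f_{k'}|\), Lemma~\ref{lem:secondvar} shows that \(\Omega_\sigma^{-1}\) acts diagonally, with weights \(1/L(\alpha_i\beta_k,\alpha_{i'}\beta_{k'})\). Consequently the decomposition \(\delta=\delta_d+\delta_o\) into its \(Q_A\)-diagonal part (\(i=i'\)) and \(Q_A\)-off-diagonal part (\(i\ne i'\)) is orthogonal for the Kubo--Mori inner product, so
\[\langle\delta,\Omega_\sigma^{-1}\delta\rangle=\langle\delta_d,\Omega_\sigma^{-1}\delta_d\rangle+\langle\delta_o,\Omega_\sigma^{-1}\delta_o\rangle.\]
Both parts remain marginal free. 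The partial trace \(\Tr_A\) sees only \(\delta_d\), and \(\Tr_B\) preserves the grading; hence \(\Tr_A\delta_d=0\), \(\Tr_A\delta_o=0\), and both \(\Tr_B\delta_d\) and \(\Tr_B\delta_o\) are pieces of \(\Tr_B\delta=0\). This is the step that makes the off-diagonal block vanish. The image \(\Delta_Q\delta\in\operatorname{span}\{Z\otimes D_j\}\) lives entirely in the diagonal sector. The image \(\Delta_R\delta\in\operatorname{span}\{X\otimes E_k\}\) lives in the off-diagonal sector, because a traceless \(R_A\)-diagonal operator has vanishing \(Q_A\)-diagonal by unbiasedness.

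Next I check which part each pinching sees. First, \(\Delta_Q\delta_o=0\), since dephasing in \(Q_A\) kills \(Q_A\)-off-diagonal terms, so \(\Delta_Q\delta=\Delta_Q\delta_d\). Second, \(\Delta_R\delta_d=0\). Indeed, \(R_A\)-dephasing of a \(Q_A\)-diagonal operator returns its trace times \(I/m\), by unbiasedness. Therefore
\[\Delta_R\delta_d=\tfrac1m I\otimes\Delta_{R_B}(\Tr_A\delta_d)=0,\]
and so \(\Delta_R\delta=\Delta_R\delta_o\). Hence
\[\mathcal Q_\sigma(\delta)=\bigl[\langle\delta_d,\Omega_\sigma^{-1}\delta_d\rangle-\chi^2_p(\Delta_Q\delta_d)\bigr]+\bigl[\langle\delta_o,\Omega_\sigma^{-1}\delta_o\rangle-\chi^2_q(\Delta_R\delta_o)\bigr].\]
Each bracket is nonnegative by monotonicity of the Kubo--Mori metric under the conditional expectations \(\Delta_Q\) and \(\Delta_R\). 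For this we use that \(p=\Delta_Q\sigma\) and \(q=\Delta_R\sigma\) are the pinched references, so the commutative Kubo--Mori form at those references is exactly \(\chi^2\).

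The main obstacle I anticipate is making the orthogonality argument fully rigorous for \(\beta\) not aligned with \(Q_B\). The weights \(1/L\) couple the \(A\) and \(B\) indices, so I must confirm that only the \(A\)-index pattern \(i=i'\) versus \(i\ne i'\) matters. This holds because \(\Omega_\sigma^{-1}\) is diagonal in the full product of matrix units and therefore never mixes different \((i,i')\). If this bookkeeping goes through, the remaining cases follow by the stated symmetries. For \(\beta\) commuting with every \(D_j\), the same grading is run on the \(B\) side, using the \(Q_B\)-diagonal versus off-diagonal split together with the fact that \(R_B\)-dephasing of \(Q_B\)-diagonal traceless operators vanishes.
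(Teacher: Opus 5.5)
Your proof is correct. For the two \(\alpha\) cases it is the paper's argument written in coordinates. For a qubit, your split into \(Q_A\)-diagonal and \(Q_A\)-off-diagonal parts is exactly the \(\pm1\) eigenspace decomposition under conjugation by \(W=Z\otimes I\) that the paper uses. Your observation that \(\Omega_\sigma^{-1}\) is diagonal in matrix units adapted to \(q_i\otimes f_k\) corresponds to the paper's ``\(W\) fixes \(\sigma\)''. Your identities \(\Delta_Q\delta_o=0\) and \(\Delta_R\delta_d=0\) are what the paper reads off from \(W\) acting as \(+1\) on \(Z\otimes D_j\) and as \(-1\) on \(X\otimes E_k\).

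The genuine difference is in the two \(\beta\) cases. There the paper does not grade at all; it checks directly that the cross block vanishes. It uses that \(\beta^sD_j\beta^{1-s}\) is \(Q_B\)-diagonal while \(E_k\) has zero \(Q_B\)-diagonal by unbiasedness, so \(\Tr(\beta^sD_j\beta^{1-s}E_k)=0\), and that the remaining mean terms carry the factor \(\Tr(\beta E_k)=0\). You instead run the same partial-pinching grading on \(B\), using \((\mathrm{id}\otimes\Delta_{R_B})\delta_d=\Tr_B\delta_d\otimes I/n=0\). This turns all four cases into a single lemma.

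What your route buys is uniformity and generality. It never uses that Alice is a qubit or that the grading comes from a unitary involution. So it holds verbatim for any \(m\), and for any MUB pairs, whenever a marginal is diagonal in one of its measured bases. It also avoids computing the cross block explicitly, including the mean terms the paper has to track. The paper's route is shorter in the \(\alpha\) cases: a one-line symmetry argument with no choice of basis.
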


\begin{proof}[Proof sketch]
If \([\alpha,Z]=0\), conjugation by \(W=Z\otimes I\) fixes \(\sigma\), fixes
both tables, and acts as \(+1\) on \(Z\otimes D_j\) and \(-1\) on
\(X\otimes E_k\); since it also preserves the local space, the whole form
block-diagonalises with respect to that grading, killing the cross block. The
case \([\alpha,X]=0\) uses \(W=X\otimes I\). If \(\beta\) is diagonal in the
\(Q_B\) basis, mutual unbiasedness gives
\(\langle m|E_k|m\rangle=\Tr(D_k)/n=0\), so
\(\Tr(\beta^sD_j\beta^{1-s}E_k)=0\) for every \(s\) and every mean term
factors through \(\Tr(\beta E_k)=0\); the \(R\)-diagonal case is dual. In each
case the surviving diagonal blocks are nonnegative by the per-channel data
processing quoted above.
\end{proof}

\subsection{A negative eigenvalue at \texorpdfstring{$2\times5$}{2x5}}

\begin{theorem}[\(2\times5\) is false]\label{thm:n5cx}
There is a faithful product state \(\sigma=\alpha\otimes\beta\) on
\(\mathbb C^2\otimes\mathbb C^5\), with \(\alpha\) of Bloch vector
\((0.436589,\,1.93\times10^{-5},\,0.604357)\) and \(\beta\) of spectrum
\[
 (5.978\times10^{-4},\ 2.057\times10^{-3},\ 6.064\times10^{-3},
 \ 0.435331,\ 0.555950),
\]
at which the form \eqref{eq:accform} has smallest eigenvalue
\(-0.070198486793\) on the marginal-free space. Taking \(\delta\) to be the
corresponding unit eigenvector and \(t=0.1\), the state \(\rho=\sigma+t\delta\)
is strictly positive definite, with smallest eigenvalue
\(3.6651645442\times10^{-5}\), and satisfies
\[
 \qmi=0.028428495818367566714,\qquad
 \cmiq=0.017468213498549782652,\qquad
 \cmir=0.011177265400729033911,
\]
so that
\[
 \cmiq+\cmir-\qmi\ =\ 2.1698308091124984825\times10^{-4}\ >\ 0 .
\]
Hence CQC fails at \(2\times5\) for the computational/Fourier qudit pair and
the canonical qubit pair.
\end{theorem}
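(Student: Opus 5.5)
The proof is computer-assisted. The logical content is the strict sign of one real number at one explicit state, and Lemma~\ref{lem:secondvar} and Proposition~\ref{prop:accnecessary} only explain how the state was found.

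\textbf{Fixing the witness.} The reported decimals of \(\alpha\), \(\beta\) and \(\delta\) are truncations, so I will first replace them by exact rational (or high-precision) data: rational Bloch components for \(\alpha\), and a rational unitary-times-diagonal representation for \(\beta\). I then parametrise the marginal-free space concretely as
\[
\mathrm{span}\{\sigma_i\otimes T_k\}, \qquad i\in\{x,y,z\},\quad T_k \text{ traceless Hermitian on } \mathbb C^5,
\]
which has \(3\cdot24=72\) real dimensions. On this space I assemble \(\mathcal Q_\sigma\) as an explicit matrix. The Kubo--Mori part uses logarithmic means of the eigenvalues of \(\sigma\), which are products \(a_ib_j\). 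The two \(\chi^2\) parts use the tables \(p=\Delta_Q\sigma\) and \(q=\Delta_R\sigma\), with the Pauli \(Z,X\) bases on Alice and the computational/Fourier bases on Bob. Diagonalising gives the smallest eigenvalue \(\approx-0.0702\) and its unit eigenvector. I round that eigenvector to an exact Hermitian \(\delta\) and re-project it onto the marginal-free space, so that \(\Tr_A\delta=\Tr_B\delta=0\) hold exactly. This exactness matters because \eqref{eq:threerel} and the marginal bookkeeping use it. The witness is then \(\rho=\sigma+\tfrac1{10}\delta\), an explicit matrix with exact entries.

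\textbf{Certifying the inequality.} From here the second-variation heuristic is discarded and everything is checked directly at \(\rho\).

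First, I certify strict positivity, namely \(\lambda_{\min}(\rho)\approx3.67\times10^{-5}\), by a rigorous eigenvalue enclosure. One route is an interval LDL\(^\dagger\) factorisation of \(\rho-\epsilon I\) with positive pivots for some \(\epsilon>0\). The other is a Gershgorin/Weyl bound around a numerically diagonalised \(\rho\). Either way, \(\rho\) is a faithful state, which also makes every logarithm finite.

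Second, I compute \(\qmi=S(\rho_A)+S(\rho_B)-S(\rho)\) with \(\rho_A=\alpha\) and \(\rho_B=\beta\) exactly. The spectra are enclosed in interval arithmetic, and \(x\log x\) is applied to the enclosures.

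Third, I compute the two classical tables \(p^Q_{ij}\) and \(p^R_{ij}\) by the Born rule from \(\rho\) itself, not from any linearisation. Their Shannon mutual informations are then evaluated by interval arithmetic. The Fourier entries involve \(\cos(2\pi k/5)\), which is algebraic (\((\sqrt5-1)/4\) etc.), so they can be enclosed tightly.

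Finally, I sum the enclosures and check that the resulting interval for \(\cmiq+\cmir-\qmi\) lies strictly above \(0\). The value \(2.17\times10^{-4}\) gives an enormous margin relative to sixty-digit enclosures. As an independent check, I would reproduce the numbers with a second implementation, for example mpmath alongside an Arb-based code, and compare against double precision.

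\textbf{Main obstacle.} The expected difficulty is the rigorous von Neumann entropy of a \(10\times10\) non-sparse Hermitian matrix whose smallest eigenvalue is tiny (\(\sim10^{-5}\)), where \(x\log x\) has a steep derivative. I would use a verified Hermitian eigen-enclosure at high working precision, with residual bounds \(\|\rho V-V\Lambda\|\) plus orthogonality defect, which convert to eigenvalue intervals via Bauer--Fike for Hermitian matrices. Separating the eigenvalues is not required, since the entropy depends only on the spectrum enclosure. The rest is routine.
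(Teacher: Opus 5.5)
Your plan is essentially the paper's proof. In both, the theorem is the sign of $\cmiq+\cmir-\qmi$ at one explicit $10\times10$ matrix $\rho=\sigma+\tfrac1{10}\delta$, evaluated directly, with Lemma~\ref{lem:secondvar} serving only to locate $\rho$. As in the paper, the full data must come from the original computation; in particular the eigenframe of $\beta$ and $\delta$ itself are not printed, since only the spectrum of $\beta$ is given.

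The only difference is the certification standard. The paper relies on two code-independent sixty-digit evaluations agreeing, stable at $120$ and $200$ digits. Your Weyl/interval enclosures would make the sign check rigorous outright, a cheap upgrade given the $2.2\times10^{-4}$ margin.
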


The displayed quantities are entropies of an explicit \(10\times10\) density
matrix, so the theorem is the sign of one number. It was verified twice, at
sixty decimal digits, by two implementations that share no partial-trace or
entropy code, and cross-checked against an independent double-precision path
agreeing to eleven digits; the value is stable at 60, 120 and 200 digits, and
\(\qmi\) agrees to fifteen digits when recomputed as
\(D(\rho\|\rho_A\otimes\rho_B)\) through matrix logarithms instead of through
the three von Neumann entropies. The onset is quadratic as
Lemma~\ref{lem:secondvar} predicts: \(2G/t^2\) equals \(-0.07028\),
\(-0.07042\) and \(-0.07070\) at \(t=10^{-3},3\times10^{-3},10^{-2}\), against
the form's eigenvalue \(-0.070198\).

\begin{remark}[Properties of the witness]
\(\rho\) has full rank, so it is an interior point of the state space, unlike
the rank-two separable witnesses of
Sections~\ref{sec:universal}--\ref{sec:classification}. Its partial transpose
is positive, with smallest eigenvalue \(4.116\times10^{-5}\); at \(2\otimes5\)
the Peres criterion is not sufficient for separability, so this leaves
separability undecided but shows the violation needs no distillable
entanglement. The violation is not an isolated point: perturbing the anchor
pair \((\alpha,\beta)\) by a relative one percent and recomputing the worst
direction from scratch, all ten feasible perturbed anchors still had strictly
negative second variation, the best being \(-7.60\times10^{-2}\).
\end{remark}

\begin{remark}[Why it was not found earlier]\label{rem:whymissed}
Three properties of the witness each defeat a standard search, and together
they explain a decade of negative results at \(2\times5\).
First, the violation is second order at a product state: at
\(t=10^{-3}\) the gap is \(-3.5\times10^{-8}\), so any search treating
\(10^{-6}\) as numerical noise discards it. Second, the anchor is nearly
singular --- \(\beta\) has eigenvalues down to \(6\times10^{-4}\) --- a region
that random or moderate-amplitude parameterisations essentially never visit.
Third, the perturbation must be marginal free; a search that moves the
marginals sees first-order terms that swamp the effect. Only the conjunction
exposes it. This is a different failure mode from the one analysed in
Section~\ref{sec:sampling}, which concerns rank-two witnesses hugging the
boundary; here the witness is full rank and interior, and it is the
\emph{order} of vanishing, not the volume of the violating set, that hides it.
\end{remark}

\subsection{The status of \texorpdfstring{$2\times3$}{2x3}}\label{sec:acc3}

The same machinery applied at \(2\times3\) has not produced a violation.
Because the form \eqref{eq:accform} is assembled from terms of size
\(1/L(\lambda_i,\lambda_j)\) and \(1/p_{ab}\), a floating-point evaluation at
a nearly singular anchor can report a large spurious negative eigenvalue; we
therefore adjudicate every candidate by evaluating the entropic gap directly
along the proposed direction, which involves no such cancellation. That
distinction is not academic: at the anchor with \(\alpha\) on the \(X\) axis
and \(\beta=(1-\varepsilon)|\psi\rangle\langle\psi|+\varepsilon I/3\),
\(\psi=(1,-1,0)/\sqrt2\), \(\varepsilon=0.003\) --- the natural \(d=3\)
analogue of a doubly sparse vector, with a zero in the computational basis and
\(\hat\psi_0=0\) in the Fourier basis --- the form reports \(-1.3368\) while
the true second variation is \(+3.4109\times10^{-3}\), and indeed
Proposition~\ref{prop:aligned} proves the form is nonnegative there.

With that adjudication in place we have found no violation at \(2\times3\).
Structured scans over rank-one and rank-two qudit marginals, including the
doubly sparse vectors above, return strictly positive second variations with
smallest reliable value \(+6.0\times10^{-3}\); Sobol scans over the
eight-parameter family of anchors carrying one small eigenvalue return
smallest values \(0.033\), \(0.048\) and \(0.10\) over three independent
sequences of \(2048\) points each. Multistart searches over the full
eleven-parameter anchor space, with the gap itself as the objective, likewise
produced nothing below the entropy noise level, their best values being of
order \(10^{-15}\) at anchors drifting toward the maximally mixed point, where
the form is degenerate by construction. We record this as evidence, not as a
theorem: \(\mathrm{ACC}_{2\times3}\) is consistent with everything we have
measured, and \(2\times3\) is the one remaining open cell of the
classification.

\section{A corrected universal inequality}\label{sec:corrected}

Since \eqref{eq:cqc} is false as stated, the practical question is what
survives. The following state-and-measurement dependent replacement, stated
with per-party dimensions so that it covers unequal local dimensions, is
universal. It is an organisation of the quantum-memory uncertainty and
information-exclusion machinery around the counterexample mechanism; no
novelty is claimed for the ingredients. Reference~\cite{Wang2026cx}
diagnoses the failure qualitatively --- information read in different
settings must not be redundant --- but proposes no inequality;
\eqref{eq:corrected} below is a quantitative form of that requirement, with
the redundancy priced by \(\min\{C_A,C_B\}\).

The local dimensions may differ, and the correction must use each party's own
dimension: write \(d_A=\dim\mathcal H_A\) and \(d_B=\dim\mathcal H_B\).
Define the one-sided measurement losses
\begin{equation}
 D_B^Q=I(Q_A:B)-\cmiq,\qquad D_B^R=I(R_A:B)-\cmir,
\end{equation}
and \(D_A^Q,D_A^R\) by interchanging the parties. Set
\begin{equation}
 C_A^{(0)}=H(Q_A)+H(R_A)-\log d_A-S(A),
 \qquad
 C_B^{(0)}=H(Q_B)+H(R_B)-\log d_B-S(B).
\end{equation}

\begin{theorem}[Corrected bound]\label{thm:corrected}
With \(C_A=C_A^{(0)}-D_B^Q-D_B^R\) and \(C_B=C_B^{(0)}-D_A^Q-D_A^R\),
\begin{equation}
 \boxed{\ \cmiq+\cmir\ \leq\ \qmi+\min\{C_A,C_B\}\ }
 \label{eq:corrected}
\end{equation}
holds for every state and every pair of local MUB measurements. Dropping the
nonnegative losses gives the simpler marginal form
\(\cmiq+\cmir\le\qmi+\min\{C_A^{(0)},C_B^{(0)}\}\).
\end{theorem}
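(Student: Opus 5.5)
The plan is to reduce the boxed inequality \eqref{eq:corrected}, one branch of the minimum at a time, to the entropic uncertainty relation with quantum memory of Berta, Christandl, Colbeck, Renes and Renner, applied on a single party. I treat the branch \(C_A\) in full; the branch \(C_B\) follows by exchanging the roles of \(A\) and \(B\). Since the two branches are proved separately, the minimum follows at once.

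\textbf{Step 1: strip the losses.} By the definitions of \(D_B^Q\) and \(D_B^R\), the classical terms can be rewritten as
\[
\cmiq=I(Q_A:B)-D_B^Q,\qquad \cmir=I(R_A:B)-D_B^R .
\]
Here \(I(Q_A:B)\) is the mutual information of the classical--quantum state obtained by measuring \(Q_A\) on \(A\) and leaving \(B\) untouched. Substituting these into \eqref{eq:corrected} with \(C_A=C_A^{(0)}-D_B^Q-D_B^R\), the two losses cancel on both sides. The claim therefore becomes equivalent to
\[
I(Q_A:B)+I(R_A:B)\le \qmi+H(Q_A)+H(R_A)-\log d_A-S(A).
\]

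\textbf{Step 2: reduce to an uncertainty relation.} Write \(I(Q_A:B)=H(Q_A)-H(Q_A|B)\), and likewise for \(R_A\). Also write \(\qmi-S(A)=-S(A|B)\). The \(H(Q_A)\) and \(H(R_A)\) terms then cancel, and the inequality of Step 1 becomes exactly
\[
H(Q_A|B)+H(R_A|B)\ \ge\ \log d_A+S(A|B).
\]
This is the Berta et al.\ relation \(H(Q|B)+H(R|B)\ge\log(1/c)+S(A|B)\), where \(c=\max_{j,k}|\langle q_j|r_k\rangle|^2\). For bases that are mutually unbiased on \(A\), we have \(c=1/d_A\). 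This is the step where each party's own dimension enters: the relation is applied on the measured party \(A\), with memory \(B\) of arbitrary dimension \(d_B\). No assumption linking \(d_A\) to \(d_B\) is needed. I will simply invoke the published theorem rather than reprove it. If a self-contained argument were wanted, I would derive it from the monotonicity of relative entropy under the pinching \(\Delta_{R_A}\), applied to \(\Delta_{Q_A}\)-dephased states.

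\textbf{Step 3: the other branch and the marginal form.} Exchanging the parties and using \(D_A^Q,D_A^R\) gives the \(C_B\) branch in the same way, now with constant \(\log d_B\). This proves \eqref{eq:corrected}. For the simpler marginal form, I need the losses to be nonnegative, that is, \(D_B^Q=I(Q_A:B)-I(Q_A:Q_B)\ge0\). This holds by data processing, since measuring \(Q_B\) is a local channel on \(B\); the same argument covers the other three losses. Hence \(C_A\le C_A^{(0)}\) and \(C_B\le C_B^{(0)}\).

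\textbf{Main obstacle.} There is essentially no obstacle beyond correctly citing the quantum-memory uncertainty relation in the unequal-dimension setting. The only point requiring care is bookkeeping: the constant must be \(\log d_A\) in the branch where \(A\) is measured against quantum memory \(B\), and \(\log d_B\) in the other branch. One should not use a single common dimension for both.
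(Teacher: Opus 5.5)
Your proposal is correct and is essentially the paper's proof: you apply the Berta et al.\ relation on party \(A\) with constant \(\log d_A\), rewrite \(I(Q_A:B)+I(R_A:B)\) through conditional entropies together with \(\qmi-S(A)=-S(A|B)\), cancel the losses, exchange the parties for the \(C_B\) branch, and get nonnegativity of the losses from data processing. The only difference is order: you cancel the losses first and then reduce to the uncertainty relation, while the paper derives the loss-free bound \(I(Q_A:B)+I(R_A:B)\le\qmi+C_A^{(0)}\) first and then subtracts the loss definitions.
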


\begin{proof}
The MUB entropic uncertainty relation with quantum memory~\cite{Berta2010},
applied on party \(A\), gives \(H(Q_A|B)+H(R_A|B)\ge\log d_A+S(A|B)\): the constant is
\(\log(1/c)\) with \(c=\max_{j,k}|\langle q_j|r_k\rangle|^2=1/d_A\), so it is
\(A\)'s dimension that enters, whatever \(d_B\) is. Hence
\begin{equation}
\begin{aligned}
 I(Q_A:B)+I(R_A:B)
 &=H(Q_A)+H(R_A)-H(Q_A|B)-H(R_A|B)\\
 &\le H(Q_A)+H(R_A)-\log d_A-S(A|B)
 =\qmi+C_A^{(0)}.
\end{aligned}
\end{equation}
Subtracting the two loss definitions yields the \(A\)-directed version;
interchanging the parties gives the \(B\)-directed one; taking the minimum
proves \eqref{eq:corrected}. The simpler form follows from data processing,
since all four losses are nonnegative.
\end{proof}

Because the memoryless state-dependent MUB relation gives
\(H(Q_A)+H(R_A)\ge\log d_A+S(A)\), the terms \(C_A^{(0)},C_B^{(0)}\) are
nonnegative. Since each outcome entropy on \(A\) is at most \(\log d_A\) and
likewise on \(B\), the bound is never weaker than the coarse correction
\(\qmi+\min\{\log d_A-S(A),\log d_B-S(B)\}\). It also recovers two important
zero-correction cases: if \(\rho_A\) is maximally mixed then
\(C_A^{(0)}=0\), and if \(\rho_A\) is diagonal in \(Q_A\) then \(H(Q_A)=S(A)\)
and MUB measurement forces \(H(R_A)=\log d_A\), again giving \(C_A^{(0)}=0\).
The correction does not vanish automatically for every pure bipartite state,
so it does not subsume every previously known exact CQC class, and it is not
tight on the witnesses of this paper.

\paragraph{What the correction measures.}
The two ingredients of \(C_A\) track different physical effects, and it is
worth separating them, because the residual boundary of
Section~\ref{sec:residual} and the failure of CQC both live in the gap between
them. The marginal term \(C_A^{(0)}\) is the amount by which Alice's two
complementary outcome entropies exceed what her own mixedness already forces
through the memoryless relation \(H(Q_A)+H(R_A)\ge\log d_A+S(A)\). It is
therefore a measure of \emph{measurement blindness}: local purity that neither
measured basis can resolve. It vanishes exactly when nothing is hidden ---
when \(\rho_A\) is maximally mixed, so there is no preferred direction to
miss, or diagonal in a measured basis, so the state is fully visible --- and
it is largest for a state that is pure, hence globally certain, yet unbiased
to both measured bases. The loss terms \(D_B^Q,D_B^R\) measure the opposite
effect on the far side: how much of the correlation between Alice's outcome
and Bob's \emph{quantum} system is destroyed by forcing Bob to read out in a
fixed basis instead of storing. So \(C_A\) is what Alice's marginal hides from
the measurements, minus what Bob's readout discards relative to a quantum
memory, and CQC is exactly the assertion that these two effects always
cancel in the favourable direction. The counterexamples of
Sections~\ref{sec:universal}--\ref{sec:classification} are states on which they
do not: the two measured bases together reveal a latent classical label that
the quantum mutual information of the premeasurement state does not pay for.

\begin{remark}[Unequal dimensions]\label{rem:unequal}
When \(d_A\ne d_B\) the per-party dimensions are not a pedantic refinement
but load bearing. Writing \(\log d_B\) in \(C_A^{(0)}\) would make it
negative and false: on the qubit--qudit witness \eqref{eq:omega} at
\(n=164\), the wrong reading gives a ``bound'' violated by \(4.41\) nats,
while \eqref{eq:corrected} holds with slack \(3.5\times10^{-4}\) --- tight
there, as it should be, since \(\omega_{164}\) violates plain CQC by only
\(5.6\times10^{-5}\). We verified \eqref{eq:corrected} numerically on the
rank-three qutrit witness (slack \(0.280\)), on \(\omega_n\) for
\(n\in\{4,7,12,164\}\) (slacks \(0.0135\), \(0.0079\), \(0.0047\),
\(0.0003\)), and on random Hilbert--Schmidt states at \(2\times3\),
\(2\times5\) and \(3\times4\); the per-party form held in every case.
\end{remark}

\subsection{The corrected bound does not reduce to CQC at \texorpdfstring{\(2\times2\)}{2x2}}
\label{sec:incomparable}

Plain CQC is true at \(2\times2\)~\cite{Wang2026}, so it is natural to ask
whether the correction degenerates there --- that is, whether
\(\min\{C_A,C_B\}\le0\) for every two-qubit state, which would make
\eqref{eq:corrected} a strengthening of a true statement and the two-qubit
theorem an immediate corollary. It does not. The following two exact
one-parameter families drive the correction to opposite signs, so the two
inequalities cross. Both parties measure the Pauli pair \(Q=Z\), \(R=X\).

\begin{proposition}[\(2\times2\) incomparability]\label{prop:incomparable}
Let \(\tau_r=\tfrac12(\mathbb 1+r\sigma_y)\) for \(r\in(0,1]\), where
\(\sigma_y\) is the Pauli matrix \(\left(\begin{smallmatrix}0&-i\\
i&0\end{smallmatrix}\right)\), unrelated to the shifted eigenvalues
\(\sigma_i\) of Corollary~\ref{cor:det}; and let
\(|\psi_\theta\rangle=\cos\theta\,|00\rangle+\sin\theta\,|11\rangle\).
\begin{enumerate}
\item[(i)] On the product state \(\rho_r=\tau_r\otimes\tau_r\) all four losses
vanish and
\begin{equation}
 \min\{C_A,C_B\}=\log2-H_2\!\left(\tfrac{1+r}{2}\right)>0,
 \label{eq:posfamily}
\end{equation}
while CQC holds with equality. Hence \eqref{eq:corrected} is strictly weaker
than CQC there and does not imply it pointwise.
\item[(ii)] On \(|\psi_\theta\rangle\) one has \(C_A^{(0)}=0\) and
\eqref{eq:corrected} holds with \emph{equality} for every \(\theta\), while
CQC has the strictly positive slack
\(H_2(\cos^2\theta)+H(p^{XX})-2\log2\) for
\(\theta\notin\{0,\pi/4,\pi/2\}\), where
\(p^{XX}\) is the \(X\otimes X\) outcome law. Hence \eqref{eq:corrected} is
strictly stronger there, and it is the tight one.
\end{enumerate}
Consequently the two inequalities are incomparable at \(2\times2\): neither
implies the other, and the two-qubit CQC theorem is not a corollary of
\eqref{eq:corrected}.
\end{proposition}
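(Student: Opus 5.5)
The plan is direct evaluation. Both families are explicit, and every quantity in \eqref{eq:corrected} is a Shannon or von Neumann entropy of a qubit or of a classical--quantum state built from it. We compute \(C_A^{(0)},C_B^{(0)}\), the four losses and the three mutual informations in closed form, and then compare. The only nontrivial step is the strict positivity of the CQC slack in (ii). We plan to reduce it to the binary-entropy bound \(H_2(u)\ge4u(1-u)\log2\) already proved inside Lemma~\ref{lem:envelope}.

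\textbf{Part (i).} Because \(\rho_r\) is a product state, every mutual information involving \(A\) on one side and \(B\) on the other vanishes. This covers \(\qmi\), \(\cmiq\), \(\cmir\), \(I(Q_A:B)\) and the rest, so all four losses are zero and CQC holds as \(0\le0\). The Bloch vector of \(\tau_r\) lies along \(y\), so its \(Z\) and \(X\) outcome laws are both uniform, giving \(H(Q_A)=H(R_A)=\log2\). Its spectrum is \(\bigl(\tfrac{1+r}2,\tfrac{1-r}2\bigr)\), so \(S(A)=H_2(\tfrac{1+r}2)\). Therefore \(C_A=C_A^{(0)}=\log2-H_2(\tfrac{1+r}2)\), and \(C_B\) has the same value by symmetry. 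This quantity is strictly positive for \(r>0\), which is \eqref{eq:posfamily}.

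\textbf{Part (ii).} Write \(c=\cos\theta\) and \(s=\sin\theta\).
\begin{itemize}
\item Since \(\rho_A=\mathrm{diag}(c^2,s^2)\) is diagonal in \(Q_A=Z\), the zero-correction case noted after Theorem~\ref{thm:corrected} gives \(C_A^{(0)}=0\). The state is symmetric under swapping the parties, so \(C_B=C_A\).
\item Measuring \(Z_A\) leaves \(B\) in \(|0\rangle\) or \(|1\rangle\), which \(Z_B\) reads out perfectly. Hence \(I(Q_A:B)=\cmiq=H_2(c^2)\) and \(D_B^Q=0\).
\item Measuring \(X_A\) gives outcomes with equal priors and pure conditional states \(c|0\rangle\pm s|1\rangle\) on \(B\), whose average is \(\rho_B\). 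Hence \(I(R_A:B)=S(B)=H_2(c^2)\), so \(C_A=-D_B^R=\cmir-H_2(c^2)\).
\item With \(\qmi=2H_2(c^2)\), the right side of \eqref{eq:corrected} becomes \(H_2(c^2)+\cmir\). This is exactly \(\cmiq+\cmir\), so \eqref{eq:corrected} holds with equality.
\end{itemize}

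\textbf{CQC slack in (ii).} The CQC slack is \(\qmi-\cmiq-\cmir=H_2(c^2)-\cmir\). The \(X\) marginals on both sides are uniform, so \(\cmir=2\log2-H(p^{XX})\), which gives the stated slack formula. Computing \(p^{XX}\) explicitly, the outcomes \(++\) and \(--\) each have probability \((c+s)^2/4\), and \(+-\) and \(-+\) each have \((c-s)^2/4\). This yields \(H(p^{XX})=\log2+h(\sin2\theta)\) with \(h\) as in Section~\ref{sec:defs}. Also \(H_2(c^2)=h(\cos2\theta)\). The slack therefore equals \(h(x)+h(y)-\log2\) with \(x=\cos2\theta\), \(y=\sin2\theta\) and \(x^2+y^2=1\).

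\textbf{Main step.} In the variable \(t\), the inequality established in the proof of Lemma~\ref{lem:envelope} reads \(h(t)\ge(1-t^2)\log2\), with equality exactly at \(t\in\{0,\pm1\}\). Summing this bound for \(x\) and \(y\) gives \(h(x)+h(y)\ge(2-x^2-y^2)\log2=\log2\). Equality requires both \(x\) and \(y\) to lie in \(\{0,\pm1\}\), which on the unit circle restricted to \(\theta\in[0,\pi/2]\) means \(\theta\in\{0,\pi/4,\pi/2\}\). So the slack is strictly positive off those three points.

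Combining the two parts: \eqref{eq:corrected} is strictly weaker than CQC on \(\rho_r\) and strictly stronger on \(|\psi_\theta\rangle\) for generic \(\theta\). Neither inequality implies the other.
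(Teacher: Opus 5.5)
Your proof is correct and follows the paper's argument. Part (i) is identical. In (ii) you check equality by computing the losses directly, using \(C_B=C_A\) by swap symmetry, where the paper phrases it as saturation of the Berta relation; both rest on the same conditional-state computations.

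For the strict CQC slack you sum the bound \(h(t)\ge(1-t^2)\log2\) from the proof of Lemma~\ref{lem:envelope} at \(t=\cos2\theta\) and \(t=\sin2\theta\). Lemma~\ref{lem:slackpos} instead expands the series and uses \(x^{2k}+y^{2k}\le(x^2+y^2)^k=1\). Both reduce to the termwise comparison \(t^{2k}\le t^2\), and your version has the small advantage of reusing an inequality the paper has already proved.
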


\begin{proof}
(i) The Bloch vector of \(\tau_r\) points along \(y\), which is unbiased to
both measured bases, so \(H(Q_A)=H(R_A)=\log2\) exactly while
\(S(A)=H_2((1+r)/2)\); this gives
\(C_A^{(0)}=C_B^{(0)}=\log2-H_2((1+r)/2)\), positive for \(r>0\). The state
is a product, so every mutual information appearing in the losses vanishes and
\(C_A=C_A^{(0)}\), which is \eqref{eq:posfamily}. Both sides of CQC are zero.

(ii) The Schmidt basis is the measured \(Q\) basis, so \(H(Q_A)=S(A)\) and
\(H(R_A)=\log2\), whence \(C_A^{(0)}=0\). For the equality, note that
\eqref{eq:corrected} is the Berta relation with the two losses moved across,
so it is saturated exactly when that relation is. Here Bob predicts Alice's
\(Z\) outcome perfectly, so \(H(Q_A|B)=0\); the \(X\) outcomes are uniform
with pure, hence zero-entropy, conditional states on \(B\), so
\(H(R_A|B)=\log2-S(A)\); and \(S(A|B)=-S(A)\) by purity. The sum is
\(\log2+S(A|B)\), which is equality in the Berta relation. The CQC slack is
the stated expression because \(I(A:B)=2H_2(\cos^2\theta)\),
\(\cmiq=H_2(\cos^2\theta)\) and \(\cmir=2\log2-H(p^{XX})\), with
\(p^{XX}=\big((c+s)^2/4,(c-s)^2/4,(c-s)^2/4,(c+s)^2/4\big)\),
\(c=\cos\theta\), \(s=\sin\theta\). Its strict positivity is
Lemma~\ref{lem:slackpos} below.
\end{proof}

The strict positivity in (ii) is not a numerical observation; it follows from
the same series device that produced the entropy envelope of
Lemma~\ref{lem:envelope}. Qualitatively it is also a consequence of the
two-qubit theorem of~\cite{Wang2026}; the closed form below is what the
crossing argument requires.

\begin{lemma}[Strict CQC slack on the aligned pure family]\label{lem:slackpos}
For \(|\psi_\theta\rangle\) as above, write \(x=|\cos2\theta|\) and
\(y=|\sin2\theta|\). Then
\begin{equation}
 \qmi-\cmiq-\cmir
 =H_2\!\left(\tfrac{1+x}{2}\right)+H_2\!\left(\tfrac{1+y}{2}\right)-\log2
 =\log2-\sum_{k\ge1}\frac{x^{2k}+y^{2k}}{2k(2k-1)},
 \label{eq:slackseries}
\end{equation}
which is nonnegative, and is zero exactly when \(xy=0\), that is exactly for
\(\theta\in\{0,\pi/4,\pi/2\}\) modulo \(\pi/2\).
\end{lemma}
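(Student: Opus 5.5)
The plan is to reduce everything to the binary-entropy series already used in the proof of Lemma~\ref{lem:envelope}, and then to exploit the constraint \(x^2+y^2=1\).

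\textbf{Step 1: the closed form.} With \(c=\cos\theta\), \(s=\sin\theta\), I would rewrite each of the three quantities in the proof of Proposition~\ref{prop:incomparable}(ii). Since \(\cos^2\theta=(1+\cos2\theta)/2\) and \(H_2\) is symmetric under \(u\mapsto1-u\), we get \(H_2(\cos^2\theta)=H_2\bigl(\tfrac{1+x}2\bigr)\). Hence \(\qmi=2H_2\bigl(\tfrac{1+x}2\bigr)\) and \(\cmiq=H_2\bigl(\tfrac{1+x}2\bigr)\).

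For the \(X\otimes X\) law, \((c\pm s)^2=1\pm\sin2\theta\). So \(p^{XX}\) consists of the two values \((1+\sin2\theta)/4\) and \((1-\sin2\theta)/4\), each appearing twice. By the grouping rule, \(H(p^{XX})=\log2+H_2\bigl(\tfrac{1+y}2\bigr)\), and therefore \(\cmir=\log2-H_2\bigl(\tfrac{1+y}2\bigr)\). Substituting gives the first equality in \eqref{eq:slackseries}.

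\textbf{Step 2: the series.} From the proof of Lemma~\ref{lem:envelope}, with \(u=(1-t)/2\),
\[
 H_2\!\left(\tfrac{1+t}2\right)=\log2-\tfrac{t^2}2-R(t),
 \qquad
 R(t)=\sum_{k\ge2}\frac{t^{2k}}{2k(2k-1)}.
\]
Combining the two terms, this is \(\log2-\sum_{k\ge1}t^{2k}/(2k(2k-1))\), which converges absolutely on \([-1,1]\), including the endpoints. Adding the expressions for \(t=x\) and \(t=y\) and subtracting \(\log2\) gives the second equality in \eqref{eq:slackseries}.

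\textbf{Step 3: sign and equality cases.} I would use the constraint \(x^2+y^2=1\) together with the telescoping identity
\[
 \sum_{k\ge1}\frac1{2k(2k-1)}=\sum_{k\ge1}\Bigl(\frac1{2k-1}-\frac1{2k}\Bigr)=\log2 .
\]
Since \(x^2+y^2=1\), the \(k=1\) term of the series is exactly \(1/2=1/(2\cdot1)\). Subtracting, the slack equals
\[
 \sum_{k\ge2}\frac{1-x^{2k}-y^{2k}}{2k(2k-1)} .
\]
For \(k\ge2\) we have \(x^{2k}+y^{2k}\le(x^2+y^2)^k=1\), because the binomial expansion of \((x^2+y^2)^k\) adds only nonnegative cross terms. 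Equality holds iff all cross terms vanish, i.e.\ iff \(xy=0\). So every summand is nonnegative, and the sum vanishes iff \(xy=0\), already from the \(k=2\) term alone. Finally, \(xy=0\) means \(\sin4\theta=0\), which is \(\theta\in\{0,\pi/4,\pi/2\}\) modulo \(\pi/2\).

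I expect no real obstacle. The only delicate point is the bookkeeping in Step 1, namely the grouping of \(p^{XX}\) into two equal pairs and the symmetry of \(H_2\). Once the exact cancellation of the \(k=1\) term against \(\log2\) is spotted, the positivity is immediate.
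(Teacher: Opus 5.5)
Your proposal is correct and is essentially the paper's own argument. Both use the grouping \(H(p^{XX})=\log2+H_2\bigl(\tfrac{1+y}2\bigr)\) with the evenness of \(t\mapsto H_2\bigl(\tfrac{1+t}2\bigr)\), the series with \(\sum_{k\ge1}1/(2k(2k-1))=\log2\), and the fact that \(x^2+y^2=1\) fixes the \(k=1\) term while \(x^{2k}+y^{2k}\le(x^2+y^2)^k\) controls \(k\ge2\); your termwise-nonnegative form \(\sum_{k\ge2}(1-x^{2k}-y^{2k})/(2k(2k-1))\), whose \(k=2\) term is \(x^2y^2/6\), is only a slightly tidier packaging of the paper's final inequality.
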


\begin{proof}
Since \((c+s)^2+(c-s)^2=2\), the law \(p^{XX}\) is
\(\big(u/2,(1-u)/2,(1-u)/2,u/2\big)\) with \(u=(c+s)^2/2=(1+\sin2\theta)/2\),
so \(H(p^{XX})=H_2(u)+\log2\); together with
\(\cos^2\theta=(1+\cos2\theta)/2\) and the evenness of
\(t\mapsto H_2((1+t)/2)\) this gives the first equality in
\eqref{eq:slackseries}. For the second, expand
\(H_2\!\left(\tfrac{1+t}{2}\right)=\log2-\sum_{k\ge1}t^{2k}/(2k(2k-1))\) and
use \(\sum_{k\ge1}1/(2k(2k-1))=\log2\), which is the alternating harmonic
series \(\sum_{k\ge1}\big(\tfrac1{2k-1}-\tfrac1{2k}\big)\). Now
\(x^2+y^2=1\), so the \(k=1\) term contributes exactly \(1/2\), while for
\(k\ge2\) we have \(x^{2k}+y^{2k}\le(x^2+y^2)^k=1\), with equality if and only
if \(xy=0\). Hence
\begin{equation}
 \qmi-\cmiq-\cmir\ \ge\ \log2-\sum_{k\ge1}\frac1{2k(2k-1)}\ =\ 0,
\end{equation}
strictly unless \(xy=0\), and at \(xy=0\) every term is an equality, so the
slack is exactly zero there.
\end{proof}

\paragraph{What the two families mean.}
They isolate the two effects just described, one each. In family~(i) the
state is a product polarised along \(y\), a direction unbiased to both
measured bases: there is local purity that neither measurement can resolve,
and no correlation whatsoever. The correction is charged for the blindness
even though there is nothing to correct, and CQC is tight at \(0\le0\). This
is the structural reason \eqref{eq:corrected} cannot imply CQC: the correction
is a one-party quantity computed from a marginal, so it cannot tell whether
the state is correlated at all, whereas CQC is a statement about correlations
and is at its sharpest precisely when there are none. In family~(ii) the
opposite effect is the only one present: Alice's marginal is fully visible to
the \(Q\) measurement, so the blindness term vanishes identically, and the
whole correction is the readout loss incurred because Bob measures instead of
storing. There \eqref{eq:corrected} degenerates to the Berta relation itself,
which that state saturates, so it is the sharp statement in the
memory-limited regime --- exactly the regime in which CQC is loose, because
CQC forgets that Bob's side is measured rather than kept quantum. The two
inequalities are therefore not competitors on a common scale: each is tight
where the other's neglected effect is absent.

Two further points sharpen the picture. First, the correction obeys the
dimensional ceiling
\(\min\{C_A,C_B\}\le\min\{\log d_A,\log d_B\}\), since
\(C_A\le C_A^{(0)}\le2\log d_A-\log d_A-0\); at \(2\times2\) the ceiling
\(\log2\) is \emph{attained}, by the \(r=1\) member
\(|{+}i\rangle\langle{+}i|^{\otimes2}\) of family~(i). So the correction is
not a small perturbation: it can be as large as the entire scale of the
mutual information it corrects. The ceiling also says what kind of resource
the correction is --- a one-party one, never exceeding a single subsystem's
total capacity, which is consistent with its reading as measurement blindness
on a marginal. Second, the equality in (ii) is an alignment
phenomenon rather than a property of pure states: on seeded Haar-random pure
two-qubit states the corrected gap was strictly positive in every instance
(median \(0.054\), minimum \(7.1\times10^{-5}\)).

\begin{remark}[Which bound is stronger is state dependent]\label{rem:census}
A seeded Hilbert--Schmidt census of \(4000\) two-qubit states puts
\(97.2\%\) of them on the side where \(\min\{C_A,C_B\}<0\), so the corrected
bound is typically the stronger statement, with \(2.0\%\) exceeding
\(+0.01\); the observed range was \([-0.614,+0.112]\), and
\eqref{eq:corrected} was never violated. This is a sampled statistic about a
particular ensemble, not a bound; the exact content is
Proposition~\ref{prop:incomparable}.
\end{remark}

\section{Why random sampling did not find these states}\label{sec:sampling}

The conjecture survived \(10^7\) random \(3\otimes3\) samples in the original
work and further random tests in dimensions three and five
in~\cite{Iqbal2026}. We now argue that this outcome was all but guaranteed, and quantify it. All measurements below use the fixed
computational/Fourier MUB pair on both sides; randomising the measurements
instead of the state changes nothing, since the gap depends only on the
relative frame, and Section~\ref{sec:align} shows that randomising the frame
is strictly worse. The independent refutation~\cite{Wang2026cx} corroborates
the diagnosis from the other direction: those counterexamples were found by
reasoning about the redundancy mechanism, not by sampling.

\subsection{The gap is concentrated far from zero}\label{sec:concentration}

We sampled \(2\times10^4\) states from each of five standard ensembles on
\(\mathbb C^3\otimes\mathbb C^3\) --- induced measures with ancilla dimension
\(K=1\) (pure), \(3\), \(9\) (Hilbert--Schmidt) and \(18\), and the Bures
measure --- and evaluated \(G\).

\begin{center}
\begin{tabular}{lrrrrr}
\hline\hline
Ensemble & mean \(G\) & std & min \(G\) & \((\text{mean}-\min)/\text{std}\) & \(\#\{G<0\}\)\\
\hline
Hilbert--Schmidt \((K=9)\) & 0.34615 & 0.04930 & 0.19071 & 3.15 & 0\\
induced \(K=3\)            & 0.81761 & 0.10165 & 0.41781 & 3.93 & 0\\
induced \(K=18\)           & 0.17289 & 0.02884 & 0.07008 & 3.57 & 0\\
induced \(K=1\) (pure)     & 1.00125 & 0.25446 & 0.13981 & 3.39 & 0\\
Bures                      & 0.46578 & 0.06669 & 0.23160 & 3.51 & 0\\
\hline\hline
\end{tabular}
\end{center}

In the table, the means and standard deviations are stable to well under one
per cent across seeds; the minima are extreme order statistics and vary by
several per cent from run to run, so they should be read as indicative of the
scale rather than as reproducible constants. The \(\sigma\)-distances stay in
the range \(3\)--\(4.5\) in every replication we performed; the run tabulated
here spans \(3.15\) to \(3.93\).

No sample in \(10^5\) came close to the boundary \(G=0\). For comparison we
use the rank-three qutrit witness \(\rho_\star\): with
\(|u\rangle=(|0\rangle-|1\rangle)/\sqrt2\), \(|v\rangle=|2\rangle\) and
\(P_x=|x\rangle\!\langle x|\),
\begin{equation}
 \rho_\star=\tfrac13\left(P_u\otimes P_u+P_u\otimes P_v+P_v\otimes P_u\right),
 \label{eq:witness}
\end{equation}
measured in the computational and Fourier bases. It has
\begin{equation}
 G=-\tfrac29\log\tfrac98-\tfrac79\log\tfrac{63}{64}
 =-0.013925285837\ldots,
\end{equation}
which sits \((0.34615+0.01393)/0.04930=7.30\) standard deviations below the
Hilbert--Schmidt mean, whereas the observed minimum of \(2\times10^4\) samples
reached only \(3.15\). Figure~\ref{fig:concentration}(a) shows the five
distributions against the violating half-line.

\begin{figure}[t]
\centering
\includegraphics[width=\textwidth]{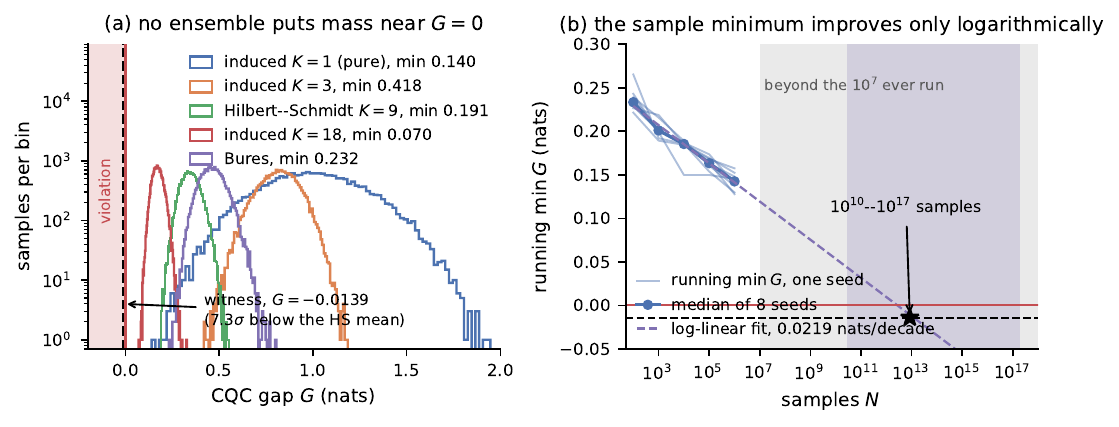}
\caption{The gap is concentrated far from zero, and the sample minimum barely
moves. \textbf{(a)} Distribution of \(G\) over \(2\times10^4\) states from
each of five standard ensembles on \(\mathbb C^3\otimes\mathbb C^3\); the
shaded half-line is the violating region \(G<0\). No ensemble places any mass
there. The witness \eqref{eq:witness} (dashed) sits \(7.3\) standard
deviations below the Hilbert--Schmidt mean, while the best of \(2\times10^4\)
samples reaches only \(3.15\)--\(3.93\). \textbf{(b)} Running minimum of \(G\)
over \(10^6\) Hilbert--Schmidt samples for eight independent seeds, with a
log-linear fit to the median trace. The shaded band on the horizontal axis is
the spread, across those seeds, of the sample count at which the
extrapolation would reach the witness value; the grey region marks everything
beyond the \(10^7\) samples ever actually run. The width of that band is the
honest precision of the extrapolation.}
\label{fig:concentration}
\end{figure}

\subsection{The sample minimum improves logarithmically}\label{sec:scaling}

Concentration in high dimension makes the extreme order statistic move very
slowly. Running \(10^6\) Hilbert--Schmidt samples and recording the running
minimum, and repeating the whole experiment from eight independent seeds,
gives the median trace

\begin{center}
\begin{tabular}{rr}
\hline\hline
\(N\) & median running \(\min G\)\\
\hline
\(10^2\) & 0.23334\\
\(10^3\) & 0.20096\\
\(10^4\) & 0.18539\\
\(10^5\) & 0.16337\\
\(10^6\) & 0.14249\\
\hline\hline
\end{tabular}
\end{center}

Over four decades the median minimum falls by \(0.0909\), and a least-squares
fit in \(\log N\) gives \(0.0219\) nats per decade. Extrapolating that rate
linearly and reaching \(G=-0.0139\) from \(0.14249\) needs a further
\(0.15642/0.02193\approx7.1\) decades, that is \(N\sim10^{13.1}\).

No single such number should be quoted as if it were determined. The slope is
fitted to an extreme order statistic over only four decades, and it is very
seed sensitive: across the eight seeds the fitted slopes range from
\(0.0149\) to \(0.0309\) nats per decade, median \(0.0223\), and
correspondingly
\begin{equation}
 N\ \sim\ 10^{10.4}\ \text{to}\ 10^{17.3}, \qquad\text{median }10^{12.9}.
 \label{eq:naiveextrap}
\end{equation}
This is the band shown in Figure~\ref{fig:concentration}(b).

A better-grounded estimate uses the shape of the tail instead of a slope.
Writing \(z(N)=(\text{mean}-\min_N)/\text{std}\) for the standardised depth
of the running minimum, a Gaussian left tail would give \(z=\sqrt{2\log N}\).
Fitting the median trace instead gives
\begin{equation}
 z(N)\ \approx\ 0.805\,\sqrt{2\log N}-0.129 ,
 \label{eq:evfit}
\end{equation}
so the tail is \emph{lighter} than Gaussian and the minimum descends more
slowly than a Gaussian tail would allow --- which makes matters worse, not
better. The witness sits at \(z=7.30\), and \eqref{eq:evfit} reaches that at
\(N\sim10^{18.5}\); the pure-Gaussian reference \(N=e^{z^2/2}\) would already
demand \(10^{11.6}\).

The three routes disagree by many orders of magnitude, which is the honest
state of the estimate, but they agree on everything that matters: the
smallest of them exceeds the \(10^7\) samples actually performed by more than
three orders of magnitude, and the best-motivated one by more than eleven. We
therefore quote the conclusion as a range, not a number. The same experiment
at \(5\otimes5\) gives running minima \(0.37177\) at \(N=10^3\) and
\(0.36765\) at \(N=10^4\), again with no negative sample.

\subsection{The violating set hugs the low-rank boundary}\label{sec:volume}

The reason is geometric. The canonical witness has rank three out of nine, and
survives white noise only up to
\begin{equation}
 \varepsilon_c=0.020982692465\ldots,
\end{equation}
so its full-rank neighbourhood is thin. Taking the interior violating point
\(\rho=(1-\varepsilon)\rho_\star+\varepsilon I/9\) with
\(\varepsilon=\varepsilon_c/2\) (full rank, \(G=-0.00538\)) and moving in
\(500\) random traceless-Hermitian directions until either \(G\) turns
nonnegative or the state-space boundary is reached, the distance travelled has
median \(0.002950\), mean \(0.002978\), maximum \(0.005060\) in
Hilbert--Schmidt norm. In \(469\) of the \(500\) directions the
\emph{state-space boundary} was hit first: the violating region is pressed
against the low-rank face, exactly where random ensembles place no mass.
Figure~\ref{fig:geometry} shows both halves of this picture directly.

\begin{figure}[t]
\centering
\includegraphics[width=\textwidth]{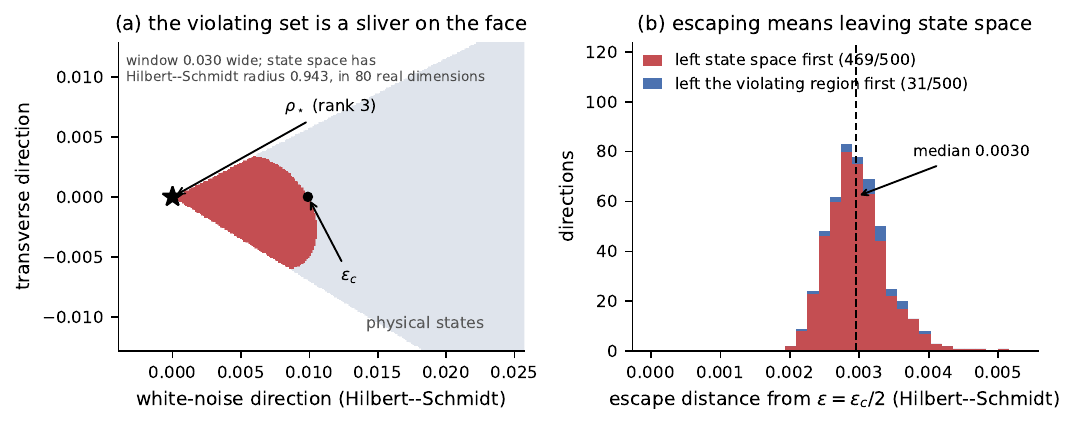}
\caption{Where the counterexamples live. \textbf{(a)} A two-dimensional
affine slice of state space through the witness \(\rho_\star\). The
horizontal axis is the white-noise direction \(I/9-\rho_\star\); the vertical
axis is a transverse direction, taken to be the one realising the
\emph{median} escape distance of panel~(b), so the slice is representative
rather than chosen for effect. Light grey is the physical region, red the
violating region \(G<0\). The violating set is a lens pressed against the
low-rank face, with \(\rho_\star\) itself at its vertex on the boundary; along
the horizontal axis it terminates at \(\varepsilon_c\). The scale annotation
is essential to reading the panel: the window is \(0.030\) across, against a
state-space Hilbert--Schmidt radius of \(0.943\), and two of eighty real
dimensions are shown, so the apparent area of the red region says nothing
about its volume. \textbf{(b)} Escape distances from the interior violating
point at \(\varepsilon=\varepsilon_c/2\) in \(500\) random traceless-Hermitian
directions, split by whether the walk left state space or left the violating
region first. The state-space boundary is reached first in the large majority
of directions --- this run gives \(469/500\), inside the \(93\)--\(95\%\)
band quoted above.}
\label{fig:geometry}
\end{figure}

These four figures are Monte-Carlo statistics and therefore seed dependent. An
independent reimplementation reproduces the median and mean within one to two
per cent and the minimum closely; the maximum, being an extreme order
statistic, fluctuates by more, and the boundary-first fraction ranges over
roughly \(93\)--\(95\%\) across seeds, the value \(469/500\) quoted above
sitting inside it. The qualitative conclusion --- that the
violating region is pressed against the low-rank face in the large majority of
directions --- is robust to reseeding. Only the exact quantities
(\(\varepsilon_c\), the parameter count, the Hilbert--Schmidt radius) are
seed independent.

For scale, the state space of \(\mathbb C^3\otimes\mathbb C^3\) has \(80\) real
parameters and Hilbert--Schmidt radius \(\sqrt{1-1/9}=0.9428\) about the
maximally mixed state. A crude ball heuristic gives relative volume
\begin{equation}
 \left(\frac{0.002950}{0.9428}\right)^{80}\approx4.3\times10^{-201},
\end{equation}
i.e.\ order \(10^{201}\) uniform samples to hit the region once. We stress
that this is a heuristic scale estimate, not a bound: the inradius gives a
lower bound on volume, and the violating set is certainly not a ball. It is
offered only to indicate the order of magnitude of the mismatch between
\(10^7\) and what would be required.

\subsection{The measurements must be aligned too}\label{sec:align}

Even with the right state, the measurement frame must be nearly aligned. Under
a local unitary rotation
\(\rho_\star\mapsto(e^{i\theta H_A}\otimes e^{i\theta H_B})\rho_\star(\cdot)^\dagger\)
with random traceless Hermitian generators, the violation survives only up to
a critical angle whose distribution over \(600\) random directions has median
\(0.156\) rad \(=8.93^\circ\) and minimum \(0.098\) rad. Both reproduce across
seeds to better than three per cent; the maximum is an extreme order statistic
and fluctuates, independent runs placing it between \(0.39\) and \(0.54\) rad.
The estimate below uses the median, which is the stable quantity;
Figure~\ref{fig:alignment} shows the whole survival curve. Since
\(\dim SU(3)\times SU(3)=16\), the fraction of local frames within the median
tolerance scales as
\begin{equation}
 \left(\frac{0.156}{\pi}\right)^{16}\approx1.4\times10^{-21}.
\end{equation}
So randomising the MUB pair rather than the state costs a further factor of
order \(10^{21}\), independently of the state-space factor above. As in
Section~\ref{sec:volume}, this is a heuristic scale estimate, not a bound:
it treats the tolerance region as isotropic in \(su(3)\oplus su(3)\), which
it need not be.

\begin{figure}[t]
\centering
\includegraphics[width=0.52\textwidth]{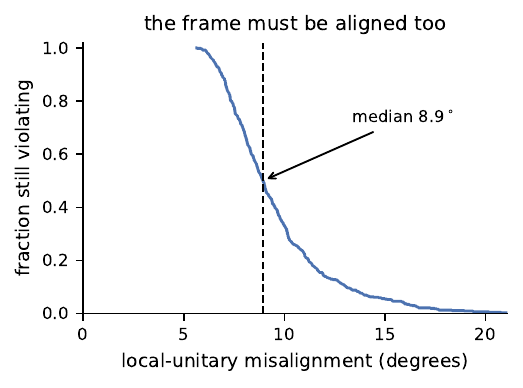}
\caption{The measurement frame must be aligned too. Fraction of \(600\)
random local-unitary rotation directions for which \(\rho_\star\) still
violates CQC, against the rotation angle. Half the directions have lost the
violation by about nine degrees, and essentially none survives past twenty.
The curve is a distribution over directions at each angle, not a bound.}
\label{fig:alignment}
\end{figure}

\subsection{Summary of the diagnosis}

Three independent effects compound.
\begin{enumerate}
\item \textbf{Rank.} The witnesses are rank two or three; standard ensembles
 produce full-rank states with probability one, and the violating region
 hugs the low-rank boundary (\(469/500\) escape directions).
\item \textbf{Alignment.} The endpoint witnesses are classical--classical in a
 product basis, a measure-zero stratum, and the violation tolerates only
 about \(9^\circ\) of local-unitary misalignment
 (\(\sim10^{-21}\) of frames).
\item \textbf{Dimension.} In \(80\) real dimensions \(10^7\) samples provide no
 resolution whatsoever; the observed sample minimum improves by only about
 \(0.022\) nats per decade, so even the most optimistic extrapolation across
 seeds needs \(10^{10}\) samples, and the tail-shape estimate needs
 \(10^{18}\).
\end{enumerate}
A uniform random search was therefore not an effective test of
\eqref{eq:cqc}. The conjecture was not meaningfully supported by that
evidence; it was merely not contradicted by an experiment with negligible
power to contradict it. The indictment is of
\emph{generic-ensemble} search specifically: a search targeted at the
mechanism --- low-rank classical--classical mixtures with aligned product
bases --- finds the witnesses immediately, but nothing in the conjecture's
prior numerical support was of that kind. The constructions of
Sections~\ref{sec:universal}--\ref{sec:classification} are found instead by
reasoning about the mechanism --- perfect readout of a latent binary label in
one basis, with a positive remnant in the complementary basis --- and that
mechanism points directly at the measure-zero stratum that sampling avoids.

\section{Consequences}\label{sec:consequences}

The counterexamples are separable, and the universal family together with the
qutrit witness \eqref{eq:witness} is exactly classical--classical
(Remark~\ref{rem:separable}), so neither entanglement nor --- for those
states --- discord is necessary for CQC failure. The standard relation \eqref{eq:cqc} cannot be used as a universal
lower bound on quantum mutual information, and any operational claim that uses
it as an essential implication needs an independent replacement argument;
Theorem~\ref{thm:corrected} supplies a universal one at the cost of a
state-and-measurement dependent correction.

The residual analysis also isolates what is special about \(d=3,5\) for the
\emph{first} mechanism. Perfect double readout is forbidden at prime
\(d=3,5\) by Theorem~\ref{thm:fullspark}, and Theorem~\ref{thm:variance}
shows that this obstruction is quantitatively a variance bound:
\(\operatorname{Var}(c)=1/4\) exactly characterises perfect double readout,
and the residual dimensions are precisely those in which the variance is
bounded below \(1/4\) by enough to close the two-ray family.

Theorem~\ref{thm:n5cx} shows that this is not the whole story. Removing the
double-readout mechanism does not save the inequality: at \(2\times5\) the
gap already fails to be convex at a product state, where no branch structure
is present at all. What survives of the \(d=3,5\) speciality is therefore
narrower than it looked --- it explains why one family of constructions stops,
not why the inequality should hold --- and the remaining question at
\(2\times3\) is whether the second variation of Section~\ref{sec:anchor}
stays nonnegative there, together with whatever happens away from product
states.

\section{Open problems}

\begin{enumerate}
\item \textbf{\(2\times3\), the last cell.} Decide CQC on
 \(\mathbb C^2\otimes\mathbb C^3\). Two sub-questions are now separable.
 Prove or refute \(\mathrm{ACC}_{2\times3}\)
 (Definition~\ref{def:acc}): this is a finite statement about an explicit
 quadratic form on an eleven-parameter family of product anchors, of which
 Proposition~\ref{prop:aligned} already settles the aligned strata. If it
 holds, product states are local minima of the gap and any counterexample
 must be far from them, which is exactly where the searches of
 Section~\ref{sec:sampling} were strongest.
\item \textbf{Quasi-factorization with a nonlinear reference.} Prove a
 constant-one approximate tensorization of the Kubo--Mori metric for two
 mutually unbiased pinchings at the reference \(\rho_A\otimes\rho_B\), or
 identify the exact obstruction. Theorem~\ref{thm:n5cx} shows the constant is
 not one at \(2\times5\); the question is what replaces it.
\item \textbf{The two-ray family at \(d=5\).} Prove any upper bound
 \(S\le S_\star\) for orthonormal pairs in \(\mathbb C^5\). The candidate
 route is \eqref{eq:refined}; the sharp value is conjecturally
 \((14+2\sqrt5)/5\), whose attainment is Theorem~\ref{thm:n5}. This is now a
 statement about \eqref{eq:family} alone, since \(2\times5\) itself is false.
\item \textbf{The sharp qutrit constant.} Prove \(S\le8/3\) at \(d=3\), which
 by Corollary~\ref{cor:det} is one determinant sign, and which the Frobenius
 relaxation provably cannot reach.
\item \textbf{Minimal dimension.} Is \(3\otimes3\) the smallest pair admitting
 a counterexample? With \(2\times5\) settled, this now requires only
 \(2\times3\).
\end{enumerate}

\section*{Competing interests}

The author declares no competing interests.

\appendix

\section{Proofs of the counterexample theorems}\label{app:proofs}

This appendix proves Theorems~\ref{thm:universal}, \ref{thm:qubittail}
and~\ref{thm:strengthened} and their corollaries. Two steps evaluate an
explicit elementary function of one variable at a single rational point and
use the sign of the result; those evaluations are stated with enough digits
that the reader can repeat them, and in both cases the remaining argument is
analytic.

\subsection{The universal family}

\begin{proof}[Proof of Theorem~\ref{thm:universal}]
Equation~\eqref{eq:universalstate} is a convex combination of two product
projectors, so \(\rho_{mn}\) is a separable density operator. The two product
vectors are orthogonal on each local system. Hence \(\rho_{mn}\),
\((\rho_{mn})_A\) and \((\rho_{mn})_B\) each have nonzero spectrum
\((1/2,1/2)\), and therefore
\begin{equation}
 \qmi_{\rho_{mn}}=\log2. \label{eq:app-qmi}
\end{equation}

Introduce the equiprobable latent label \(L\in\{0,1\}\) for the two terms of
the mixture. Conditional on \(L=0\) each computational outcome is zero;
conditional on \(L=1\) each lies in \(\{1,2\}\). So either computational
outcome determines \(L\), and the two outcomes are conditionally independent
given \(L\). Therefore
\begin{equation}
 \cmiq=H(L)=\log2, \label{eq:app-computational}
\end{equation}
which already saturates the conjectured bound; everything the Fourier
measurement contributes is excess.

It remains to show that the Fourier outcomes are correlated. In local
dimension \(d\) their conditional laws are
\begin{equation}
 r_0^{(d)}(k)=\frac1d,\qquad
 r_1^{(d)}(k)=\frac{1+\cos(2\pi k/d)}d,
 \label{eq:app-fourier-channel}
\end{equation}
the first being the Fourier law of \(|0\rangle\) and the second following by
squaring the Fourier amplitude of \((|1\rangle+|2\rangle)/\sqrt2\). At the
output \(k=0\) the two conditional probabilities are \(1/d\) and \(2/d\), so
the Fourier joint law \(P^R\) satisfies
\begin{align}
 P^R_{00}
 &=\frac12\cdot\frac1m\cdot\frac1n+\frac12\cdot\frac2m\cdot\frac2n
 =\frac5{2mn}, \label{eq:app-jointcell}\\
 P^R_{0\bullet}&=\frac3{2m},\qquad
 P^R_{\bullet0}=\frac3{2n}, \label{eq:app-marginalcell}
\end{align}
and consequently
\begin{equation}
 P^R_{00}-P^R_{0\bullet}P^R_{\bullet0}=\frac1{4mn}>0 .
 \label{eq:app-separation}
\end{equation}
The Fourier joint law is therefore not a product law.

For an explicit margin put \(M=P^R-P^R_A\otimes P^R_B\). Its entries sum to
zero, so its total positive mass equals
\(\TV(P^R,P^R_A\otimes P^R_B)\); its \((0,0)\) entry is positive
by~\eqref{eq:app-separation}, whence
\(\TV(P^R,P^R_A\otimes P^R_B)\ge1/(4mn)\). Pinsker's inequality in natural
logarithms gives
\begin{equation}
 \cmir=D(P^R\Vert P^R_A\otimes P^R_B)
 \ge2\,\TV(P^R,P^R_A\otimes P^R_B)^2\ \ge\ \frac1{8m^2n^2}.
 \label{eq:app-pinsker}
\end{equation}
Combining \eqref{eq:app-qmi}, \eqref{eq:app-computational}
and~\eqref{eq:app-pinsker} gives \eqref{eq:universalexcess}.
\end{proof}

\begin{proof}[Proof of Corollary~\ref{cor:fullrank}]
For \(0<\varepsilon<1\) put
\(\rho_{mn,\varepsilon}=(1-\varepsilon)\rho_{mn}+\varepsilon I_{mn}/(mn)\),
which is full rank and separable. The three mutual informations
in~\eqref{eq:universalexcess} depend continuously on \(\varepsilon\) and the
excess at \(\varepsilon=0\) is strictly positive, so it remains positive for
all sufficiently small \(\varepsilon>0\).
\end{proof}

\subsection{The all-MUB qubit--qudit tail}

\begin{proof}[Proof of Theorem~\ref{thm:qubittail}]
Only the mutual unbiasedness \(|\langle q_j,r_k\rangle|^2=n^{-1}\) is used.
For an equal mixture of two pure rays of absolute overlap \(t\) the two
nonzero eigenvalues are \((1\pm t)/2\), so the spectrum contributes
\(h(t)\). The Alice, Bob and joint overlaps in \eqref{eq:omega} are
\(2^{-1/2}\), \(n^{-1/2}\) and \((2n)^{-1/2}\) respectively, whence
\begin{equation}
 M_n:=\qmi_{\omega_n}
 =h(1/\sqrt2)+h(1/\sqrt n)-h(1/\sqrt{2n}).
 \label{eq:app-tail-qmi}
\end{equation}

The \((Z_A,Q_B)\) and \((X_A,R_B)\) tables are identical up to outcome
labels: in the first, the \(|0,q_0\rangle\) ray is a point mass at
\((0,0)\) while the \(|+,r_0\rangle\) ray is uniform on all \(2n\) cells; in
the second the two roles are exchanged. Linearity of the mixture gives, in
either table,
\begin{equation}
 p_{00}=\frac{2n+1}{4n},\qquad p_{10}=\frac1{4n},\qquad
 p_{aj}=\frac1{4n}\quad(a\in\{0,1\},\ j>0),
 \label{eq:app-tail-table}
\end{equation}
with row marginal \((3/4,1/4)\), column-zero marginal \((n+1)/(2n)\) and all
other column marginals \(1/(2n)\). Each table therefore has mutual
information
\begin{equation}
 T_n=
 \frac{2n+1}{4n}\log\frac{2(2n+1)}{3(n+1)}
 +\frac1{4n}\log\frac2{n+1}
 +\frac{n-1}{4n}\log\frac43 .
 \label{eq:app-tail-table-mi}
\end{equation}
Put \(G_n=M_n-2T_n\); CQC is violated exactly when \(G_n<0\).

Both \eqref{eq:app-tail-qmi} and \eqref{eq:app-tail-table-mi} are explicit
elementary expressions in \(n\). Evaluating them at \(n=164\) gives the
strict inequality
\begin{equation}
 G_{164}<-5.6488839398977894\times10^{-5}<0 .
 \label{eq:app-tail-basepoint}
\end{equation}
The same evaluation at \(n=163\) returns
\(G_{163}>1.6668129404683359\times10^{-5}>0\); this adjacent value is
recorded only to show that the threshold is not loose, and no
minimal-dimension claim is made.

The rest is analytic. Set \(x=1/n\) and let \(M(x)\), \(T(x)\) and
\(G(x)=M(x)-2T(x)\) denote the right-hand sides
of~\eqref{eq:app-tail-qmi} and~\eqref{eq:app-tail-table-mi}. Direct
differentiation gives
\begin{equation}
 T'(x)=\frac14\log\frac{x(2+x)}{(1+x)^2},
\end{equation}
while \(h'(t)=-\operatorname{arctanh}t\) gives
\begin{equation}
 M'(x)=\frac{2^{-1/2}\operatorname{arctanh}\sqrt{x/2}
 -\operatorname{arctanh}\sqrt x}{2\sqrt x}>-\frac1{2(1-x)},
\end{equation}
the inequality following from \(\operatorname{arctanh}t/t<1/(1-t^2)\) for
\(0<t<1\) after discarding the positive first term. For \(0<x\le1/164\) the
right-hand side is at least \(-82/163\), while
\begin{equation}
 \frac{(1+x)^2}{x(2+x)}=\frac1{1-(1+x)^{-2}}\ \ge\ \frac{27225}{329}>4,
\end{equation}
the displayed ratio being decreasing in \(x\), which is what makes the
endpoint the worst case. Consequently
\begin{equation}
 G'(x)>-\frac{82}{163}+\log2>0,
\end{equation}
where \(\log2>7/12>82/163\) follows from the two right-endpoint rectangles
for \(\int_0^1(1+t)^{-1}\,dt\). Thus \(G\) is \(C^1\) and increasing on
\((0,1/164]\), so \(1/n\le1/164\) gives
\(G_n\le G_{164}<0\) for every \(n\ge164\). Exchanging the parties proves
the transposed statement.
\end{proof}

\subsection{The dimension classification}

\begin{proof}[Proof of Theorem~\ref{thm:strengthened}]
We first record the mechanism common to all four constructions. Suppose unit
vectors \(u,v\in\mathbb C^n\) have disjoint supports in \emph{both} of Bob's
measurement bases \(Q_B,R_B\), and put
\begin{equation}
 \rho(u,v)=\tfrac12|0,u\rangle\langle0,u|+\tfrac12|+,v\rangle\langle+,v|,
 \qquad|+\rangle=\frac{|0\rangle+|1\rangle}{\sqrt2}.
 \label{eq:app-mechanism-state}
\end{equation}
Then either Bob measurement identifies the latent branch perfectly. Since
\(\langle u,v\rangle=0\), the two-ray spectra and the two binary channels on
Alice give
\begin{equation}
 \qmi=h(1/\sqrt2),\qquad
 I(Z_A:Q_B)=I(X_A:R_B)=h(1/2)-\tfrac12\log2,
\end{equation}
so that the gap is the dimension-free constant
\begin{equation}
 G_\circ=\frac32\log\frac32
 -\frac1{\sqrt2}\operatorname{arctanh}\frac1{\sqrt2}
 \ <\ -\frac{17}{2688}\ <\ 0 .
 \label{eq:app-mechanism-gap}
\end{equation}
The numerical bound is elementary: the series for
\(\operatorname{arctanh}\) gives
\begin{equation}
 \frac1{\sqrt2}\operatorname{arctanh}\frac1{\sqrt2}
 >\frac12+\frac1{12}+\frac1{40}+\frac1{112}=\frac{1037}{1680},
\end{equation}
and the alternating series for the logarithm gives
\begin{equation}
 \log\frac32<\frac12-\frac18+\frac1{24}-\frac1{64}+\frac1{160}
 =\frac{391}{960}.
\end{equation}
So it suffices to exhibit, in each dimension, one MUB pair and one pair
\(u,v\) with disjoint supports in both bases.

\emph{Composite \(n\).} Write \(n=ab\) with \(a,b\ge2\) and set
\begin{equation}
 u=\frac1{\sqrt a}\sum_{t=0}^{a-1}|bt\rangle,\qquad v=XZu,
\end{equation}
where \(X|j\rangle=|j+1\rangle\) and \(Z|j\rangle=e^{2\pi ij/n}|j\rangle\).
Their computational supports are the disjoint cosets \(b\mathbb Z_n\) and
\(1+b\mathbb Z_n\). In the canonical Fourier basis,
\begin{equation}
 \langle r_k|u\rangle=b^{-1/2}\mathbf 1_{a\mid k},\qquad
 \langle r_k|v\rangle=e^{-2\pi ik/n}b^{-1/2}\mathbf 1_{a\mid(1-k)},
\end{equation}
so their supports there are the disjoint cosets \(a\mathbb Z_n\) and
\(1+a\mathbb Z_n\). Hence \eqref{eq:app-mechanism-gap} applies in every
composite dimension \(n\ge4\).

\emph{The Fourier two-spoke tail.} For arbitrary \(n\) let \(Q_B\) be
computational, \(R_B\) canonical Fourier, \(\theta=\pi/n\), and
\begin{equation}
 u_n=\frac{q_0-e^{i\theta}q_1}{\sqrt2},\qquad
 v_n=\frac{r_0-e^{-i\theta}r_1}{\sqrt2}.
\end{equation}
Here the supports are not disjoint, so \eqref{eq:app-mechanism-gap} does not
apply and the gap must be computed. For \(\rho(u_n,v_n)\) put \(x=1/n\) and
\begin{equation}
 c(x)=2\sqrt x\sin\frac{\pi x}2,\qquad s(x)=x(1-\cos\pi x).
\end{equation}
The Alice, Bob and global ray overlaps give
\(M(x)=\qmi=h(1/\sqrt2)+h(c)-h(c/\sqrt2)\). The two tables agree up to
relabelling, and with
\(A=(1+s)/4\), \(D=s/4\), \(C=(1+2s)/4\) their common mutual information is
\begin{equation}
 T(s)=H(3/4,1/4)+2\left(A\log A+D\log D-C\log C\right)
 -\frac{1-2s}2\log2 .
\end{equation}
Writing \(G(x)=M(x)-2T(s(x))\), evaluation at the single point \(x=1/12\)
gives
\begin{equation}
 G(1/12)<-0.0008829172049654<0 .
 \label{eq:app-spoke-basepoint}
\end{equation}
The endpoint controls the whole tail: differentiation gives
\begin{equation}
 G'(x)=-\left(\operatorname{arctanh}c
 -\frac1{\sqrt2}\operatorname{arctanh}\frac c{\sqrt2}\right)c'
 +s'\log\frac{(1+2s)^2}{4s(1+s)},
\end{equation}
and on \(0<x\le1/12\) elementary Taylor bounds give
\((1296/859)\pi^2x^2\) as an upper bound for the magnitude of the negative
term while the positive term exceeds \((148/25)\pi^2x^2\). Both follow from
the leading behaviour \(c(x)=\pi x^{3/2}+O(x^{7/2})\),
\(s(x)=\tfrac12\pi^2x^3+O(x^5)\) together with
\(\operatorname{arctanh}c-\tfrac1{\sqrt2}\operatorname{arctanh}\tfrac c{\sqrt2}
=\tfrac c2+O(c^3)\), the remainders being controlled on the whole interval by
their values at the endpoint \(x=1/12\). Hence \(G'(x)>0\) there, so
\(G(1/n)\le G(1/12)<0\) for every \(n\ge12\).

\emph{The remaining primes seven and eleven.} At \(n=7\) let
\(\zeta=e^{i\pi/3}\) and let \(P_7\) have entries \(\zeta^{e_{jk}}\) with
exponent rows
\begin{equation}
 0000000,\quad 0145331,\quad 0413531,\quad 0531413,\quad
 0354113,\quad 0331145,\quad 0113354 .
\end{equation}
Reduction modulo \(\zeta^2-\zeta+1\) gives \(P_7P_7^*=7I\), so
\(\overline{P_7}/\sqrt7\) has orthonormal columns forming a MUB partner of
the computational basis~\cite{Petrescu1997}. Set
\begin{equation}
 u=\frac{\zeta q_0-q_1}{\sqrt2},\qquad
 v=\sqrt{\frac27}\left(q_2+\frac{1-2\zeta}2q_3
 +\frac{2\zeta-1}2q_5-q_6\right).
\end{equation}
Their computational supports are \(\{0,1\}\) and \(\{2,3,5,6\}\), and
multiplication by \(P_7\) gives supports \(\{0,2,3,4,5\}\) and \(\{1,6\}\)
in the second basis; both pairs are disjoint.

At \(n=11\) take the symmetric Nicoar\u a matrix \(N_{11A}\) of the
catalogue~\cite{TadejZyczkowski2006} and
\begin{equation}
 a=-\frac34-i\frac{\sqrt7}4,\qquad a^2+\frac32a+1=0 .
\end{equation}
With MUB columns \(\overline{N_{11A}}/\sqrt{11}\), use
\begin{equation}
 u=\frac{q_2-q_3}{\sqrt2},\qquad
 v=\frac{q_0+(-4a-2)q_1+(-4a-3)(q_5+q_6+q_9+q_{10})
 +(1-a)(q_7+q_8)}{\sqrt{44}} .
\end{equation}
Arithmetic in \(\mathbb Q[a]/(a^2+3a/2+1)\) gives
\(N_{11A}N_{11A}^*=11I\), computational supports \(\{2,3\}\) and
\(\{0,1,5,6,7,8,9,10\}\), and second-basis supports \(\{2,3,4,6,7,10\}\) and
\(\{0,1\}\); again both pairs are disjoint, so
\eqref{eq:app-mechanism-gap} applies at \(n=7\) and \(n=11\).

Every integer \(n\ge3\) other than \(3\) and \(5\) is now covered:
composites by the subgroup construction, the primes \(7\) and \(11\) by the
exact Hadamard witnesses, and every prime at least thirteen by the two-spoke
tail. Exchanging the parties proves the transposed statement.
\end{proof}

\begin{proof}[Proof of Corollary~\ref{cor:strengthenedfullrank}]
Mix the corresponding rank-two witness with sufficiently little maximally
mixed noise and use continuity.
\end{proof}

\end{document}